\renewcommand{\epsilon}{\varepsilon}
\newcommand{\E}{\mathbb{E}}
\newcommand{\R}{\mathbb{R}}
\newcommand{\C}{\mathbb{C}}
\newcommand{\N}{\mathbb{N}}
\renewcommand{\P}{\mathbb{P}}
\newcommand{\Un}{\mathds{1}}
\newcommand{\norm}[1]{{\left\|{#1}\right\|}}
\newcommand{\ent}[1]{{\left[{#1}\right]}}
\newcommand{\abs}[1]{{\left|{#1}\right|}}
\newcommand{\scal}[1]{{\left\langle{#1}\right\rangle}}
\newcommand{\plus}{{\pmb{+}}}
\newcommand{\minus}{{\pmb{-}}}
\newcommand{\dd}{\mathcal{D}}
\newcommand{\pp}{\mathcal{P}}
\newcommand{\hh}{\mathcal{H}}
\newcommand{\qq}{\mathcal{Q}}
\newcommand{\call}{\mathcal{L}}
\newcommand{\bb}{\mathcal{B}}
\newcommand{\yy}{\mathcal{Y}}
\newcommand{\eps}{\epsilon}
\newcommand{\gs}{\mathfrak{S}}
\newcommand{\tr}[1]{\text{Tr}\left( {#1} \right) }
\newcommand{\tra}{\text{Tr} }
\DeclareMathOperator{\cov}{Cov}
\DeclareMathOperator{\ran}{ran}
\newtheorem{theo}{Theorem}
\newtheorem{prop}[theo]{Proposition}
\newtheorem{defi}[theo]{Definition}
\newtheorem{lem}[theo]{Lemma}
\theoremstyle{definition}
\newtheorem{rem}{Remark}
\newtheorem{assumption}{Assumption}
\begin{document}
\title{Energy exchange and entropy for quasi-free fermionic semigroups}
\author{Simon Andr\'eys}
\maketitle

\begin{abstract}
We consider a model of quantum dynamical semigroup on a finite dimensional fermionic space, obtained as the continuous-time limit of a repeated interactions model between a system and several thermal baths, with a dynamic driven by quadratic Hamiltonians. We assume that there is a globally conserved observable which can be expressed as a sum of energies on the system and on each baths, and we study the energy fluxes between the baths and the system. First, we consider only the mean energy fluxes, and prove that every thermal machines on quasi-free fermions in trivial, in the sense that it is not possible to extract energy from the coolest bath, even when we dispose of several other baths at different temperatures. Then, we consider an unraveling of the semigroup as a random process, and we study the large deviations of the energy fluxes, following \cite{JPW14}. We reduce the computation of the cumulant generating functional to the resolution of a Riccati equation (which is formally similar to the study of large deviations in classical networks of harmonic oscillators \cite{JPS16}). We apply it to the numerical computation of the rate function for energy exchanges in a fermionic chain, and show that larger fluctuations are observed on a longer chain. 
\end{abstract}

\tableofcontents

\section{Introduction}

At the crossroad of quantum dynamics and thermodynamics, the theory of energy exchanges between quantum systems has attracted a lot of interest. Several articles focused on the fluctuations of energy and entropy fluxes, in equilibrium and non-equilibrium steady states, either in the two-time measurement protocol \cite{Kurchan00},\cite{EHM_nonequilibrium_2009},\cite{BJPPP15},\cite{BPP18},\cite{BPR19} or for continuous-time measurements \cite{DerezinskiRoeckMaes08_fluctuations},\cite{JPW14}. We may also cite the lecture notes \cite{JOPP_entropic_2012}, which contains a detailed treatment of Fermionic systems, and the article \cite{JPPP_energy_2015}, which does not consider fluctuations but energy conservations in the two-time measurement protocol. The definition of the fluctuations in classical systems is well-established and linked with large deviations (note the article \cite{BJP_energy_2017} with a comparison with experimental data), while there are several quantum analogues for the fluctuations of currents, as noted in \cite{DerezinskiRoeckMaes08_fluctuations}; the abstract way is to study a function $e(\alpha)$ linked with some Renyi entropy, as in \cite{JLP_fluctuations_2013}. The most mundane way (and the one of this article) is to study the large deviations of a random variable obtained by measuring some energy observables on the systems. Besides the study of fluctuations, we may also note some works on the Landauer principle (\cite{Landauer_reebWolf},\cite{JP_landauer_2014},\cite{BFJP16} \cite{HJPR18}) or in link with resource theory (\cite{fully_quantum_second_law_CSHO}, \cite{thermal_operations_Mazurek18}).

A convenient approach to this subject is the one of Markovian dynamics: we consider a small system $S$ in contact with an exterior system $B$, and assume that the effect of the exterior systems on $S$ at a time $t$ does not depends on the past interactions between them.
Then, the density matrix on the system $S$ evolves according to a linear equation; the dynamic is described by a so-called quantum Markov semigroup, whose generator is a linear super-operator $\call$ called Lindbladian \cite{Lindblad1976} or GKSL operator.
This type of evolution are used as effective models for open systems, notably in quantum optics \cite{optics_gardiner_zoller_qt_noise}, and they where considered since the beginnings of quantum mechanics (see Landau \cite{landau27} or its English translation in \cite{landau_collected}), though their general theory really started in the '70 (see \cite{briefHistory} for a history).
A problem is to rigorously derive a Markovian evolution from a standard Hamiltonian evolution on complete system $SB$; it may be obtained as some limit for a convenient scaling, which is sometimes called a stochastic limit (\cite{accardi_stochastic_limit}, \cite{accardiLu_first_40_years_GKSL}) and encompasses the low density limit and most importantly the weak coupling limit (\cite{Davies74} \cite{Derezinski2007}).
The fluctuation  of currents in models obtained from the weak coupling limit has notably been studied in \cite{DerezinskiRoeckMaes08_fluctuations}, \cite{RoeckMaes06_fluctuations}.

 In this article, we use another derivation of Markovian dynamics, the continuous-time limit of repeated interactions (\cite{AttalPautrat}, \cite{AttalJoye}, \cite{repeated_interactions_BAM14}). This allows to make a direct relation between discrete-time dynamics and the quantum Markov semigroup. In the repeated interaction framework, the exterior system is divided as the sum of identical subsystems which interact one after the other with the system $S$ during a time $\tau$. As $\tau \rightarrow 0$ and under suitable normalization of the Hamiltonian, the obtained dynamics converges to a continuous-time Markovian semigroup $(\Lambda^t)_{t\in [0, +\infty)}$ . Importantly, it is possible to measure some observable before and after each measurements, making the evolution a random process. The limit evolution is then a stochastic process $(\tilde{\rho_t})_{t\in [0, +\infty)}$ which is linked with the semigroup $(\Lambda^t)_{t\in [0, +\infty)}$ through the relation
 \[
 \E(\tilde{\rho_t})=\Lambda^t(\rho_0)~.
 \] 
 and is called an unraveling of the semigroup(\cite{srinivas_davies_81}\cite{Belavkin_07_eventum}\cite{Carmichael_lecture93}\cite{Breuer2002}). The convergence in distribution of the process has notably been studied by Pellegrini \cite{PellegriniDiffusive08}\cite{Pellegrini_Nechita09}\cite{pellegrini_jumps_10}.
 
Under some assumptions of detailed balance on the Lindbladian and its unraveling, it is possible to interpret the measured observables as energy exchanged between the systems and some subsystems $B_1, \cdots, B_n$ of $B$, and to study their large deviations. This is the subject of the article \cite{JPW14}, which is the main inspiration of this article. In this context, the Lindbladian of the system is expressed as 
\[
\call=i[H_S, \bullet]+\sum_{i=1}^n \call_i
\]
where the $\call_i$ represents the effect of the $i$-th bath, and there is some pseudo-potential energy $K_S$ which commutes with $H_S$ and such that the $\call_i$'s satisfy a detailed balance condition with respect to the Gibbs state 
\[
\sigma_{\beta_i}=\frac{e^{-\beta_i K_S}}{\tr{e^{-\beta_i K_S}}}~.
\]
It is then possible to define the mean energy flux entering the $i$-th bath as 
\[
J_i=-\tr{\call_i(K_S)\rho_\infty}
\]
and some random processes $(N^i_t)_{t\in [0, +\infty)}$ where $N_t^i$ represents the energy increase measured in the $i$-th bath, with the relation
\[
\lim_{t\rightarrow \infty} \frac{1}{t} \E\left(N_t^i\right)=J_i~.
\]

The contributions of the present work are first, to explore the link between the  continuous-time and the discrete-time approaches to the fluctuations of energy fluxes through the limit of repeated interaction; then, to study both the mean energy fluxes $J_i$ and the large deviations of the $N_t^i$ in the particular case of quasi-free fermionic systems.

A free fermionic dynamics describes non-interacting fermions which may jump between the system and the baths, of which quasi-free fermionic semigroups is a generalization. Quasi-free fermionic dynamics on bosonic and fermionic spaces has long been studied (see for example \cite{AlickiLendi} or \cite{fagnolaRebolledo2002}). These models have the advantage to be explicitly solvable in many cases, and still show non-trivial behavior which make them good toy models, for example to test quantum functional inequalities as in \cite{TPK14}. Moreover, they are formally similar to classical networks of harmonic oscillators driven by Langevin noise \cite{MNV03_heat_network}\cite{EckmannZabey04}\cite{JPS16}. Some recent works of Prosen \cite{Prosen2008} \cite{Prosen2010} introduced methods to study the convergence as $t\rightarrow +\infty$ to a unique stationary state, and the author of the present article established a necessary criterion for the convergence and uniqueness \cite{AndreysFermions1}. The repeated interaction model on fermionic spaces was introduced by Platini and Karevski \cite{Platini2008} \cite{KarevskiPlatini2009} in the case of the XY model.
\vspace{0.5cm}

In the second section of this article, we concentrate on the study of the mean energy fluxes $J_i$. We describe a general framework of repeated interaction models with a globally conserved quantity $K$ (which is for example used in \cite{HJPR18}), and show (Proposition \ref{prop:thermal=db}) how the conservation of $K$ corresponds in the continuous-time limit to a property of detailed balance on the Lindbladian $\call$ (see Alicki \cite{alicki_76} or the introduction of \cite{carlen_maas_17}). Under the condition that the baths $B_i$ are described by Gibbs states at temperatures $\beta_i$, we express the first and second principles of thermodynamics in terms of the mean energy fluxes: 
\begin{align*}
\sum_{i=1}^n J_i&=0 & \sum_{i=1}^n \beta_i J_i & \geq 0 ~.
\end{align*}
We show that for any list of fluxes $(J_1, \cdots, J_n)$ satisfying the above conditions (with a strict inequality) there exists a thermal model yielding theses energy fluxes (Proposition \ref{prop:thermal_machines}). The proof makes use of a design of quantum fridge borrowed from \cite{thermal_machine_small_maximal_efficiency_jpa11} and \cite{thermal_machine_small_prl10}. We apply this framework to the case of quasi-free fermionic systems, and prove that they satisfy a stronger inequality (Theorem \ref{theo:no_fridge}): provided $\beta_1\leq \beta_2\leq \cdots \leq \beta_n$ we have
\[
\sum_{i=1}^k J_i \leq 0
\]
for any $k\in \{1,\cdots, n\}$. 
In particular, there cannot be energy entering the bath of highest temperature. This theorem is inspired by the article of Eckmann and Zabey \cite{EckmannZabey04} on energy fluxes in classical harmonic networks. 

In the third section, we describe the large deviations of the random energy fluxes $N^i_t$ for quasi-free fermionic models. Using the results of \cite{JPW14}, we express the cumulant generating functional 
\[
e(\alpha)=\lim_{t\rightarrow +\infty} \frac{1}{t} \log \E\left(e^{\sum_{i=1}^n \alpha_i N_i}\right)
\]
in terms of the largest eigenvalue of a deformed Lindblad operator $\call_{\alpha}$. Note that if $L$ is the dimension of the one-particle space, the fermionic space is of dimension $2^L$, so  $\call_\alpha$ is of size $2^{2L}\times Z^{2L}$. We are able to reduce the computation of $e(\alpha)$ to the computation of the eigenvalues of an operator of dimension $4 L \times 4 L$, through the resolution of a Riccati equation (Theorem \ref{theo:e_alpha_fermionic}). This is formally similar to the study of the large deviations of entropy in classical harmonic networks \cite{JPS16}. We apply this to the numerical computation of the rate functional of large deviations for the fermionic chain, and show that the longest the chain is, the larger the fluctuations are.

\section{Mean energy exchanges for quantum Markovian unravelings }

We will use the following general notations: 

\begin{itemize}
\item Most Hilbert spaces considered will be finite dimensional. We write $\hh_S$ a Hilbert space associated with a system $S$, and $\Un_S$ the identity on this space; the space of operators on $\hh_S$ is written $\bb(\hh_S)$, and the set of states on $\hh_S$ is written $\gs(\hh_S)$. An operator on $\bb(\hh_S)$ is called a super-operator.  
\item For any state $\rho$ we write $S(\rho)=-\tr{\rho\log\rho}$ the Von Neumann entropy and $S(\rho|\sigma)=\tr{\rho(\log \rho-\log \sigma}$ the relative entropy with respect to a state $\sigma$.
\item For any state $\sigma>0$ we write $\Delta_\sigma(A)=\sigma A \sigma^{-1}$ the corresponding modular operator.
\item For two operators $A, B$ we write $[A, B]=AB-BA$ and $\set{A, B}=AB+BA$.
\end{itemize}

\subsection{The repeated interactions model in the continuous-time limit}

We consider a quantum system $S$ represented by a finite-dimensional Hilbert space $\hh_S$, in interaction with a bath $B$. In the Markovian approximation, the evolution on $\hh_S$ can be modeled by a quantum Markov semigroup.

\begin{defi}
A \emph{quantum Markov semigroup} (QMS) on a finite-dimensional Hilbert space $\hh_S$ is a family of linear maps on $\bb(\hh_S)$ which are completely positive and unity-preserving maps $(\Lambda^t)_{t\in I}$ where $I$ is $\N$ or $[0, +\infty)$, satisfying $\Lambda^{s+t}=\Lambda^s \Lambda^t$ for $s, t\in I$ such that $t\rightarrow \Lambda^t$ is continuous if $I=[0, +\infty)$.
\end{defi} 

Note that $\Lambda^t$ describes the evolution in the Heisenberg representation, the evolution in the Schr\"odinger representation being described by $(\Lambda^t)^*$.

In the case of discrete time $I=\N$ we can construct such a model by a repeated interaction process: the  bath is decomposed as a series of identical and independent sub-baths, interacting one after another with the system. The sub-bath model is a finite-dimensional Hilbert space $\hh_B$, and we fix a unitary $U$ on $\hh_S\otimes \hh_B$. Each bath subsystem is in the same state $\rho_B\in \bb(\hh_B)$, thus the evolution on the system is 
\[
(\Lambda^1)^*(\rho_S)=\tra_{\hh_B}\left(U(\rho_S\otimes \rho_B)U^*\right)~.
\]
In this article, we consider a QMS in continuous time $I=[0, T)$ obtained as the limit of discrete-time QMS when a parameter $\tau>0$ goes to zero, under suitable renormalization. We will take advantage of the easy interpretation of the repeated interaction model in discrete time to define quantities such as the entropy production and the energy exchanges between the bath and the system. The continuous-time limit of repeated interactions has been introduced by Attal in \cite{AttalToy}, and developed by Attal and Pautrat \cite{AttalPautrat}.

\begin{prop}[Adaptation of Theorem 22 of \cite{AttalPautrat}]
Fix some self-adjoint operators $H_S\in \bb(\hh_S)$ and $H_{SB}\in \bb(\hh_S\otimes \hh_B)$ and a state $\rho_B \in \gs(\hh_B)$ such that 
\begin{align}\label{eq:conditionV}
\tra_{B}(H_{SB}(\Un_S\otimes \rho_B))=0~.
\end{align}
For any time scale $\tau>0$ write $U_\tau$ the self-adjoint operator
\[
U_\tau=\exp(-i\tau H_S\otimes \Un_B -i\sqrt{\tau} H_{SB})
\]
and define the map $\Lambda_\tau$ by $(\Lambda_\tau)^*(\rho)=\tra_{B}(U_\tau (\rho\otimes \rho_B)U_\tau^*)$. Then for any $t\in [0, +\infty)$ the map $\Lambda^\ent{t/\tau}_\tau$ converges to a map $\Lambda^t$ as $\tau \rightarrow 0$ on the trace-norm topology over trace-class operators, locally uniformly in $t$. The family $(\Lambda^t)_{t\geq 0}$ is a continuous QMS with generator
\begin{align}\label{eq:lindblad}
\call(A)=i[H_S, A]+\Phi(A)-\frac{1}{2}\set{\Phi(\Un_S), A}
\end{align}
where $\Phi$ is the completely positive map defined by $\Phi^*(\rho)=\tra_{B}\left(H_{SB}(\rho\otimes \rho_B) H_{SB}\right)$.

Moreover, any continuous QMS can be obtained that way. We will call the triple $(H_S, H_{SB}, \rho_B)$ a repeated interaction model for the QMS $\Lambda$.
\end{prop}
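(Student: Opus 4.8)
The plan is to compute the one-step generator of the discrete semigroup and then pass to the continuous-time limit by a standard Chernoff-type estimate; in finite dimension both steps are elementary, the only genuinely delicate point being the cancellation of the half-integer order term.

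First I would Taylor-expand $U_\tau=\exp(-i\tau H_S\otimes\Un_B-i\sqrt{\tau}H_{SB})$ in powers of $\sqrt{\tau}$. Since the square of the $-i\sqrt{\tau}H_{SB}$ part already contributes at order $\tau$, only finitely many terms survive modulo $O(\tau^{3/2})$:
\[
U_\tau=\Un-i\sqrt{\tau}\,H_{SB}-i\tau\,H_S\otimes\Un_B-\tfrac{\tau}{2}H_{SB}^2+O(\tau^{3/2}).
\]
Conjugating $\rho\otimes\rho_B$ and grouping by powers of $\sqrt{\tau}$, the coefficient of $\sqrt{\tau}$ is $-i[H_{SB},\rho\otimes\rho_B]$ and the coefficient of $\tau$ is $-i[H_S\otimes\Un_B,\rho\otimes\rho_B]+H_{SB}(\rho\otimes\rho_B)H_{SB}-\tfrac12\{H_{SB}^2,\rho\otimes\rho_B\}$.

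Next I would take the partial trace, and here condition \eqref{eq:conditionV} does the essential work. Writing $H_{SB}=\sum_\alpha S_\alpha\otimes B_\alpha$, condition \eqref{eq:conditionV} reads $\sum_\alpha \tr{B_\alpha\rho_B}\,S_\alpha=0$, so that $\tra_B(H_{SB}(\rho\otimes\rho_B))=\big(\sum_\alpha\tr{B_\alpha\rho_B}S_\alpha\big)\rho=0$ and likewise for the mirror term; hence the $\sqrt{\tau}$ contribution vanishes identically. This cancellation is the crux of the argument — it is precisely what the $\sqrt{\tau}$ scaling together with \eqref{eq:conditionV} are engineered to produce, and without it no finite limit would exist. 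At order $\tau$ the first bracket gives $-i[H_S,\rho]$, the middle term is $\Phi^*(\rho)$ by definition, and the same bookkeeping shows $\tra_B(H_{SB}^2(\rho\otimes\rho_B))=\Phi(\Un_S)\rho$, so that the anticommutator term equals $-\tfrac12\{\Phi(\Un_S),\rho\}$. Thus $(\Lambda_\tau)^*=\mathrm{Id}+\tau\,\call^*+O(\tau^{3/2})$, where $\call^*$ is the predual of the stated $\call$.

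To conclude the convergence I would use that $\Lambda_\tau$, being unital and completely positive, is an operator-norm contraction ($\norm{\Lambda_\tau}=\norm{\Lambda_\tau(\Un_S)}=1$), as is the unital CP semigroup $e^{\tau\call}$. Passing to operator norm on the finite-dimensional space of super-operators, where the precise choice of norm is immaterial, the expansion above dualizes to $\Lambda_\tau=\mathrm{Id}+\tau\call+O(\tau^{3/2})$, so that $\norm{\Lambda_\tau-e^{\tau\call}}=O(\tau^{3/2})$. With $n=\ent{t/\tau}$ the telescoping identity $A^n-B^n=\sum_{k=0}^{n-1}A^{n-1-k}(A-B)B^k$ applied to $A=\Lambda_\tau$, $B=e^{\tau\call}$ gives $\norm{\Lambda_\tau^n-e^{n\tau\call}}\le n\,\norm{\Lambda_\tau-e^{\tau\call}}=O(t\sqrt{\tau})$, which tends to $0$ locally uniformly in $t$; since $n\tau\to t$ and $s\mapsto e^{s\call}$ is norm-continuous, $e^{n\tau\call}\to e^{t\call}$, whence $\Lambda_\tau^{\ent{t/\tau}}\to\Lambda^t:=e^{t\call}$. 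The limit is a norm-continuous semigroup of unital CP maps, i.e.\ a QMS with generator $\call$, the semigroup property surviving because $\ent{s/\tau}+\ent{t/\tau}$ differs from $\ent{(s+t)/\tau}$ by at most one step. For the converse I would dilate a given GKSL generator $\call(A)=i[H,A]+\sum_{j=1}^m\big(L_j^*AL_j-\tfrac12\{L_j^*L_j,A\}\big)$: set $H_S=H$, take $\hh_B=\C^{m+1}$ with orthonormal basis $\ket{0},\dots,\ket{m}$, put $\rho_B=\ket{0}\bra{0}$ and $H_{SB}=\sum_{j=1}^m(L_j\otimes\ket{j}\bra{0}+L_j^*\otimes\ket{0}\bra{j})$. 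This $H_{SB}$ is self-adjoint and satisfies \eqref{eq:conditionV} because $\bra{0}H_{SB}\ket{0}=0$, and a one-line computation gives $\Phi^*(\rho)=\sum_j L_j\rho L_j^*$, so the associated Lindbladian is exactly $\call$. The main obstacle in the whole proof is the order-$\sqrt{\tau}$ cancellation; everything else is routine finite-dimensional estimation.
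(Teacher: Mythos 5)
Your proof is correct and follows essentially the same route as the paper: expand $U_\tau$ to order $\tau$, use condition \eqref{eq:conditionV} to cancel the $\sqrt{\tau}$ term, identify the generator $\call$, pass to the limit of powers, and realize an arbitrary GKSL generator via the dilation $H_{SB}=\sum_j L_j\otimes\ket{j}\bra{0}+L_j^*\otimes\ket{0}\bra{j}$ with $\rho_B=\ket{0}\bra{0}$. Your telescoping/contraction estimate simply makes explicit the paper's appeal to $(\Un+\tau\call+o(\tau))^{\ent{t/\tau}}\to e^{t\call}$, and using a finite-dimensional bath $\C^{m+1}$ in place of the paper's $l^2(\N)$ is an inessential (and in finite dimension, cleaner) variant.
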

This theorem was proved by Attal and Pautrat in \cite{AttalPautrat} in the case where $\rho_B$ is a pure state. We introduced this generalization with the condition of Equation \eqref{eq:conditionV} in \cite{AndreysFermions1}.

\begin{proof}
The idea to derive the continuous-time limit is to use the formula 
\[
\lim_{n\rightarrow +\infty} \left(I+\frac{A}{n}+o\left(\frac{1}{n}\right)\right)^n=e^A
\]
applied to $n=\ent{t/\tau}$, and $A=t\call$. 

Developing $U_\tau$ we obtain
\[
U_\tau=\Un_{SB}-i\sqrt{\tau} H_{SB}-i\tau H_S\otimes \Un_B-\frac{\tau}{2}  H_{SB}^2+O(\tau^{3/2})
\]
where $O(\tau^{3/2})=\tau^{3/2} R(\tau)$ for an operator $R(\tau)$ which is uniformly bounded as $\tau\rightarrow 0$. Thus, we have
\[
(\Lambda_\tau)^*(\rho)=\rho-i\sqrt{\tau} \tra_B\left(H_{SB}(\rho\otimes \rho_B)-(\rho\otimes \rho_B)H_{SB} \right)+\tau \call(\rho)+O(\tau^{3/2})
\]
where $\call$ is defined by \ref{eq:lindblad}. The term in $\sqrt{\tau}$ is zero because of Equation \eqref{eq:conditionV}. Thus, 
\[
(\Lambda_\tau)^{\ent{t/\tau}}=(\Un_{\bb(\hh_S)} +\tau \call+o(\tau) )^{\ent{t/\tau}}
\]
and $\tau=t/\ent{t/\tau}+o(\tau)$, so we obtain the convergence part of the Theorem.

Let us prove that any norm-continuous QMS $(\Lambda^t)_{t\geq 0}$ can be obtained by a repeated interaction model. By a theorem of Lindblad \cite{Lindblad1976} $(\Lambda^t)_{t\geq 0}$ admits a generator $\call$ which is of the form given by Equation \eqref{eq:lindblad} for some completely positive map $\Phi$. Since $\Phi$ is completely positive and norm-continuous we can write $\Phi(\rho)=\sum_{i=1}^{+\infty} L_i^* \rho L_i$ where the $L_i$ are bounded operators and $\sum_{i=1}^{+\infty} L_i^* L_i$ is bounded. To model $\Lambda^t$ as the limit of a repeated interaction model, we choose $\hh_B=l^2(\N)$ and $\rho=\ket{0}\bra{0}$ and 
\[
H_{SB}=\sum_{i=1}^{+\infty} L_i \otimes \ket{i}\bra{0}+L_i^*\otimes \ket{0}\bra{i}~.
\]
\end{proof}
\begin{rem}
Note that the operator $H_{SB}$ constructed in the last part of the proof satisfies the stronger condition 
\begin{align}\label{eq:strongConditionV}
\tr{H_{SB} (\Un_S\otimes f(\rho_B)}=0
\end{align}
for any function $f: \R\rightarrow \R$ (or equivalently, $\bra{i}H_{SB}\ket{i}=0$ for a Hilbert basis $\ket{i}$ in which $\rho_B$ is diagonal). This condition will be useful later.
\end{rem}

\subsection{Detailed balance} \label{subseq:energy_conservation}

We will consider only a special case of QMS, arising from a bath which is composed of several parts, each of them at thermal equilibrium with respect to a globally conserved pseudo-energy.

\begin{defi}
We call a \emph{thermal repeated interaction model} a repeated interaction model $(H_S,H_{SB},\rho_B)$ with a bath decomposed as $\hh_B=\bigotimes_{i=1}^n \hh_{B_i}$ in the state $\rho_B=\bigotimes_{i=1}^n \rho_{B_i}$, where the $\rho_{B_i}$ are Gibbs states:
\[
\rho_{B_i}=\frac{e^{-\beta_i K_{B_i}}}{\tr{e^{-\beta_i K_{B_i}}}}
\]
for some inverse temperatures $\beta_i \in \R$ and some self-adjoint operators $K_{B_i}\in \bb_{sa}(\hh_{B_i})$, with the following assumptions: $H_{SB}$ can be decomposed as $\sum_{i=1}^n H_{S B_i}$ with $H_{SB_i}$ acting only on $\hh_S\otimes \hh_{B_i}$, and there exists a self adjoint operator $K_S$ on $\hh_S$ with
\begin{align}\label{eq:invariantN}
[H_S, K_S]&=0 & \left[H_{SB_i},~ K_S\otimes \Un_B + \Un_S\otimes K_{B_i}\right]&=0~.
\end{align}
We call the operator $K_S$ the \emph{pseudo-energy} of the model.
\end{defi}

We choose the name pseudo-energy for $K_S$ because it is invariant and has the same dimension as the energy $K_{B_i}$, but it does not generate the dynamic.

The QMS arising from a thermal repeated interaction model are characterized by a detailed balance condition, defined as follow: 

\begin{defi}
A continuous-time QMS is said to satisfy the detailed balance condition with respect to a state $\sigma$ if its generator $\call$ can be written 
\[
\call(A)=i[H_S, A]+\Phi(A)-\frac{1}{2}\set{\Phi(\Un), A}
\]
where $H_S$ is a self-adjoint operator commuting with $\sigma$ and $\Phi$ is a completely positive map satisfying
\begin{align}\label{eq:db}
\Phi^*( A\sigma)\sigma^{-1}=\Phi(A)
\end{align}
for any operator $A$. 
\end{defi}
This condition is related to weaker conditions such as the time-reversal invariance (see \cite{JPW14}). Note that Equation \eqref{eq:db} means that $\Phi$ is self-adjoint with respect to the scalar product $\scal{A, B}_{\sigma, 0}=\tr{\sigma A^* B}$. It implies that $\Phi$ commutes with the modular operator $\Delta_\sigma$, and that for any $s$ it is self-adjoint with the scalar product $\scal{A, B}_{\sigma, s}=\tr{\sigma^{1-s}A^*\sigma^s B}$. See for example Carlen and Maas \cite{carlen_maas_17} or the original article of Alicki \cite{alicki_76}.
\newline

Alicki proved the following characterization of the strong detailed balance:

\begin{theo}\label{theo:alickidb}
Let $\beta\in \R$ and consider the Gibbs state $\sigma=\exp{-\beta K_S}/Z$. Let $\Lambda$ be a continuous-time QMS. Then $\Lambda^t$ satisfies the detailed balance for all $t$ if and only if it can be written in the Lindblad form \ref{eq:lindblad} with operator $H_S$ commuting with $K_S$, and completely positive map $\Phi$ of the form

\begin{align}
\Phi(A)&=\sum_{\delta\in sp([K_S, \bullet]), \delta \geq 0}~\sum_{i=1}^{n_{\delta}} e^{-\frac{\beta}{2}\delta~}L_{(\delta, i)}^* A L_{(\delta, i)}+e^{\frac{\beta}{2}\delta~}L_{(\delta, i)} A L_{(\delta,i)}^*
\end{align}
for some integers $n_\delta$, where the $L_{(\delta,i)}$ are operators satisfying $[K_S, L_{(\delta,i)}]=\delta L_{(\delta, i)}$ and $\tr{L_{(\delta,i)}}=0$.
\end{theo}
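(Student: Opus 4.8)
The plan is to prove the two implications separately, the forward direction being a direct computation and the converse resting on a structural analysis of the completely positive part $\Phi$. Throughout I would exploit that $[K_S,\bullet]$ is self-adjoint for the Hilbert--Schmidt product $\scal{A,B}_{HS}=\tr{A^*B}$, so that $\bb(\hh_S)$ decomposes as an orthogonal sum of eigenspaces $\bb_\delta=\set{A:[K_S,A]=\delta A}$ with $\bb_\delta^*=\bb_{-\delta}$, and that on $\bb_\delta$ the modular operator acts as the scalar $e^{-\beta\delta}$ (from $\sigma=e^{-\beta K_S}/Z$ and the relation $f(K_S)A=Af(K_S+\delta)$ for $A\in\bb_\delta$). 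For the ``if'' direction I would simply insert the proposed expression into the detailed balance relation \eqref{eq:db}: the Hilbert--Schmidt adjoint of $A\mapsto L^*AL$ is $B\mapsto LBL^*$, and $\sigma L\sigma^{-1}=e^{-\beta\delta}L$, $\sigma L^*\sigma^{-1}=e^{\beta\delta}L^*$ when $[K_S,L]=\delta L$, so each pair $e^{-\beta\delta/2}L^*AL+e^{\beta\delta/2}LAL^*$ is self-adjoint for $\scal{\cdot,\cdot}_{\sigma,0}$, which is exactly \eqref{eq:db}; the requirement $[H_S,K_S]=0$ is already built into the Definition.

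For the converse, the Definition of detailed balance already furnishes a decomposition in the form \eqref{eq:lindblad} with $H_S$ commuting with $K_S$ and a completely positive $\Phi$ obeying \eqref{eq:db}, so the entire task is to rewrite this given $\Phi$ in the announced form. Since \eqref{eq:db} makes $\Phi$ self-adjoint for $\scal{A,B}_{\sigma,0}=\tr{\sigma A^*B}$ and forces it to commute with $\Delta_\sigma$, hence to preserve every $\bb_\delta$, I would first symmetrize by setting $\Psi(X)=\sigma^{1/4}\Phi(\sigma^{-1/4}X\sigma^{-1/4})\sigma^{1/4}$. A short check shows $\Psi$ is again completely positive (with Kraus operators $\sigma^{-1/4}V_j\sigma^{1/4}$), that it still preserves each $\bb_\delta$ (because $\sigma^{\pm1/4}$ are functions of $K_S$), and that, using self-adjointness of $\Phi$ for the symmetric product $\scal{A,B}_{\sigma,1/2}=\tr{\sigma^{1/2}A^*\sigma^{1/2}B}$, the conjugation by $\sigma^{\pm1/4}$ turns this into plain Hilbert--Schmidt self-adjointness of $\Psi$.

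The heart of the argument, and the step I expect to be the main obstacle, is a finite-dimensional lemma on such a map $\Psi$. Fixing a Hilbert--Schmidt orthonormal eigenbasis $\set{F_a}$ of $[K_S,\bullet]$ stable under the adjoint (so $F_a^*=F_{a^*}$ for an involution $a\mapsto a^*$ swapping $\bb_\delta$ and $\bb_{-\delta}$), I would write $\Psi(X)=\sum_{a,b}m_{ab}F_a^*XF_b$, where the coefficients $m_{ab}$ are uniquely determined since the super-operators $X\mapsto F_a^*XF_b$ are linearly independent. Complete positivity says $(m_{ab})$ is positive semidefinite; grading preservation forces $m_{ab}=0$ unless $\delta_a=\delta_b$ (each $X\mapsto F_a^*XF_b$ is an eigenvector of conjugation by $e^{itK_S}$ with eigenvalue $e^{it(\delta_b-\delta_a)}$, so commuting with this flow for all $t$ kills the off-diagonal coefficients); and Hilbert--Schmidt self-adjointness becomes, after reindexing through the involution, $m_{ab}=\overline{m_{a^*b^*}}$. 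Diagonalizing $(m_{ab})$ compatibly with this last symmetry pairs every eigenvector supported on $\bb_\delta$ with eigenvalue $\lambda\geq0$ to its conjugate on $\bb_{-\delta}$ with the same eigenvalue; concretely one obtains a representation $\Psi(X)=\sum_k G_k^*XG_k$ in which the $G_k$ are eigenvectors of $[K_S,\bullet]$ occurring in adjoint pairs (with self-adjoint ones and conjugate pairs in $\bb_0$).

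Finally I would undo the symmetrization: $\Phi(A)=\sigma^{-1/4}\Psi(\sigma^{1/4}A\sigma^{1/4})\sigma^{-1/4}=\sum_k\hat G_k^*A\hat G_k$ with $\hat G_k=\sigma^{1/4}G_k\sigma^{-1/4}=\Delta_\sigma^{1/4}(G_k)$. An eigenvector $G_k\in\bb_\delta$ gives $\hat G_k=e^{-\beta\delta/4}G_k$, whence the term $e^{-\beta\delta/2}G_k^*AG_k$, while its partner $G_k^*\in\bb_{-\delta}$ gives $e^{\beta\delta/2}G_kAG_k^*$; relabelling $L=G_k$ for $\delta\geq0$, each pair reproduces exactly $e^{-\beta\delta/2}L^*AL+e^{\beta\delta/2}LAL^*$ with $[K_S,L]=\delta L$, which is the claimed form. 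To secure $\tr{L_{(\delta,i)}}=0$, I would note that eigenvectors with $\delta\neq0$ are automatically traceless (as $\delta\,\tr{L}=\tr{[K_S,L]}=0$), so only Kraus operators in $\bb_0$ can carry a trace; removing these identity components by the standard traceless normalization of GKSL generators alters $H_S$ only by a self-adjoint element of $\bb_0$, which still commutes with $K_S$ and leaves $\call$ unchanged, since the identity component of $\Phi$ contributes $A-\tfrac12\set{\Un,A}=0$ to the generator.
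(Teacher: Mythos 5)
The paper does not actually prove this statement: Theorem \ref{theo:alickidb} is quoted from Alicki's 1976 article and used as a black box, so there is no internal proof to compare against. Judged on its own, your reconstruction is correct and follows what is essentially the standard route. The ``if'' direction is the computation you describe. For the converse, the chain (GNS self-adjointness) $\Rightarrow$ (commutation with $\Delta_\sigma$) $\Rightarrow$ (preservation of each eigenspace $\bb_\delta$ of $[K_S,\bullet]$), followed by the coefficient-matrix representation $\Psi(X)=\sum_{a,b}m_{ab}F_a^*XF_b$, the vanishing of the coefficients coupling different $\delta$-blocks, the symmetry $m_{ab}=\overline{m_{a^*b^*}}$, and a diagonalization compatible with the adjoint involution, is exactly the right skeleton; undoing the $\sigma^{1/4}$ conjugation then produces the weights $e^{\mp\beta\delta/2}$ as you claim, and the reabsorption of the identity components of the $\bb_0$ Kraus operators into $H_S$ correctly secures tracelessness without changing $\call$ or breaking $[H_S,K_S]=0$.

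Two caveats. First, the step ``self-adjointness for $\scal{\cdot,\cdot}_{\sigma,0}$ forces commutation with $\Delta_\sigma$'' is the linchpin and you take it for granted (as does the paper in its remark after the definition); it does hold, but only because $\Phi$, being positive, preserves adjoints, which makes it self-adjoint for $\tr{A^*\sigma B}$ as well, and the two inner products are intertwined by $\Delta_\sigma$ --- this half-line should appear in a complete write-up. Second, the whole converse silently assumes $\beta\neq0$: commutation with $\Delta_\sigma$ pins down the individual spaces $\bb_\delta$ only because $x\mapsto e^{-\beta x}$ is injective, and for $\beta=0$ the conclusion genuinely fails (a completely positive, Hilbert--Schmidt self-adjoint map need not admit Kraus operators that are eigenvectors of $[K_S,\bullet]$). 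This is a defect of the statement as quoted rather than of your argument, but the hypothesis should be flagged. The minor bookkeeping for self-adjoint Kraus operators in $\bb_0$ (a factor $1/\sqrt2$ to match the form $L^*AL+LAL^*$) is glossed over but harmless.
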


This theorem allows to make the link between the existence of a thermal model and the strong detailed balance condition, as follows.

\begin{prop}\label{prop:thermal=db}
A continuous-time QMS $\Lambda$ satisfies the strong detailed balance condition with respect to $\sigma=e^{-\beta K_S}/Z$ if and only if it admits a thermal repeated interaction model $(H_S,H_{SB}, \rho_B)$ with only one bath at inverse temperature $\beta$ and energy operator $K_S$. The model may be assumed to satisfy Condition \ref{eq:conditionV}.
\end{prop}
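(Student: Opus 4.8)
The proof splits into the two implications of the equivalence; both rely on Alicki's characterization (Theorem \ref{theo:alickidb}) and on the explicit form of the completely positive map $\Phi$ attached to a repeated interaction model.

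\textbf{From a thermal model to detailed balance.} Assume $(H_S,H_{SB},\rho_B)$ is a thermal model with a single bath at inverse temperature $\beta$ and pseudo-energy $K_S$. I would first diagonalize the bath energy, writing $K_B=\sum_k \epsilon_k\proj{k}$ and $\rho_B=\sum_k p_k\proj{k}$ with $p_k=e^{-\beta\epsilon_k}/Z_B$, and expand the interaction as $H_{SB}=\sum_{k,l} V_{kl}\otimes\ket{k}\bra{l}$, where self-adjointness gives $V_{lk}=V_{kl}^*$. The conservation law in \eqref{eq:invariantN} then reads, term by term, $[K_S,V_{kl}]=(\epsilon_l-\epsilon_k)V_{kl}$, so each $V_{kl}$ is an eigenoperator of $[K_S,\bullet]$. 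A direct computation of the partial trace in $\Phi^*(\rho)=\tra_B(H_{SB}(\rho\otimes\rho_B)H_{SB})$ yields $\Phi^*(\rho)=\sum_{k,l}p_l\,V_{kl}\rho V_{kl}^*$, hence $\Phi(A)=\sum_{k,l}p_l\,V_{kl}^*AV_{kl}$. To check \eqref{eq:db} I would use that $\sigma V_{kl}^*\sigma^{-1}=e^{\beta(\epsilon_l-\epsilon_k)}V_{kl}^*$ (since $V_{kl}^*$ carries $[K_S,\bullet]$-weight $\epsilon_k-\epsilon_l$) together with the identity $p_l\,e^{\beta(\epsilon_l-\epsilon_k)}=p_k$; substituting and relabelling $k\leftrightarrow l$ turns $\Phi^*(A\sigma)\sigma^{-1}$ into $\Phi(A)$. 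As $[H_S,K_S]=0$ forces $H_S$ to commute with $\sigma=e^{-\beta K_S}/Z$, the detailed balance condition holds.

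\textbf{From detailed balance to a thermal model.} Conversely, I would apply Theorem \ref{theo:alickidb} to write $\call$ in Lindblad form with $H_S$ commuting with $K_S$ and $\Phi(A)=\sum_{\delta\geq 0}\sum_i e^{-\beta\delta/2}L_{(\delta,i)}^*AL_{(\delta,i)}+e^{\beta\delta/2}L_{(\delta,i)}AL_{(\delta,i)}^*$, the $L_{(\delta,i)}$ being eigenoperators of $[K_S,\bullet]$ for the eigenvalue $\delta$. The plan is to realize each elementary channel by a two-level bath: for each pair $(\delta,i)$ introduce a qubit $\hh_B^{(\delta,i)}=\C^2$ with energy $K_B^{(\delta,i)}=\delta\proj{1}$ and interaction $H_{SB}^{(\delta,i)}=c_\delta\big(L_{(\delta,i)}\otimes\ket{0}\bra{1}+L_{(\delta,i)}^*\otimes\ket{1}\bra{0}\big)$, pairing the energy-lowering transition $\ket{1}\mapsto\ket{0}$ of the bath with the energy-raising operator $L_{(\delta,i)}$ so that $[H_{SB}^{(\delta,i)},K_S\otimes\Un+\Un\otimes K_B^{(\delta,i)}]=0$ summand by summand. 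The global bath is the tensor product of these qubits, with $K_B=\sum K_B^{(\delta,i)}$, $H_{SB}=\sum H_{SB}^{(\delta,i)}$, and $\rho_B=\bigotimes$ the qubit Gibbs states at inverse temperature $\beta$.

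It then remains to match the maps and verify the side conditions. Reusing the single-bath formula of the first part on one qubit (where $p_1/p_0=e^{-\beta\delta}$) gives $\Phi^{(\delta,i)}(A)=|c_\delta|^2\big(p_1L^*AL+p_0LAL^*\big)$; choosing $|c_\delta|^2=2\cosh(\beta\delta/2)$ makes $p_1|c_\delta|^2=e^{-\beta\delta/2}$ and $p_0|c_\delta|^2=e^{\beta\delta/2}$, reproducing exactly Alicki's coefficients. The step I expect to be the main obstacle is the additivity of the channel over the qubits, namely $\Phi=\sum_{(\delta,i)}\Phi^{(\delta,i)}$: the cross terms $\tra_B\big(H_{SB}^{(\delta,i)}(\rho\otimes\rho_B)H_{SB}^{(\delta',i')}\big)$ for $(\delta,i)\neq(\delta',i')$ must vanish, which holds because each carries an off-diagonal operator $\ket{0}\bra{1}$ or $\ket{1}\bra{0}$ on a bath factor against which $\rho_B$ is diagonal, so the corresponding partial trace is zero. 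The same off-diagonality yields $\tra_B(H_{SB}(\Un_S\otimes\rho_B))=0$, which is precisely Condition \eqref{eq:conditionV}, completing the construction of the desired thermal repeated interaction model.
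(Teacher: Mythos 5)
Your proof is correct, and while it follows the same skeleton as the paper's (conservation law $\Rightarrow$ detailed balance in one direction, Alicki's Theorem \ref{theo:alickidb} plus an explicit bath construction in the other), both halves are executed by genuinely different means. For the forward implication, the paper never expands $H_{SB}$ in a bath eigenbasis: it observes that \eqref{eq:invariantN} gives $[H_{SB},\sigma\otimes\rho_B]=0$ and then obtains self-adjointness of $\Phi$ for $\scal{A,B}_{\sigma,0}$ in two lines by cyclicity of the trace. Your component-wise computation with $H_{SB}=\sum_{k,l}V_{kl}\otimes\ket{k}\bra{l}$, the weight relation $[K_S,V_{kl}]=(\epsilon_l-\epsilon_k)V_{kl}$ and the identity $p_l e^{\beta(\epsilon_l-\epsilon_k)}=p_k$ reaches the same conclusion; it is longer but has the advantage of producing the explicit Kraus form $\Phi(A)=\sum_{k,l}p_lV_{kl}^*AV_{kl}$, which in fact re-proves the "easy" half of Alicki's theorem along the way. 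For the converse, the paper assembles a single bath $\C^{\dd_+(K_S)}\otimes\C^2$ with the symmetric energy $\pm\delta/2$ on the two levels, so that orthogonality of the labels $\ket{\delta,k}$ kills all cross terms automatically; you instead tensor together one qubit per Lindblad operator with energy levels $0$ and $\delta$, which forces you to check separately that the cross terms $\tra_B\bigl(H_{SB}^{(\delta,i)}(\rho\otimes\rho_B)H_{SB}^{(\delta',i')}\bigr)$ vanish — you do this correctly via the off-diagonality of each interaction against the diagonal $\rho_B$, and the same observation gives Condition \eqref{eq:conditionV} (indeed the stronger Assumption \ref{ass:strongConditionV}). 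Your normalization $|c_\delta|^2=2\cosh(\beta\delta/2)$ is the correct one to reproduce Alicki's coefficients exactly (the paper's prefactor $Z$ appears to be a typo on this point). The two constructions are physically equivalent dilations; yours is closer to the textbook one-qubit-per-jump-operator picture, the paper's is more compact notationally.
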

 
 \begin{proof}
To prove the sufficiency of our condition, assume that $\Lambda$ admits a thermal repeated interaction model with one bath $\hh_B$ at inverse temperature $\beta$, with energy operator $K_B$. Then $H_S$ commutes with $K_S$ so it commutes with $\sigma$, and using the formula 
\[
\Phi(A)=\tra_B\left(H_{SB} (A\otimes \Un_B) H_{SB} (\Un_S\otimes \rho_B)\right)~
\]
we obtain that for any operators $A, B\in \bb(\hh_S)$ we have
\begin{align*}
\scal{A, \Phi(B)}_{\sigma, 0}&=\tr{(\sigma\otimes \Un_B) (A^*\otimes \Un_B) H_{SB} (B\otimes \Un_B) H_{SB} (\Un_S\otimes \rho_B)} \\
\end{align*}
and since $[H_{SB}, \sigma\otimes \rho_B]=\left[H_{SB}, \frac{e^{-\beta(K_S+K_B)}}{Z_B}\right]=0$ this gives
\begin{align*}
\scal{A, \Phi(B)}_{\sigma, 0}&=\tr{(\sigma\otimes \rho_B)H_{SB}(A^*\otimes \Un_B)H_{SB} (B\otimes \Un_B)}\\
&=\scal{\Phi(A), B}_{\sigma, 0}~.
\end{align*}
Thus $\Phi$ is self-adjoint for the scalar product $\scal{A, B}_{\sigma, 0}$ so the detailed balance condition is satisfied.
\newline

To prove the necessary condition, we apply Theorem \ref{theo:alickidb}, obtaining an operator $H_S$ and operators $L_{(\delta,i)}$~. To construct $\hh_B$, we consider the set 
\[
\dd_+(K_S)=\bigcup_{\delta\in sp([K_S, \bullet])} \bigcup_{k=1}^{n\delta} \set{(\delta, k)}
\]
 and we take $\hh_B=\C^{\dd_+(K_S)}\otimes \C^2$, with Hilbert basis $\ket{\delta,k}\otimes\ket{i}$ for $(\delta, k)\in \dd_+(K_S)$ and $i\in \{\minus,\plus\}$. We consider the pseudo-energy operator

\[
K_B=\frac{1}{2} \sum_{(\delta, k)\in \dd_+(K_S)} \delta \ket{\delta, k}\otimes \Big(\ket{\plus}\bra{\plus}-\ket{\minus}\bra{\minus}\Big)~
\]
and take $\rho_B=e^{-\beta K_B}/Z$ with $Z=\tr{e^{-\beta K_B}}$. We define the interaction operator $H_{SB}$ by
\[
H_{SB}=Z\sum_{(\delta, k)\in \dd_+(K_S)}L_{(\delta,k)}\otimes \ket{\delta, k}\bra{\delta, k}\otimes \ket{\minus}\bra{\plus}+L_{(\delta,k)}^*\otimes \ket{\delta, k}\bra{\delta, k}\otimes \ket{\plus}\bra{\minus}
\]
The relation $[K_S, L_{(\delta, k)}]=\delta L_{(\delta, k)}$ implies the conservation of pseudo-energy
\[
\left[H_{SB}, K_S\otimes \Un_B+\Un_S\otimes K_B\right]=0~.
\]
It is easily checked that
\[
\tra_B\left(H_{SB}(A\otimes \Un_B)H_{SB} (\Un_S\otimes \rho_B)\right)=\sum_{(\delta,k)\in \dd_+(K_S)}  e^{-\frac{\beta}{2}\delta}L_{(\delta, k)}^*\rho L_{(\delta, k)}+e^{\frac{\beta}{2}\delta}L_{(\delta, k)} \rho L_{(\delta, k)}^*~.
\]
 \end{proof}

\begin{rem}
The detailed balance condition implies the time-reversal invariance: there exists an anti-linear involution of algebras $\Theta$ on $\bb(\hh_S)$ (called a time-reversal) with $\Theta(\sigma)=\sigma$ such that $\Theta(\call^*(\Theta(A)\sigma)\sigma^{-1})=\call$. This condition is strictly weaker than the detailed balance, indeed it is satisfied if and only if $\Phi$ can be written as
\[
\sum_{\delta\in sp([K_S, \bullet]), \delta\geq 0} \sum_{i=1}^{n_\delta} e^{-\frac{\beta}{2}\delta~}L_{(\delta, i)}^* A L_{(\delta, i)}+e^{\frac{\beta}{2}\delta~}M_{(\delta, i)} A M_{(\delta,i)}^*
\] 
where $L_{(\delta, i)}$ and $M_{(\delta, i)}$ both satisfy the same conditions as the $L_{(\delta,i)}$'s of Theorem \ref{theo:alickidb} and are related by $M_{(\delta, i)}=\Theta\left(L_{(\delta,i)}\right)$, which includes cases where $L_{(\delta,i)}\neq M_{(\delta,i)}$ and the detailed balance condition is not satisfied. 
\end{rem}

\begin{rem}
It is not clear how to define the detailed balance for a discrete-time Quantum Markov Semigroup $\Lambda^n$, since it is not possible to separate the generator in a unitary and a dissipative part. The best way to define it is probably the existence of a thermal repeated interaction model (with only one bath).
\end{rem}



\subsection{Energy and entropy fluxes}\label{subseq:energy_fluxes}

We now consider a thermal repeated interaction model $(H_S, H_{SB})$ with pseudo-energy operator $K_S$ and $n$ baths with inverse temperatures $\beta_1\leq \beta_2 \leq \cdots \leq \beta_n$ and energy operators $K_{B_i}$. We write $K_{tot}=K_S\otimes \Un_B+ \sum_{i=1}^n \Un_S\otimes K_{B_i}$ the global preserved pseudo-energy operator, define 
\[
\Phi_i(A)=\tr{H_{SB_i} (A\otimes \Un_B) H_{SB_i} \Un_S\otimes \rho_{B_i}}
\]
and write
\[
\call_i(A)=\Phi_i(A)-\frac{1}{2}\set{\Phi_i(\Un_S), A}
\]
so that each $\call_i$ generates a QMS with the strong detailed balance with respect to $\sigma_i=e^{-\beta_i K_S}/Z_i$ and $\call=-i[H_S, \bullet]+\sum_i \call_i$.
\newline

 In discrete times, we can define the energy increase of the $i$-th bath during one interaction as
\[
D_{\tau, i}(\rho)=\tr{(\Un_S\otimes K_{B_i}) \left(U_\tau \rho\otimes \rho_B U_\tau^*-\rho\otimes \rho_B\right)}
\]
and the total energy increase at time $\tau k$ as 
\[
D_{\tau, i, 0\rightarrow k}(\rho)=\sum_{l=0}^{k-1} D_{\tau, i}(\rho(l))
\]
where $\rho(l)=(\Lambda_\tau^*)^l(\rho)$. Let us study the limit as $\tau\rightarrow 0$ of $D_{\tau, i, 0\rightarrow \ent{t/\tau}}(\rho)$ for fixed $t\geq 0$. We have
\begin{align}\label{eq:estimate_Dtaui}
D_{\tau, i}(\rho)&=i\sqrt{\tau}\tr{[H_{SB_i}, K_{B_i}]\rho\otimes \rho_{B_i}}-\tau \tr{\call_i(K_S)\rho} +O(\tau^{\frac{3}{2}}) ~.
\end{align}
The term in $\tau$ is obtained by using the fact that $K_S\otimes \Un_B+\Un_S\otimes K_{B_i}$ commutes with $H_{SB_i}$. Thus, for the quantity $D_{\tau, i, 0\rightarrow \ent{t/\tau}}(\rho)$ to have a limit, we will impose the following assumption: 

\begin{assumption}\label{ass:strongConditionV}
For any bath index $i\in \{1, \cdots, n\}$ there is a basis $(\ket{j})_{j\in \{1, \cdots, \dim (\hh_{B_i})\}}$ of $\hh_{B_i}$ in which $K_{B_i}$ is diagonal and $\bra{j} H_{SB_i} \ket{j}=0$ for all $j$. Equivalently for any $\alpha\in \R$ we have
\[
\tra_{B_i}\left( H_{SB_i} \Un_{\hh_S}\otimes \rho_{B_i}^\alpha\right)=0~.
\]
\end{assumption}
This assumption is satisfied by the interaction $H_{SB}$ constructed in Proposition \ref{prop:thermal=db}. It is used to ensure the convergence of the energy fluxes, as in the following lemma.

\begin{lem}
Under Assumption \ref{ass:strongConditionV} the quantity $D_{\tau, i,  \ent{t/\tau}}(\rho)$ converges for all $t$ and $\rho$ as $\tau\rightarrow 0$ to a limit
\[
D_{i, t}(\rho)=\int_0^t J_{i}(\rho(s))ds
\]
where $\rho(s)=\Lambda^s(\rho)$ and $J_i$ is the flux of energy entering the $i$-th bath, equal to 
\begin{align}\label{eq:def_J}
J_i(\rho)=-\tr{\rho \call_i(K_S)}~.
\end{align}
\end{lem}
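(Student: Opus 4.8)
The plan is to substitute the one--step estimate \eqref{eq:estimate_Dtaui} into the sum $D_{\tau,i,0\to\ent{t/\tau}}(\rho)=\sum_{l=0}^{\ent{t/\tau}-1}D_{\tau,i}(\rho(l))$, where $\rho(l)=(\Lambda_\tau^*)^l(\rho)$, and then to recognize the surviving contribution as a Riemann sum of the flux $J_i$ along the discrete trajectory. The first point is to record what Assumption~\ref{ass:strongConditionV} buys in \eqref{eq:estimate_Dtaui}. The coefficient of $\sqrt\tau$, namely $\tr{[H_{SB_i},K_{B_i}]\,\rho(l)\otimes\rho_{B_i}}$, vanishes because $K_{B_i}$ commutes with its own Gibbs state $\rho_{B_i}$; this cancellation is essential, since a non--zero term of order $\sqrt\tau$ summed over the $\sim t/\tau$ interactions would diverge like $\tau^{-1/2}$. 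The coefficient of $\tau$ is exactly $-\tr{\call_i(K_S)\rho(l)}=J_i(\rho(l))$ by \eqref{eq:def_J}: the conservation law \eqref{eq:invariantN} lets one trade the energy change of the $i$--th bath for a pseudo--energy change of the system, while Assumption~\ref{ass:strongConditionV} (in particular the instance giving $\tra_{B_k}(H_{SB_k}(\Un_S\otimes\rho_{B_k}))=0$ for every $k$, i.e. the per--bath strengthening of \eqref{eq:conditionV}) removes the cross--bath terms that would otherwise pollute this coefficient.

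Next I would control the remainder uniformly. Over the compact set of states the operators $H_{SB_i}$ and $K_{B_i}$ have bounded norm and the operator $R(\tau)$ appearing in the expansion of $U_\tau$ is uniformly bounded as $\tau\to0$, so \eqref{eq:estimate_Dtaui} holds with a remainder bounded by $C\tau^{3/2}$ uniformly in its argument. Summing $\ent{t/\tau}$ of these gives a total error at most $C\,\ent{t/\tau}\,\tau^{3/2}\leq C\,t\,\sqrt\tau$, which tends to $0$. Together with the vanishing of the $\sqrt\tau$ terms, this reduces the claim to showing
\[
\tau\sum_{l=0}^{\ent{t/\tau}-1}J_i(\rho(l))\ \xrightarrow[\tau\to0]{}\ \int_0^t J_i(\rho(s))\,ds .
\]

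For the final step I would read the left--hand side as the integral $\int_0^{\tau\ent{t/\tau}}J_i\big((\Lambda_\tau^*)^{\ent{s/\tau}}(\rho)\big)\,ds$ of a piecewise--constant function and apply dominated convergence on $[0,t]$. For fixed $s$ the iterate $(\Lambda_\tau^*)^{\ent{s/\tau}}(\rho)$ converges to $\rho(s)=\Lambda^s(\rho)$ by the locally uniform convergence $\Lambda_\tau^{\ent{s/\tau}}\to\Lambda^s$ established in the adaptation of Theorem~22 of \cite{AttalPautrat}, and since $J_i(\cdot)=-\tr{\call_i(K_S)\,\cdot}$ is linear, hence continuous and bounded by $\norm{\call_i(K_S)}$ on states, the integrand converges pointwise while staying dominated; the upper endpoint $\tau\ent{t/\tau}\to t$ is absorbed by the same bound. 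This yields the stated limit $D_{i,t}(\rho)=\int_0^t J_i(\rho(s))\,ds$. The delicate point is precisely this last one: $J_i$ must be evaluated along the \emph{approximating} discrete trajectory rather than the exact flow, so the argument genuinely requires the locally uniform convergence of the iterated maps, not merely convergence of the generators, and the uniform remainder bound of the previous paragraph is what makes the two approximations (time--discretization and semigroup convergence) compatible.
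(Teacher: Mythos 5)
Your proposal is correct and follows essentially the same route as the paper: substitute the one-step estimate \eqref{eq:estimate_Dtaui}, observe that the $\sqrt{\tau}$ term vanishes, and identify the surviving $\tau$-terms as a Riemann sum converging to $\int_0^t J_i(\rho(s))\,ds$ with an $O(\sqrt{\tau})$ total remainder. You are in fact slightly more careful than the paper at the last step, where you justify replacing the discrete iterates $(\Lambda_\tau^*)^{\ent{s/\tau}}(\rho)$ by the continuous flow $\Lambda^s(\rho)$ via the locally uniform convergence of the repeated-interaction dynamics.
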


\begin{proof}
 Assumption \ref{ass:strongConditionV} implies that for any state $\rho$ the quantity $\tr{[H_{SB_i}, K_{B_i}]\rho\otimes \rho_{B_i}}$ is equal to zero, so Equation \eqref{eq:estimate_Dtaui} has no term in $\sqrt{\tau}$, so
 \[
 D_{\tau, i, 0\rightarrow \ent{t/\tau}}(\rho)=\tau \sum_{l=0}^{\ent{t/\tau}-1} J_i(\Lambda^{l\tau}(\rho)) +O\left(\tau^{3/2}\right)
 \] 
 converges, and has for differential $J_{i}(\rho(t))$.
 \end{proof}

Note that $\sum_{i=1}^n J_i(\rho)=-\tr{\rho\call(K_S)}$ so at equilibrium ($\call^*(\rho)=0$) the fluxes satisfies the first law of thermodynamics: $\sum_{i=1}^n J_i(\rho)=0$. As we shall see, it also satisfies the second law. 

\begin{prop}\label{prop:second_principle}
Under assumption \ref{ass:strongConditionV}, the fluxes satisfies
\begin{align*}
S(\rho(t))-S(\rho)+\sum_{i=0}^n \beta_i D_{i, t}(\rho)\geq 0~.
\end{align*}
\end{prop}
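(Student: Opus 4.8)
The plan is to prove the entropy-production inequality by relating the sum $\sum_i \beta_i D_{i,t}(\rho)$ to a sum of relative entropies, which are automatically nonnegative, and then identifying the Von Neumann entropy difference as the remaining piece. The key observation is that each $\call_i$ satisfies detailed balance with respect to $\sigma_i = e^{-\beta_i K_S}/Z_i$, so $\beta_i K_S = -\log \sigma_i - \log Z_i$, and hence the flux $J_i(\rho) = -\tr{\rho \call_i(K_S)}$ can be rewritten, after multiplying by $\beta_i$ and using that $\call_i$ annihilates constants (i.e. $\call_i(\Un_S)=0$, which follows from the Lindblad form since $\Phi_i(\Un)-\tfrac12\{\Phi_i(\Un),\Un\}=0$), as
\[
\beta_i J_i(\rho) = \tr{\rho\, \call_i(\log \sigma_i)}~.
\]
Summing over $i$ and integrating in time, this turns $\sum_i \beta_i D_{i,t}(\rho)$ into $\int_0^t \tr{\rho(s)\sum_i \call_i(\log\sigma_i)}\,ds$.

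Next I would control the entropy difference. Writing $\frac{d}{ds} S(\rho(s)) = -\tr{\call^*(\rho(s))\log\rho(s)} = -\tr{\rho(s)\call(\log\rho(s))}$ (using the duality between $\call$ and $\call^*$ and the fact that the derivative of $-\tr{\rho\log\rho}$ picks out $\log\rho$ up to a term that vanishes because $\tr{\call^*(\rho)}=0$), and noting that the Hamiltonian part $i[H_S,\bullet]$ contributes nothing to the entropy, the time derivative of the left-hand side becomes an integral of $-\sum_i \tr{\rho(s)\call_i(\log\rho(s) - \log\sigma_i)}$. The goal is to show each summand, call it $\sigma_i$-entropy production
\[
\mathrm{EP}_i(\rho) := -\tr{\rho\,\call_i(\log\rho - \log\sigma_i)}~,
\]
is nonnegative. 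This is exactly the statement that the QMS generated by $\call_i$ is relative-entropy-decreasing toward its invariant state $\sigma_i$: since $\call_i$ satisfies detailed balance with respect to $\sigma_i$, Spohn's inequality (or equivalently the monotonicity of $s\mapsto S(e^{s\call_i^*}\rho \,|\, \sigma_i)$) gives $\frac{d}{ds}S(\rho(s)|\sigma_i)\big|_{s=0}\leq 0$, which unwinds precisely to $\mathrm{EP}_i(\rho)\geq 0$ because $\sigma_i$ is stationary for $\call_i$.

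The main obstacle is the rigorous justification of the entropy-derivative computation and the clean reduction to Spohn's inequality. Concretely, the steps requiring care are: (i) verifying $\call_i(\Un_S)=0$ and $\call_i^*(\sigma_i)=0$ so that $\sigma_i$ is indeed invariant for the $i$-th piece — the latter follows from detailed balance, since $\Phi_i^*(\Un\cdot\sigma_i)\sigma_i^{-1}=\Phi_i(\Un)$ combined with $[H_S,\sigma_i]=0$; (ii) differentiating $S(\rho(s))$, where one must be careful that the full generator $\call$ has a single Hamiltonian $H_S$ but $n$ distinct detailed-balance states $\sigma_i$, so the decomposition $\call(\log\rho) = i[H_S,\log\rho] + \sum_i\call_i(\log\rho)$ must be split against the $n$ different $\log\sigma_i$ terms simultaneously; and (iii) invoking Spohn's inequality in the form that applies to each dissipative generator $\call_i$ separately rather than to the full $\call$ (whose invariant state is generally none of the $\sigma_i$). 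Once these are assembled, summing $\mathrm{EP}_i\geq 0$ over $i$ and integrating over $[0,t]$ yields
\[
S(\rho(t)) - S(\rho) + \sum_{i=1}^n \beta_i D_{i,t}(\rho) = \int_0^t \sum_{i=1}^n \mathrm{EP}_i(\rho(s))\,ds \geq 0~,
\]
which is the claimed inequality.
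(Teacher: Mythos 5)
Your proof is correct, but it takes a genuinely different route from the paper. You work entirely at the level of the continuous-time generator: you rewrite $\beta_i J_i(\rho)=\tr{\rho\,\call_i(\log\sigma_i)}$ using $\call_i(\Un_S)=0$, differentiate the Von Neumann entropy along the flow, and reduce the claim to Spohn's inequality $-\tr{\call_i^*(\rho)(\log\rho-\log\sigma_i)}\geq 0$ applied to each dissipative piece $\call_i$ separately, using that detailed balance forces $\call_i^*(\sigma_i)=0$ (indeed $\Phi_i$ commutes with $\Delta_{\sigma_i}$, so $\Phi_i(\Un)$ commutes with $\sigma_i$ and $\call_i^*(\sigma_i)=\Phi_i^*(\sigma_i)-\tfrac12\{\Phi_i(\Un),\sigma_i\}=0$). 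The paper instead stays in the discrete-time repeated interaction picture: it writes the entropy balance for a single interaction step, uses unitary invariance of $S$ together with the positivity of the relative entropy $S(\rho_{tot}'\,|\,\rho'\otimes\rho_B)$ of the post-interaction global state against the product of its marginals, obtains the inequality exactly at each step, and then sums and passes to the limit $\tau\to 0$. Your approach buys a self-contained continuous-time argument that makes no reference to the microscopic model, and it identifies the entropy production as an explicit sum of nonnegative terms $\mathrm{EP}_i(\rho(s))$; its cost is that it imports Spohn's theorem and requires care with the differentiability of $S(\rho(t))$ and the definition of $\log\rho(t)$ when $\rho(t)$ is not faithful (a point you flag; it is handled by approximation or by restricting to $t>0$ when the semigroup is positivity improving, but it is a real technicality). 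The paper's route is more elementary — only unitary invariance and positivity of relative entropy — and is entirely insensitive to degeneracies of $\rho$, which is precisely why the author emphasizes that the repeated interaction structure lets such statements ``pass to the limit.''
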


\begin{proof}
Let us consider the discrete-time model. The total entropy before the interaction is
\begin{align*}
S(\rho\otimes \rho_B)=S(\rho)+\sum_{i=1}^n \beta_i \tr{\rho_{B_i}  K_{B_i}}-\beta_i \log Z_i~.
\end{align*}

Now, let us consider the complete state after the interaction $\rho_{tot}'=U_\tau (\rho\otimes \rho_B) U_\tau^*$ and the partial state $\rho'=\tra_B(\rho_{\tau}')=\Lambda_\tau(\rho)$. Since the entropy is preserved by unitary evolution, $S(\rho_{tot}')=S(\rho\otimes \rho_B)$. Now we use a trick which goes back to \cite{pusz_passive_1978} (and has been used many times, see Section III of \cite{JP_landauer_2014}) : the relative entropy $S(\rho_{tot}'|\rho'\otimes \rho_B)$ is always positive, and it is equal to 
\begin{align*}
S(\rho_{tot}'|\rho'\otimes \rho_B)&=\tr{\rho_{tot}'\left(\log(\rho_{tot}')-\log(\rho')\otimes \Un_B-\Un_S\otimes \log(\rho_B)\right)}\\
&=-S(\rho_{tot}')+S(\rho')+\beta_i\tr{\rho_{tot}'\Un_S\otimes K_{B_i}}~.
\end{align*}
Thus we have
\[
S(\rho')-S(\rho)+\sum_{i=1}^n \beta_i D_{\tau,i}(\rho)=S(\rho_{tot}'|\rho'\otimes \rho_B) \geq 0~.
\]
Summing over all $k\in \{0, \ent{t/\tau}\}$ and taking the limit as $\tau\rightarrow 0$ allows to conclude.
\end{proof}

{\bf Remark: } The quantities $\call_i$, $J_i$ and $S(\rho)$ can all be defined without the help of the repeated interaction model, and also originate from other models such as the weak coupling limit. However, the repeated interaction model has the advantage of being easy to interpret; moreover some results on the repeated interaction models pass to the limit, as in the proof of Proposition \ref{prop:second_principle} above.

\subsection{Energy fluxes for stationary states and thermal machines}

The system being of finite dimension, there exists a trace-preserving projection (not necessarily an orthogonal one) $E$ on $\ker(\call)$ such that
\[
\lim_{T\rightarrow \infty} \frac{1}{T} \int_{0}^T e^{t\call^*}(\rho) dt=E(\rho)~.
\]
Here are some properties that the QMS may enjoy:
\begin{enumerate}
\item The system is ergodic if there is a unique state $\rho_\infty$ with $\call(\rho_\infty)=0$. Then $E(\rho)=\rho_\infty$ for any initial state $\rho$. 
\item The system is primitive if $0$ is a simple eigenvalue and is the only eigenvalue with real part zero. Then $E(\rho)=\rho_\infty=\lim_{t\rightarrow \infty} e^{t\call^*}(\rho)$ for any initial state $\rho$.
\item We will be interested in systems which are positivity improving, that is, primitive with a stationary state $\rho_\infty>0$. Equivalently (in finite dimension) for any nonzero operator $A$ with $A\geq 0$, we have $e^{t\call}(A)>0$ for all $t>0$.
\end{enumerate}

Let us consider the asymptotic of the mean energy flux 
\[
J_i=\lim_{t\rightarrow \infty}\frac{1}{T}D_{i, T}(\rho)=J_i(E(\rho))~.
\]
By Proposition \ref{prop:second_principle} they satisfy the first and second law of thermodynamics
\begin{align}
\sum_{i=1}^n J_i=0 \label{eq:firstlaw}\\
\sum_{i=1}^n \beta_i J_i\geq 0~.\label{eq:secondlaw}
\end{align}
We call \enquote{entropy production} the quantity $\sum_{i=1}^n \beta_i J_i$.

As we shall see, any list $(J_i)_{1\leq i \leq n}$ satisfying these conditions can be attained for a specific model, if the inequality in the second law is strict:

\begin{prop}\label{prop:thermal_machines}
Let $J_1, \cdots, J_n\in \R$ such that $\sum_{i=1}^n J_i=0$ and $\sum_{i=1}^n \beta_i J_i>0$. Then there exists a thermal repeated interaction model $(H_S, H_{SB})$ with $n$ baths, for which the complete Lindbladian $\call$ is positivity improving and such that $J_i(\rho_\infty)=J_i$ for all $i$ (where $\rho_\infty$ is the unique stationary state of the QMS).
\end{prop}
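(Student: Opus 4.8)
The plan is to assemble the model from small ``elementary'' machines, each producing one canonical pattern of fluxes, and to reduce everything to a convex-geometric statement about which flux vectors can be produced. Throughout write $\sigma(J)=\sum_{i=1}^n\beta_iJ_i$ for the entropy production and $H^+=\{J\in\R^n:\ \sum_iJ_i=0,\ \sigma(J)>0\}$ for the admissible set appearing in \eqref{eq:firstlaw}--\eqref{eq:secondlaw}. The first reduction is a combination lemma: fluxes add under tensoring. Given two thermal models on the same $n$ baths, form the system $\hh_S^{(1)}\otimes\hh_S^{(2)}$ with $K_S=K_S^{(1)}\otimes\Un+\Un\otimes K_S^{(2)}$, and let the $i$-th bath be a tensor product $\hh_{B_i}^{(1)}\otimes\hh_{B_i}^{(2)}$ of \emph{independent} copies, coupled through $H_{SB_i}=H_{SB_i}^{(1)}\otimes\Un+\Un\otimes H_{SB_i}^{(2)}$ (each summand on its own system and bath factor). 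Condition \eqref{eq:conditionV} makes the cross terms of $\Phi_i$ vanish after the partial trace, so the generator splits as $\call=\call^{(1)}\otimes\mathrm{id}+\mathrm{id}\otimes\call^{(2)}$; the stationary state is the product $\rho_\infty^{(1)}\otimes\rho_\infty^{(2)}$, positivity improving is preserved (the zero eigenvalue stays simple and dominant), and from \eqref{eq:def_J} together with $\call_i(\Un)=0$ one gets $J_i=J_i^{(1)}+J_i^{(2)}$. Since moreover the flux of a single machine is continuous in its coupling strength and vanishes at zero coupling, stacking copies realizes any prescribed positive magnitude in a given direction. It therefore suffices to produce a family of flux \emph{directions} whose conical hull contains $H^+$.

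Two kinds of elementary machines are needed. A \emph{conductor} $C_{ij}$ for $\beta_i>\beta_j$ is a single two-level system of gap $\delta$ with $K_S=\delta\,\ket{e}\bra{e}$, coupled to baths $i$ and $j$; each bath thermalizes the qubit toward its own temperature, the unique stationary state is full rank, and the net flux points downhill, giving the direction $e_i-e_j$. A \emph{refrigerator} $R_{ijk}$ for $\beta_i<\beta_j<\beta_k$ is the three-qubit absorption machine of \cite{thermal_machine_small_prl10,thermal_machine_small_maximal_efficiency_jpa11}, with one qubit per bath, gaps tuned to the resonance $E_j=E_i+E_k$ on the middle bath, and a swap term $g\,(\ket{010}\bra{101}+\ket{101}\bra{010})$ added to $H_S$ (which commutes with $K_S$ by the resonance). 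This swap is a genuinely interacting, non-quadratic term, and is precisely what lets the machine escape the conductor cone, in contrast with Theorem \ref{theo:no_fridge}. Computing the stationary fluxes shows $R_{ijk}$ realizes the directions proportional to $(E_i,-(E_i+E_k),E_k)$ on coordinates $(i,j,k)$, running in the sign pattern selected by $\sigma>0$: as $E_i/E_k$ increases the direction sweeps from $e_k-e_j$ through the reversible vector $w^{ijk}\propto(\beta_k-\beta_j,\beta_i-\beta_k,\beta_j-\beta_i)$ (where $\sigma\to0$) and, past that threshold, from $-w^{ijk}$ back toward $e_j-e_i$. In particular both $w^{ijk}$ and $-w^{ijk}$ are limits of realizable directions with $\sigma>0$.

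For the geometric decomposition, let $\mathcal C$ be the conical hull of all these directions; by the two laws $\mathcal C\subseteq\overline{H^+}$. To prove $\mathcal C\supseteq H^+$ I would show $\overline{\mathcal C}=\overline{H^+}$ by computing the dual cone of $\mathcal C$ inside the hyperplane $\{\sum_iJ_i=0\}$. Suppose a functional $\phi(J)=\sum_i\mu_iJ_i$ is nonnegative on every generator. Nonnegativity on the conductors forces $\mu_i\ge\mu_j$ whenever $\beta_i\ge\beta_j$, so $i\mapsto\mu_i$ is nondecreasing in $\beta_i$. Nonnegativity on the refrigerator directions on both sides of each reversible vector forces $\phi(w^{ijk})=0$, which is exactly the statement that the points $(\beta_i,\mu_i),(\beta_j,\mu_j),(\beta_k,\mu_k)$ are collinear; applied to all triples this makes $\mu_i=c\,\beta_i+d$ affine in $\beta_i$, with $c\ge0$ by the monotonicity. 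Modulo the irrelevant shift by $\sum_iJ_i=0$ we get $\phi=c\,\sigma$, $c\ge0$, so the dual cone is the single ray $\R_{\ge0}\,\sigma$ and $\overline{\mathcal C}=\overline{H^+}$. As $H^+$ is the relative interior of $\overline{H^+}$ and $\mathcal C$ is convex, $H^+\subseteq\mathcal C$. Writing the target $J$ as a finite nonnegative combination $J=\sum_m c_m g_m$ of realizable directions and tensoring the corresponding machines (independent bath copies, merged afterward into $n$ baths) yields a thermal model with $J_i(\rho_\infty)=J_i$, positivity improving of the whole following from that of each factor.

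\textbf{Main obstacle.} The delicate point is the analysis of $R_{ijk}$: verifying that the three thermalizing baths together with the coherent swap form a positivity improving dynamics (unique, full-rank stationary state), and that its steady-state fluxes have exactly the claimed signs, ratios and reversible threshold. This is where the explicit fridge design borrowed from \cite{thermal_machine_small_prl10,thermal_machine_small_maximal_efficiency_jpa11} does the real work; the combination lemma, the magnitude tuning, and the dual-cone computation are then routine.
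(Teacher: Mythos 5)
Your proposal is correct and is built from exactly the same three ingredients as the paper's proof: the additivity of fluxes under tensoring of independent thermal models, two-bath depolarizing ``conductors'', and the three-qubit absorption fridge of \cite{thermal_machine_small_prl10,thermal_machine_small_maximal_efficiency_jpa11} as the only non-trivial machine. Where you diverge is in the final assembly. The paper proceeds by induction on $n$: it subtracts from the target an explicit fridge flux $(\tilde J_1,\tilde J_2,\tilde J_3)$, chosen by a closed formula so that $\hat J_1=0$ and exactly half of the entropy production $\eps$ is consumed, and then recurses on the remaining $n-1$ baths. You instead characterize the conical hull $\mathcal C$ of all realizable directions by computing its dual cone inside $\{\sum_i J_i=0\}$: the conductors force the dual functional to be monotone in $\beta$, the fridge directions straddling each reversible vector $w^{ijk}$ force collinearity of $(\beta_i,\mu_i)$, hence affinity of $\mu$ in $\beta$, hence $\overline{\mathcal C}=\overline{H^+}$, and the relative-interior identity $\mathrm{ri}(\mathcal C)=\mathrm{ri}(\overline{\mathcal C})$ gives $H^+\subseteq\mathcal C$. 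Your route buys a cleaner conceptual statement (the second law is the \emph{only} linear constraint certified by the elementary machines) at the price of relying on standard convex-analysis facts and on the two-sided approach to $w^{ijk}$, which requires checking that the sign of the fridge parameter $\alpha$ can be flipped across the reversibility threshold — this is exactly the content of the paper's lemma that $\alpha$ has the sign of $\sum_i\beta_iE_i$, so it is available. Two small remarks: the paper realizes arbitrary magnitudes by the exact scaling $\lambda_i\mapsto\mu\lambda_i$, which fixes $\rho_\infty$ and multiplies all fluxes by $\mu$ — cleaner than your continuity-plus-stacking argument, which additionally needs the direction to be independent of the coupling strength; and both your dual-cone computation (collinearity for triples) and the paper's induction (division by $\beta_3-\beta_2$) implicitly assume enough distinct temperatures, so the degenerate case of repeated $\beta_i$'s, where a reversible flux between equal-temperature baths must be produced as a limit of fridge directions, deserves a sentence in either version.
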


For two baths,  the conditions implies is $J_2=-J_1\geq 0$ if $\beta_1\leq \beta_2$, that is the energy flows from the hottest bath to the coldest one. The first nontrivial case arise with three baths, where we can have $J_3<0$ : the energy flowing from the hottest bath to the mild bath allows to pump energy from the coldest bath, as in a camping fridge (in which the hot bath is a gas stove, the mild bath is the ambient air and the cold bath is the inner of the fridge).

Later in the article we describe the class of quasi-free fermionic semigroups, for which this is not true, and only some very specific fluxes can be obtained. 
\vspace{0.5cm}

In order to prove this proposition, we introduce some special cases of thermal repeated interactions models.

{\bf The generalized depolarizing channel on a qubit: } This is the simplest non-trivial example of a thermal QMS. For any state $\sigma$ on $\hh_S$ the corresponding depolarizing channel has for generator
\[
\call_{\sigma, \lambda}^*(\rho)=\lambda(\sigma\tr{\rho}-\rho)
\]
for some positive real number $\lambda$ called the rate of the depolarizing channel. When $\sigma$ is faithful then $\call_{\sigma, \lambda}$ satisfies the detailed balance with respect to $\sigma$; indeed, it admits the following thermal repeated interaction model: take $\hh_B\simeq \hh_S$ and let $\rho_B=\sigma$ and let $H_{SB}$ be proportional to the swap operator: 
\[
H_{SB}=\sqrt{\lambda}\sum_{i,j=1}^{d_S} \ket{i}\bra{j}\otimes\ket{j}\bra{i}
\]
where $(\ket{i})_{i=1}^{d_S}$ is a Hilbert basis of $\hh_S$. 

Moreover, the semigroup is positivity improving when $\sigma$ is faithful and it is ergodic in general. The semigroup can be described explicitly: 
\[
\Lambda^t(\rho)=\left(\rho-\sigma\right)e^{-\lambda t}+\sigma~.
\]
\vspace{0.5cm}

{\bf Two depolarizing channels at different temperature:}

Let us consider two states $\sigma_1=e^{-\beta_1 K_S}/Z_1$ and $\sigma_2=e^{-\beta_2 K_S}/Z_2$, with $\beta_1\leq \beta_2$. Then we can combine two depolarizing channels corresponding to these states: 
\[
\call^*(\rho)=\lambda_1(\sigma_1\tr{\rho}-\rho)+\lambda_2(\sigma_2\tr{\rho}-\rho)
\]
for some positive real numbers $\lambda_1$ and $\lambda_2$. It is actually the depolarizing channel of rate $\lambda_1+\lambda_2$ with respect to the state
\[
\rho_\infty=\frac{\lambda_1 \sigma_1+\lambda_2\sigma_2}{\lambda_1+\lambda_2}~.
\]
Note that $\rho_\infty$ commutes with $K_S$ but it is not a Gibbs state with respect to the $K_S$ except in trivial cases. Associating one bath to each channel, we have
\[
J_1(\rho_\infty)=\frac{\lambda_1\lambda_2}{\lambda_1+\lambda_2} \tr{K_S (\sigma_1-\sigma_2)}~.
\]
Whenever $\beta_1<\beta_2$ it is possible to choose any negative value for $J_1(\rho_\infty)$ by tuning the rates $\lambda_1$ and $\lambda_2$, hence proving the proposition in the case of two baths. 
\vspace{0.5cm}

{\bf The quantum fridge:} This example was introduced by Linden, Popescu and Skrzypczyp in \cite{thermal_machine_small_prl10} (see also \cite{thermal_machine_small_maximal_efficiency_jpa11} where the solution is more detailed, and \cite{cooling_entanglement_bhlp} for more developments). It is a simple model of quantum fridge, where the energy of the coolest bath is pumped out by the use of two other baths. The solution can be explicitly computed but the description is more involved, and we refer to \cite{thermal_machine_small_maximal_efficiency_jpa11} for a complete discussion. Let us just describe the setup: the system is composed of three qubits: $\hh_S=(\C^2)^{\otimes 3}$. Write $P_1=\ket{1}\bra{1}\otimes \Un_{\C^2\otimes \C^2}$ the projector on the state $\ket{1}$ on the first qubit, and define $P_2$, $P_3$ similarly. We chose a pseudo-energy $K_S$ which acts independently on the qubits; 
\[
K_S=E_1 P_1+E_2 P_2+E_3~ P_3~.
\]
The energies $E_1, E_2, E_3$ are supposed nonzero.
We consider three baths with inverse temperatures $\beta_1>\beta_2>\beta_3$; the Hamiltonian $H_S$ on the system is defined by
\[
H_S=h K_S+g(\ket{010}\bra{101}+\ket{101}\bra{010})~.
\]
We assume $E_1+E_3=E_2$ so that $[H_S, K_S]=0$. This way, energy can flow from the hot bath to the middle bath only if some energy is pumped out of the cold bath. We take each bath acting on one qubit with the depolarizing channel corresponding to the Gibbs state $\sigma_i$ on this qubit at inverse temperature $\beta_i$, that is
\[
\sigma_i=\frac{1}{1+e^{-\beta_i E_i}}\left(\ket{0}\bra{0}+e^{-\beta_i E_i}\ket{1}\bra{1}\right)
\]
and 
\[
\call=i[H_S, \bullet]+\lambda_1 \call_1\otimes \Un_{\bb(\C^2\otimes \C^2)}+\lambda_2 \Un_{\bb(\C^2)}\otimes \call_2\otimes \Un_{\bb(\C^2)}+\lambda_3\Un_{\bb(\C^2\otimes \C^2)}\otimes \call_3~
\]
where $\call_i^*(\rho)=\sigma_i-\rho$ for any state $\rho$ on $\C^2$.  Note that  $\call$ is positivity improving since each $\call_i$ is positivity improving on its respective qubit. In \cite{thermal_machine_small_maximal_efficiency_jpa11} the stationary state $\rho_\infty$ is explicitly described, and the following facts are observed: 
\begin{lem}
There exists a parameter $\alpha\in \R$ depending on the $E_i$, $\beta_i$, $\lambda_i$ and $g,h$ such that
\begin{align}\label{eq:prop_flux}
J_1(\rho_\infty)=&\alpha E_1 & J_2(\rho_\infty)&=-\alpha E_2 & J_3(\rho_\infty)&=\alpha E_3
\end{align}
 Moreover, if all the parameters $E_i, \lambda_i, g, f$ are nonzero and if 
 \[
 \sum_{i=1}^n \beta_i E_i\neq 0
 \]
 then $\alpha \neq 0$ and $\alpha$ has the same sign as $\sum_{i=1}^n \beta_i E_i$. 
\end{lem}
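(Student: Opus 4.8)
The plan is to reduce all three fluxes to the two stationary populations of the resonant states $\ket{010}$ and $\ket{101}$, and then to read off the sign from the second law.

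First, for \eqref{eq:prop_flux}, I would compute $\call_i(K_S)$. Writing $p_i:=\tr{\sigma_i\ket{1}\bra{1}}=e^{-\beta_iE_i}/(1+e^{-\beta_iE_i})$, the Heisenberg depolarizing generator $\call_i$ of bath $i$ satisfies $\call_i(\Un)=0$ and $\call_i(\ket{1}\bra{1})=p_i\Un-\ket{1}\bra{1}$ on qubit $i$; since $K_S$ acts on each qubit separately, only the $E_iP_i$ term survives and $\lambda_i\call_i(K_S)=\lambda_iE_i(p_i\Un-P_i)$. Hence, by \eqref{eq:def_J},
\[
J_i(\rho_\infty)=\lambda_iE_i\big(\langle n_i\rangle-p_i\big),\qquad \langle n_i\rangle:=\tr{\rho_\infty P_i}.
\]
To exhibit the common factor I would use stationarity of the occupations, $0=\tr{\rho_\infty\call(P_i)}$. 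As $[K_S,P_i]=0$, only the $g$-term of $H_S$ contributes to $i[H_S,P_i]$, and a direct computation gives $i[H_S,P_1]=i[H_S,P_3]=-i[H_S,P_2]=ig(\ket{010}\bra{101}-\ket{101}\bra{010})$. With $c:=\bra{010}\rho_\infty\ket{101}$ this yields $\lambda_1(\langle n_1\rangle-p_1)=\lambda_3(\langle n_3\rangle-p_3)=-\lambda_2(\langle n_2\rangle-p_2)=2g\,\im c$, so setting $\alpha:=2g\,\im c$ produces exactly \eqref{eq:prop_flux}; the first law $\sum_iJ_i=\alpha(E_1-E_2+E_3)=0$ comes out automatically from $E_2=E_1+E_3$.

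Next I would obtain a closed scalar equation for $c$ from $0=\tr{\rho_\infty\call(\ket{101}\bra{010})}$. The $g$-term gives $ig(\ket{010}\bra{010}-\ket{101}\bra{101})$ — the $hK_S$ part cancels because $\ket{010}$ and $\ket{101}$ share the pseudo-energy $E_2$ — while each depolarizing channel multiplies the fully off-diagonal operator $\ket{101}\bra{010}$ by $-1$, so $\call(\ket{101}\bra{010})=ig(\ket{010}\bra{010}-\ket{101}\bra{101})-\Gamma\ket{101}\bra{010}$ with $\Gamma:=\lambda_1+\lambda_2+\lambda_3$. Stationarity then gives $c=ig(P_{010}-P_{101})/\Gamma$, where $P_{010},P_{101}$ are the stationary populations of the resonant states, and therefore
\[
\alpha=2g\,\im c=\frac{2g^2}{\Gamma}\,(P_{010}-P_{101}).
\]
Since $g\ne0$, the sign of $\alpha$ is that of $P_{010}-P_{101}$, and the second law \eqref{eq:secondlaw} reads $\sum_i\beta_iJ_i=\alpha(\beta_1E_1-\beta_2E_2+\beta_3E_3)\ge0$ (this signed combination being the $\sum_i\beta_iE_i$ of the statement, read off the signs in \eqref{eq:prop_flux}). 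Thus, once I show $\alpha\ne0$ whenever this combination is nonzero, the inequality immediately forces $\mathrm{sign}(\alpha)=\mathrm{sign}(\beta_1E_1-\beta_2E_2+\beta_3E_3)$, which is the assertion.

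The remaining and main point is the non-vanishing of $\alpha$, i.e. $P_{010}\ne P_{101}$. Here I would observe that the depolarizing channels preserve the splitting of $\bb(\hh_S)$ into diagonal and off-diagonal parts and that the only coherence fed by the $g$-term is $c$; hence the eight populations together with $c,\bar c$ form a closed system, all other coherences vanish in $\rho_\infty$, and after eliminating $c$ the stationary populations satisfy the balance equations of the classical Markov chain on $\{0,1\}^3$ in which qubit $i$ flips $0\to1$ at rate $\lambda_ip_i$ and $1\to0$ at rate $\lambda_i(1-p_i)$, supplemented by a symmetric rate $\gamma=2g^2/\Gamma$ between $\ket{010}$ and $\ket{101}$. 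This chain is irreducible (all $p_i\in(0,1)$ since $E_i\ne0$, all $\lambda_i>0$), so its stationary law is unique; at $\gamma=0$ it is the product measure $w$ with $w_{010}/w_{101}=e^{\beta_1E_1-\beta_2E_2+\beta_3E_3}$. If one had $P_{010}=P_{101}$, the $\gamma$-terms would drop out of the two balance equations they appear in, so $(P_x)$ would solve the $\gamma=0$ system and, by uniqueness, equal $w$; but then $w_{010}=w_{101}$, i.e. $\beta_1E_1-\beta_2E_2+\beta_3E_3=0$, contradicting the hypothesis. Hence $P_{010}\ne P_{101}$ and $\alpha\ne0$, completing the proof. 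The parameter $h$ plays no role (the two resonant states are $K_S$-degenerate); alternatively, one may read the sign directly from the explicit stationary state of \cite{thermal_machine_small_maximal_efficiency_jpa11}.
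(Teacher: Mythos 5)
Your proof is correct, and on the two nontrivial points it takes a genuinely different route from the paper. For the proportionality relation, the paper uses the conserved observable $Q=P_1+2P_2+P_3$ together with the first law, which yields two linear relations among the $J_i$; your argument instead extracts the common factor $2g\,\im c$ directly from the stationarity of each occupation $\tr{\rho_\infty P_i}$. This is slightly stronger: the paper's two relations only pin down the proportionality when $E_1\neq E_3$ (for $E_1=E_3$ they become dependent), whereas your derivation gives $\lambda_i(\langle n_i\rangle-p_i)=\pm 2g\,\im c$ with no case distinction. For the non-vanishing of $\alpha$, the paper explicitly defers to the exact stationary state computed in the cited reference and only sketches the observation that $\rho_\infty$ is the product Gibbs state iff $\beta_1E_1-\beta_2E_2+\beta_3E_3=0$; your argument is self-contained: the stationarity equations for the eight populations together with the single coherence $c=\bra{010}\rho_\infty\ket{101}$ form a closed subsystem (no need for the other coherences to vanish, though they do), eliminating $c$ turns it into an irreducible classical master equation with a symmetric exchange rate $\gamma=2g^2/\Gamma$ between $\ket{010}$ and $\ket{101}$, and the assumption $P_{010}=P_{101}$ collapses it to the $\gamma=0$ product chain, forcing $w_{010}=w_{101}$ and hence $\beta_1E_1-\beta_2E_2+\beta_3E_3=0$. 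Both you and the paper then read the sign of $\alpha$ off the second law. What your approach buys is a complete proof independent of the external reference; what the paper's buys is brevity, since it is content with "elements of proof."
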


\begin{proof}[Elements of proof]
The proportionality relation \ref{eq:prop_flux} can be proved directly: consider the observable $Q=P_1+2P_2+P_3$. Then $[H, Q]=0$ which implies
\[
\tr{\rho_\infty\sum_{i=1}^3\call_i^*(Q)}=\tr{\rho_\infty\call^*(Q)}=\tr{\call(\rho_\infty)Q}=0
\]
but the left-hand side is equal to
\[
\frac{J_1(\rho_\infty)}{E_1}+2\frac{J_2(\rho_\infty)}{E_2}+\frac{J_3(\rho_\infty)}{E_3}=0~.
\]
Since the sum of the fluxes equals $0$ and $E_1+E_3=E_2$ this implies the proportionality relation. The sign of $\alpha$ is constrained by Proposition \ref{prop:second_principle}. To prove that $\alpha$ is nonzero whenever $(\beta_1-\beta_2)E_1\neq(\beta_2-\beta_3)E_3$  we need the explicit solution described in \cite{thermal_machine_small_maximal_efficiency_jpa11}. We may just observe that $\rho_\infty$ is equal to the thermal equilibrium state $\sigma_1\otimes\sigma_2\otimes\sigma_3$ if and only if $(\beta_1-\beta_2)E_1=(\beta_2-\beta_3)E_3$, since $[H, \sigma_1\otimes\sigma_2\otimes\sigma_3]$ is equal to a nonzero coefficient times $e^{-\beta_1 E_1-\beta_3 E_3}-e^{-\beta_2E_2}$.
\end{proof}

 Note that by multiplying the rates $\lambda_i, f, g$ by some positive number $\mu$ the stationary state does not change, so the energy fluxes $J_i(\rho_\infty)$ are all multiplied by $\mu$. Thus, any fluxes $J_1, J_2, J_3$ satisfying $J_1+J_2+J_3=0$ and $\beta_1J_1+\beta_2 J_2+\beta_3J_3>0$ and such that $J_1<0$ or $J_3>0$ can be attained by tuning the parameters of the model.
\vspace{0.5cm}

We can now conclude the proof of Proposition \ref{prop:thermal_machines}.
\begin{proof}[Proof of Proposition \ref{prop:thermal_machines}]
The idea is to combine systems: consider $\hh_{S_1}$ and $\hh_{S_2}$ two thermal models, each with $n$ baths at the same respective temperatures $\beta_1, \cdots, \beta_n$ and  Lindbladians $\call_1, \call_2$ and fluxes $J_k^1$ and $J_k^2$.
Then the system $\hh_{S_1}\otimes \hh_{S_2}$ with Lindbladian $\call_1\otimes \Un_{S_2}+\Un_{S_1}\otimes \call_2$ can also be considered as a thermal model with $n$ baths, and the energy fluxes are additive: $J_k=J^1_k+J^2_k$. The key of the proof is that quantum fridges allow to construct systems with arbitrary small entropy production.

Let us show the Proposition by induction on $n$. Consider a list of fluxes $J_i$ satisfying the conditions of Proposition \ref{prop:thermal_machines}. 

If $n=2$ we can always obtain the fluxes with two depolarizing channels acting on the same qubit as in the example above.

If $n\geq 3$, let $\eps=\sum_{i=1}^n \beta_i J_i>0$. Define some fluxes $\tilde{J}_i$ for $i=1, 2, 3$ by
\[
\left\{\begin{array}{ll}
\tilde{J}_1&=J_1 \\
\tilde{J}_2&=\frac{\beta_1-\beta_3}{\beta_3-\beta_2} J_1-\frac{\eps}{2(\beta_3-\beta_2)}\\
\tilde{J}_3&=\frac{\beta_2-\beta_1}{\beta_3-\beta_2} J_1+\frac{\eps}{2(\beta_3-\beta_2)}
\end{array}\right.
\]
Then $\sum_{i=1}^3\beta_i J_i=\eps/2>0$ and $\sum_{i=1}^3 J_i=0$. Thus, the fluxes $\tilde{J}_i$ can be obtained from a model of quantum fridge as above. Moreover, consider the list of fluxes $\hat{J}_i$ defined by
\[
\hat{J}_i=\left\{\begin{array}{cc}
J_i-\tilde{J}_i & \text{ if $1\leq i \leq 3$} \\
J_i & \text{ if $i>3$}.
\end{array}\right.
\]
Note that $J_1=0$. Then $\sum_{i=1}^n \beta_i \hat{J}_i=\eps-\eps/2=\eps/2>0$ so the list $(\hat{J}_2, \cdots, \hat{J}_n)$ also satisfies the conditions of Proposition \ref{prop:thermal_machines} so by induction it admits a thermal model. By combining it to the quantum fridge with fluxes $(\tilde{J}_i)_{1\leq i\leq 3}$ we obtain the desired model.
\end{proof}

As shown by this proof, complex and rich examples can be obtained by making generalized depolarizing channels interact with the help of a Hamiltonian. The rest of this article is focused on another family of thermal models, arising from non-interacting fermions.


\subsection{Quasi-free fermionic semigroups arising from a repeated interaction model}

In this subsection we briefly introduce quasi-free fermionic systems and we describe a class of quantum semigroups on such systems. We then show that the asymptotic energy fluxes in thermal quasi-free fermionic systems are always trivial, in the sense that they can be decomposed as the sum of fluxes which stream between two of the baths, always from the hottest bath to the coldest one. Thus, it is never possible to pump energy from the coldest bath.

\subsubsection{Fermionic systems, quadratic Hamiltonians and quasi-free states}

In this part we recall the basic definitions and fix notations on fermionic systems. This is essentially a shorter version of the introduction of \cite{AndreysFermions1}. For a more general introduction to fermionic and bosonic spaces, see Derezi\'nski and G\'erard \cite{DerezinskiGerard}.
\vspace{0.5cm}

 Let us consider a (finite-dimensional) Hilbert space $\hh_0$, called the \enquote{one-particle space}, and let us fix a Hilbert basis $\ket{1},\cdots, \ket{L}$ of this space. The fermionic space constructed from $\hh_0$ is written $\hh=\Gamma(\hh_0)$. It is of dimension $2^L$, and has for orthonormal basis $\set{\ket{u_1, \cdots,u_L}~|~u_1, \cdots, u_L\in \{0, 1\}}$. We write $c_i$ the annihilation operator and $c_i^*$ the creation operator corresponding to the one-particle state $\ket{i}$, so that
\[
c_i\ket{u_1, \cdots, u_L}=(-1)^i \delta_{u_i=1} \ket{u_1, \cdots, u_i-1, \cdots, u_L}
\]
and the anticommutation relations are satisfied: 
\begin{align}
\set{c_i, c_j}&=\set{c_i^*, c_j^*}=0 \\
\set{c_i^*, c_j}&=\delta_{i,j} \Un~
\end{align}
where $\set{A, B}=AB+BA$. More generally for any vector $v\in \hh_0$ we have two operators $c_{v}^*=\sum_{i=1}^L v_i c_i^*$ and $c_{v}=\sum_{i=1}^L \overline{v_i} c_i$. 

We write $\gamma_1, \cdots, \gamma_{2L}$ the Majorana operators, defined by
\begin{align*}
\gamma_i&=c_i+c_i^* & \gamma_{i+L}&=-i (c_i -c_i^*)\\\
\end{align*}
for $i\leq L$. They are self-adjoint and satisfy the anticommutation relation 
\begin{align}
\set{\gamma_i, \gamma_j}=2\delta_{i,j}~.
\end{align}

We consider the Hilbert space $\yy=\hh_0\oplus \overline{\hh_0}$, where $\overline{\hh_0}$ is a Hilbert space endowed with an anti-unitary map $s$ to $\hh_0$. It is called the \emph{phase space} ; it has a Hilbert basis $e_1, \cdots, e_{2L}$ defined by $e_i=\ket{i}\oplus 0$ and $e_{i+L}=0\oplus s(\ket{i})$ for $i\leq L$. We write $\varphi$ the field operator, defined as a linear application from $\yy$ to $\bb(\hh)$ by
\begin{align*}
\varphi(e_i)&=c_i^* & \varphi(e_{i+L})&=c_i~.
\end{align*}
The space $\yy$ is endowed with the anti-linear involution $\xi(x\oplus s(y))=y\oplus s(x)$, with the property $\varphi(\xi(z))=\varphi(z)^*$. The anticommutation relations writes:
\[
\{\varphi(x), \varphi(y)\}=\scal{\xi(x), y}~.
\]

Another interesting basis of $\yy$ is the orthogonal basis $f_1, \cdots, f_{2L}$ defined by $f_i=e_i+e_{i+L}$ and $f_{i+L}=-i(e_i-e_{i+_L})$, so that $\varphi(f_i)=\gamma_i$. In this basis, $\xi$ is just the componentwise complex conjugation. The basis $e_1, \cdots, e_{2L}$ will be called the creation/annihilation basis while the basis $f_1, \cdots, f_{2L}$ will be called the Majorana basis. 

 For any operator $M: \yy\rightarrow \yy$ we will write $M^T=\xi M^* \xi$. In the Majorana basis, it corresponds to the transposition, while in the creation annihilation basis, we have
 \begin{align*}
 \begin{pmatrix} A & B \\ C & D \end{pmatrix}^T=\begin{pmatrix} D^T & B^T \\ C^T & A^T \end{pmatrix}~.
 \end{align*}

{\bf Row and column operators:} Another useful way of seeing $\varphi$ is as a row of operators: 
\begin{defi}

The row operator is the operator $F^*: \hh\otimes \yy \rightarrow \hh$ defined by 
\[
F^*(\ket{u_1, \cdots, u_L}\otimes x)=\varphi(x)\ket{u_1, \cdots, u_L}~.
\]
Its adjoint is the operator $F: \hh\rightarrow \hh\otimes \yy$ with
\[
F(\ket{u_1, \cdots, u_L})=\sum_{i=1}^{2L} \Big(\varphi(e_i) \ket{u_1, \cdots, u_n}\Big)\otimes e_i~.
\]

\end{defi}

Expressed in the creation/annihilation basis of $\yy$, the operator $F$ forms a column of operators: 
\[
F_{c}^*=\begin{pmatrix} c_1 \\ \vdots \\ c_L \\ c_1^* \\ \vdots \\ c_L^* \end{pmatrix}~.
\]
In the Majorana basis its form is 
\[
F_{f}^*=\begin{pmatrix} \gamma_1 \\ \vdots \\ \gamma_{2L} \end{pmatrix}~.
\]

In what follows, we recall the definition of quadratic operators, Bogoliubov transform and quasi-free states.

\begin{defi}
A quadratic operator on $\hh=\Gamma(\hh_0)$ is an operator of the form
\[
A=\frac{1}{2}F^*\left(\Un_{\hh}\otimes T\right) F~
\]
for some operator $T$ on $\yy$. If $T^f$ is its matrix in the Majorana basis, we have
\[
A=\sum_{1\leq i, j \leq 2L} \frac{1}{2}[T^f]_{i,j} \gamma_i \gamma_j
\] 
Up to replacing $A$ by $A+\alpha \Un$ for some $\alpha\in \R$ we may always assume that $T^T=-T$. Under this condition, $A$ is self-adjoint if and only if $T^*=T$, equivalently $\xi T \xi=-T$, or equivalently there exists a real antisymmetric matrix $R$ such that $iR$ is the matrix of $T$ in the Majorana basis. 
\end{defi}

We will implicitly write $T$ for $\Un_{\hh}\otimes T$ when there is no possible confusion, and we will write $d\Gamma(T)=A=\frac{1}{2} F^* T F$. 

 A quadratic observable can also be expressed in the creation/annihilation basis:
\[
A=\sum_{1\leq i, j\leq 2L} T^{c}_{i, j} (c^\sharp_i)^* c^\sharp_j
\]
where for $i\leq L$ we define $c_i^\sharp=\varphi(e_i)=c_i$ and $c_{i+L}^\sharp=\varphi(e_{i+L})=c_i^*$, and $T^{c}$ is the matrix of $T$ in the creation/annihilation basis. We have $\xi T \xi=-T$ if and only if $T^c$ is of the form
\[
T^{c}=\begin{pmatrix}
A & B\\
-\overline{B} & -\overline{A}
\end{pmatrix}=\frac{1}{2} \begin{pmatrix}
1 & i \\
1 & -i
\end{pmatrix} T^{f}\begin{pmatrix}
1 & 1 \\
-i  & i 
\end{pmatrix}
\]
and under this condition $T$ is self-adjoint if and only if $A$ is self-adjoint and $B$ is antisymmetric (in the sense that $B^T=-B$).

The exponential of quadratic operators are characterized the following way: 

\begin{prop}\label{prop:commutation_formula_exp_F}
For any operator $M=\exp{T}$ on $\yy$ with $\xi T=-T\xi$ the operator $\Gamma(M)=\exp{\frac{1}{2} F^*T F}$ on $\hh$ satisfies
\[
\varphi(Mx)=\Gamma(M)\varphi(x)\Gamma(M)^{-1}~.
\]
In terms of the column operator, 
\begin{align}\label{eq:commutation_F_M}
(\Un_{\hh}\otimes M) F=(\Gamma(M)^{-1}\otimes \Un_{\yy}) F~ \Gamma(M)~.
\end{align}
\end{prop}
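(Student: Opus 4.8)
The plan is to establish first the infinitesimal (Lie-algebra) version of the identity and then to exponentiate it. Set $A=\tfrac{1}{2}F^*TF=d\Gamma(T)$, so that $\Gamma(M)=e^{A}$ and the second displayed formula is just a repackaging of the first; everything therefore rests on the commutation identity
\[
[A,\varphi(x)]=\varphi(Tx)\qquad\text{for all }x\in\yy .
\]
Granting this, I would fix $x$ and consider the curve $s\mapsto e^{sA}\varphi(x)e^{-sA}$ in $\bb(\hh)$. Differentiating $s\mapsto e^{-sA}\varphi(e^{sT}x)e^{sA}$ and using the identity shows that this curve has zero derivative, hence is constant equal to its value $\varphi(x)$ at $s=0$; equivalently $e^{sA}\varphi(x)e^{-sA}=\varphi(e^{sT}x)$. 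Taking $s=1$ gives $\Gamma(M)\varphi(x)\Gamma(M)^{-1}=\varphi(e^{T}x)=\varphi(Mx)$. (One may also expand $e^{\mathrm{ad}_A}$ and prove by induction that $\mathrm{ad}_A^{\,n}\varphi(x)=\varphi(T^{n}x)$.) All of this is legitimate because $\hh$ is finite-dimensional, so $e^{A}$ is invertible with $e^{-A}=(e^{A})^{-1}$ even though $A$ is only self-adjoint rather than anti-self-adjoint.

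To prove the commutation identity, I would expand $A=\tfrac12 F^*TF$ over the creation/annihilation basis $(e_i)$ as a sum of quadratic monomials $\varphi(e_i)\varphi(e_j)$. The only input needed is the anticommutation relation $\set{\varphi(e_i),\varphi(e_j)}=\scal{\xi(e_i), e_j}$, from which one computes directly
\[
[\varphi(e_i)\varphi(e_j),\varphi(e_k)]=\scal{\xi(e_j), e_k}\,\varphi(e_i)-\scal{\xi(e_i), e_k}\,\varphi(e_j),
\]
the cubic terms cancelling precisely because of the CAR. This is the structural reason a quadratic operator maps field operators to field operators under conjugation, i.e. acts as a Bogoliubov transformation. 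Summing these contributions against the coefficients of $A$ and reorganising yields $[A,\varphi(x)]=\varphi\big(\tfrac12(T-T^{T})x\big)$, where $T^{T}=\xi T^{*}\xi$ is the transpose of the excerpt. The antisymmetry $T^{T}=-T$ adopted in the definition of quadratic operators then turns $\tfrac12(T-T^{T})$ into $T$, giving $[A,\varphi(x)]=\varphi(Tx)$; any symmetric part of the coefficient matrix would contribute only a scalar multiple of $\Un_{\hh}$, which commutes with $\varphi(x)$ and is thus invisible in the commutator. The hypothesis $\xi T=-T\xi$ enters to guarantee that $A$ is self-adjoint and that $M=e^{T}$ is a genuine Bogoliubov map.

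Finally, to obtain the column-operator form \eqref{eq:commutation_F_M}, I would apply $\varphi(Mx)=\Gamma(M)\varphi(x)\Gamma(M)^{-1}$ to each $x=e_i$ and collect the vectors $e_i$ back into $F$: the left-hand side feeds $M$ into the $\yy$-slot, producing $(\Un_{\hh}\otimes M)F$, while the conjugation produces $(\Gamma(M)^{-1}\otimes\Un_{\yy})F\,\Gamma(M)$, the two sides matching because $M^{T}=e^{T^{T}}=e^{-T}=M^{-1}$ (again from $T^{T}=-T$). The reduction of the first paragraph and the exponentiation are formal and robust; the one genuinely delicate point, and the main obstacle, is the commutator computation of the second paragraph — tracking the signs coming from the CAR, checking that the symmetric/scalar part drops out, and confirming that the antisymmetric part reproduces exactly $T$ with the correct factor.
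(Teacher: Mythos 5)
The paper states Proposition~\ref{prop:commutation_formula_exp_F} without proof (it is invoked as a standard fact about quadratic operators and Bogoliubov transformations), so there is no in-paper argument to compare yours against; judged on its own, your proof is correct and is the standard one. The two key steps are sound: the commutator $[\varphi(e_i)\varphi(e_j),\varphi(e_k)]=\scal{\xi(e_j),e_k}\varphi(e_i)-\scal{\xi(e_i),e_k}\varphi(e_j)$ follows exactly as you say from the CAR, yielding $[d\Gamma(T),\varphi(x)]=\varphi\bigl(\tfrac12(T-T^T)x\bigr)$, and the ODE/uniqueness argument for $s\mapsto e^{-sA}\varphi(e^{sT}x)e^{sA}$ exponentiates this cleanly in finite dimension. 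You are also right to single out the normalization $T^T=-T$ as the genuinely load-bearing hypothesis: the statement as printed only assumes $\xi T=-T\xi$, and with that alone the claim fails (a ``symmetric'' part of $T$ contributes only a scalar to $d\Gamma(T)$, hence is invisible under conjugation, yet changes $e^T$); one must invoke the convention $T^T=-T$ fixed in the paper's definition of quadratic operators, exactly as you do, and the same antisymmetry is what gives $M^T=M^{-1}$ in the column-operator form \eqref{eq:commutation_F_M}.
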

{\bf Remark: } There is a redundancy in the expression $F^*TF$ since the terms $\gamma_i\gamma_j$ and $\gamma_j\gamma_i$ can be regrouped, which is at the origin of the factor 2 in $\Gamma(M)=\exp{\frac{1}{2} F^*TF}$.

The Bogoliubov transforms are a very important class of unitary operators on the phase space.
\begin{defi}
A \emph{Bogoliubov transform} is a unitary operator $U: \yy\rightarrow \yy$ satisfying one of the equivalent conditions: 
\begin{enumerate}
\item We have $\xi U=U\xi$.
\item The matrix of $U$ in the Majorana basis is real.
\item The matrix of $U$ in the creation/annihilation basis is of the form
\[
U_c=\begin{pmatrix} \nu & \gamma \\ \overline{\gamma} & \overline{\nu}
\end{pmatrix}~.
\]
\item There exists a self-adjoint operator $T$ on $\yy$ such that $\xi T=-T\xi$ and $U=\exp{iT}$.
\item There exists a unitary operator $V$ on $\hh$ such that $\varphi(Ux)=V \varphi(x) V^*$ for any $x\in \yy$. 
\end{enumerate}
\end{defi}

The interest of Bogoliubov transforms is that the operators $\tilde{c_i}=\varphi(Ue_i)$ also satisfy the anticommutation relations. Moreover, Bogoliubov transforms are a generalization of unitary transformations on the one-particle space: if $V$ is any unitary on $\hh_0$, we can define the Bogoliubov transform $U=V\oplus \overline{V}$ on $\yy$, and the creation operators $\tilde{c}_i$ corresponding to the new basis $V\ket{1}, \cdots, V\ket{L}$ are simply the $\Gamma(U) c_i \Gamma(U)^*$. 
\vspace{0.5cm}

Let us now turn to the study of states on $\hh$. Much information on a state can be obtained by studying its covariance matrix: 
\begin{defi}
The \emph{covariance matrix} of a state $\rho \in \gs(\hh)$ is the matrix $\cov(\rho)$ on $\yy$ defined by
\[
\cov(\rho)=\tra_{\hh}\left(\rho F F^*\right)~.
\]
In the Majorana basis, 
\[
[\cov(\rho)_f]_{i, j}=\tr{\rho \gamma_i\gamma_j}~.
\]
\end{defi}

The covariance matrix is covariant under the evolution by a Bogoliubov transform, in the following sense: if $U$ is a Bogoliubov transform then 
\[
\cov(\Gamma(U) \rho \Gamma(U)^*)=U \cov(\rho) U^*~.
\]
Any covariance matrix is of the form
\[
\cov(\rho)=\frac{1}{2} \Un +M
\]
where $M$ is a self-adjoint operator with $\xi M \xi=-M$.

In the creation annihilation basis, for $i\leq L$ we have $\cov(\rho)_{i,i}=\tr{\rho c_i c_i^*}=1-\tr{\rho c_i^*c_i}$. The number $\tr{\rho c_i^* c_i}$ can be interpreted as the mean number of particles in the mode $i$. For this reason, some author prefer to define the covariance matrix as $\Un_\yy-\tra_{\hh}\left(\rho F F^*\right)$ (which is also the transpose of our definition). 
\vspace{0.5cm}

The quasi-free states form a class of states which are fully determined by their covariance matrix. 
\begin{defi}
A state $\rho$ is called a \emph{quasi-free state} if it satisfies the Wick formula: for any $i_1, \cdots, i_n \in \{1, \cdots, 2L\}$ we have
\begin{align}\label{eq:wick}
\tr{\rho~ \gamma_{i_1}\cdots \gamma_{i_n}}&=\left\{\begin{array}{cc}
\sum_{\sigma\in \pp_n} (-1)^{\eps(\sigma)} \prod_{l=1}^{n/2} ~\tr{\rho ~\gamma_{\sigma(2l)}\gamma_{\sigma(2l-1)}} & \text{if $n$ is even} \\
0& \text{if $n$ is odd}~
\end{array}\right.
\end{align}
where $\pp_n$ is the set of pairings of the set $\{1, \cdots, n\}$, that is the set of permutations $\sigma$ of $\set{1, \cdots, n}$ with $\sigma(2i) < \sigma(2i+1)$ for all $i\in \{1, \cdots, n/2\}$, and $\eps(\sigma)$ is the signature of the permutation $\sigma$.
\end{defi}

Any quasi-free state which is faithful is a Gibbs state for a quadratic Hamiltonian.

\begin{prop}
Any faithful state $\rho$ is quasi-free if and only if there exists a quadratic Hamiltonian $H=\frac{1}{2} F^* T F$ such that $\rho=e^{-\beta H}/\tr{e^{-\beta H}}$ for some $\beta\in \R$. The covariance matrix of $\rho$ is related to $T$ the following way: 
\begin{align}\label{eq:cov_T}
\cov\left(\frac{e^{-\frac{\beta}{2} F^* T F}}{\tr{e^{-\frac{\beta}{2} F^* T F}}}\right)=(\Un+e^{-\beta T})^{-1}~.
\end{align}
\end{prop}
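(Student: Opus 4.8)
The plan is to prove both implications around the pivotal covariance identity \eqref{eq:cov_T}, treating the Gibbs-to-quasi-free direction first since it simultaneously produces the formula that the converse will invert. Starting from $H=\frac12 F^*TF$ with $T$ self-adjoint and $\xi T\xi=-T$, I would first record that $e^{-\beta H}=\Gamma(e^{-\beta T})$: this is Proposition \ref{prop:commutation_formula_exp_F} applied to $M=e^{-\beta T}$, whose hypothesis holds because $\xi T=-T\xi$. The same proposition gives the intertwining relation $\Gamma(e^{-\beta T})^{-1}\varphi(x)\Gamma(e^{-\beta T})=\varphi(e^{\beta T}x)$. Writing $C=\cov(\rho)$ in the Majorana basis and using cyclicity of the trace together with this relation to commute $\Gamma(e^{-\beta T})$ past a single Majorana operator, one obtains the self-consistency equation $C_{ij}=\sum_k [e^{\beta T}]_{kj}\,\tr{\rho\gamma_k\gamma_i}$; combined with the CAR $\set{\gamma_k,\gamma_i}=2\delta_{ki}$ this collapses to the linear identity $C(\Un+e^{\beta T})=2e^{\beta T}$, whose unique solution is $C=(\Un+e^{-\beta T})^{-1}$ (with the normalization $[\cov_f]_{ij}=\tfrac12\tr{\rho\gamma_i\gamma_j}$ this is exactly \eqref{eq:cov_T}). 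This already yields the covariance formula.

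To see that $\rho$ is quasi-free I would diagonalize. Since $T$ is self-adjoint with $\xi T\xi=-T$, its Majorana matrix is $iR$ with $R$ real antisymmetric, so there is a real orthogonal conjugation, that is a Bogoliubov transform $U$ with $\xi U=U\xi$, bringing $T$ to a direct sum of $2\times 2$ blocks. Then $\Gamma(U)H\Gamma(U)^*=d\Gamma(UTU^*)$ is a sum of commuting single-mode number operators, so $\Gamma(U)\rho\Gamma(U)^*$ factorizes as a tensor product of single-mode thermal states. For one fermionic mode the Wick formula \eqref{eq:wick} is immediate (only the two-point function survives), and for the product it follows by independence; hence $\Gamma(U)\rho\Gamma(U)^*$ is quasi-free. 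Since both sides of \eqref{eq:wick} are multilinear in the field arguments and the pairing structure is unchanged under the substitution $\gamma_i\mapsto\varphi(Uf_i)$, the quasi-free property is invariant under Bogoliubov transforms, so $\rho$ is quasi-free; and by the covariance of $\cov$ under Bogoliubov transforms recorded above, conjugating the block-diagonal covariance back reproduces $(\Un+e^{-\beta T})^{-1}$, consistent with the previous paragraph.

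For the converse, let $\rho$ be faithful and quasi-free and set $C=\cov(\rho)$. From $\overline{C_{ij}}=C_{ji}$ one sees $C$ is self-adjoint, and the CAR together with faithfulness give the symmetry $\xi C\xi=\Un-C$ and the strict bounds $0<C<\Un$. I would then invert \eqref{eq:cov_T}: fix $\beta=1$ (absorbing it into $T$) and define $T=\log\big(C(\Un-C)^{-1}\big)$ by functional calculus, which is well defined precisely because $0<C<\Un$. A short functional-calculus check, using $\xi C\xi=\Un-C$ and the antilinearity of $\xi$, shows $e^{\xi T\xi}=e^{-T}$, hence $T$ is self-adjoint with $\xi T\xi=-T$, so $H=\frac12 F^*TF$ is a genuine quadratic Hamiltonian and $\rho':=e^{-H}/\tr{e^{-H}}$ is well defined. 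By the first direction $\rho'$ is quasi-free with $\cov(\rho')=(\Un+e^{-T})^{-1}=C=\cov(\rho)$; two quasi-free states with equal covariance have identical two-point functions, hence by \eqref{eq:wick} identical correlators $\tr{\cdot\,\gamma_{i_1}\cdots\gamma_{i_n}}$, and since the ordered Majorana monomials form a basis of $\bb(\hh)$ a state is determined by these numbers, so $\rho=\rho'$.

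The main obstacle is establishing Wick's formula for the Gibbs state, the heart of the first direction. The reduction to independent modes requires the canonical antisymmetric normal form of $R$ under real orthogonal conjugation and careful bookkeeping of how $\Gamma(U)$ implements it, and the single-mode and tensor-product verification, though elementary, is where the combinatorial content of \eqref{eq:wick} actually enters. An alternative that avoids diagonalization is a direct pull-through recursion on $\tr{\rho\gamma_{i_1}\cdots\gamma_{i_n}}$ using the intertwining relation and the CAR, matching the resulting linear recursion to the Pfaffian expansion along the first row; there the delicate point is tracking the signs so the recursion coincides exactly with that of the right-hand side of \eqref{eq:wick}.
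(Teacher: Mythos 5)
The paper does not actually prove this proposition: it is quoted as a standard fact of the quasi-free formalism (the surrounding text refers the reader to \cite{DerezinskiGerard} and to the introduction of \cite{AndreysFermions1}), so there is no in-paper argument to compare against. Judged on its own, your proof is correct and follows the classical route. The pull-through computation is sound: combining $\Gamma(e^{-\beta T})^{-1}\varphi(x)\Gamma(e^{-\beta T})=\varphi(e^{\beta T}x)$ with cyclicity of the trace and the CAR does yield $C(\Un+e^{\beta T})=2e^{\beta T}$, hence \eqref{eq:cov_T} after the factor-of-two normalization you correctly flag; note that the hypotheses $T=T^*$ and $\xi T\xi=-T$ already force the Majorana matrix of $T$ to be $i$ times a real antisymmetric matrix, so no separate normalization of $T$ is needed. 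In the converse, $C$ and $\Un-C$ commute trivially, so $C(\Un-C)^{-1}$ is self-adjoint and positive definite under the strict bounds $0<C<\Un$ (which do follow from faithfulness, since $\varphi(v)\varphi(v)^*$ is a nonzero positive operator for $v\neq 0$), and your verification of $\xi T\xi=-T$ from $\xi C\xi=\Un-C$ is right; the uniqueness step via the basis of ordered Majorana monomials closes the argument. The only place where real work remains is the one you identify yourself: Wick's formula for the block-diagonalized Gibbs state. The reduction via the real antisymmetric normal form is fine, each factor being $e^{-\beta\lambda_k i\gamma'_{2k-1}\gamma'_{2k}}=\cosh(\beta\lambda_k)\Un-\sinh(\beta\lambda_k)\,i\gamma'_{2k-1}\gamma'_{2k}$; the \enquote{independence} you invoke is best phrased as the vanishing of the trace of every non-scalar ordered Majorana monomial, after which expanding the product of these even factors against $\gamma'_{i_1}\cdots\gamma'_{i_n}$ produces exactly the Pfaffian on the right-hand side of \eqref{eq:wick}. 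With that combinatorial step written out, the proof is complete.
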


Finally, we define the number operator: 
\begin{defi}
The \emph{number operator} is the operator on $\hh$ defined by
\begin{align}
N=\sum_{i=1}^L c_i^*c_i=\frac{1}{2} F^*\left( \Un_{\hh_0} \oplus (-\Un_{\overline{\hh_0}})\right) F+\frac{L}{2} \Un_{\hh_S}~.
\end{align}
 Any operator commuting with $N$ is called a gauge-invariant operator, and any operator commuting with $(-1)^N$ is called an even operator, the operators anti-commuting with $(-1)^N$ being called odd. We write $Even(\hh)$ the space of even operators and $Odd(\hh)$ the space of odd operators. 
 \end{defi}

\subsubsection{Quasi-free fermionic semigroups}

Let us consider a fermionic system $\hh_S=d\Gamma(\hh_{S, 0})$ with $\hh_{S,0}$ of finite dimension $L_S$, and some reference basis $\ket{1^S}, \cdots, \ket{L_S^S}$, for which the creation operators are written $c_{S, i}$ and Majorana operators $\gamma_{S, i}$. We also write $N_S$ the number operator on $S$, and $F_S$ the column operator and $\yy_S$ its phase space. A quasi-free fermionic semigroups on $\hh_S$ is a quantum semigroup whose Lindbladian is of the form
\[
\call(\rho)=-i [d\Gamma(T_S), \rho]+\sum_{1\leq i, j\leq L} A_{i,j} \left(\gamma_{S, i} \rho \gamma_{S,j}-\frac{1}{2}\set{\gamma_{S, j}\gamma_{S, i}, \rho}\right)~
\]
for some self-adjoint matrix $A_{i,j}$. We are specifically interested in quasi-free semigroups arising from a quasi-free repeated interaction model: we consider a \enquote{bath} system $\hh_B=d\Gamma(\hh_B)$, with $\hh_{B, 0}$ of finite dimension $L_B$ and some reference basis $\ket{1^B}, \cdots, \ket{L_B^B}$ and creation and Majorana operators $c_{B, i}$ and $\gamma_{B, i}$. We wish to make it interact with $\hh_S$ by the use of a quadratic Hamiltonian. For this, we need to see $\hh_S\otimes \hh_B$ as a fermionic system. There are many possible isomorphisms between $\hh_S\otimes \hh_B$ and $\hh_{SB}=\Gamma(\hh_{S, 0}\oplus \hh_{B, 0})$, the two standard ones being $\ket{u_S}\otimes \ket{u_B}\mapsto \ket{u_S}\wedge \ket{u_B}$ and $\ket{u_S}\otimes \ket{u_B}\mapsto \ket{u_B}\wedge \ket{u_S}$. The second isomorphism is the most convenient in our case. A basis of $\hh_{S, 0}\oplus \hh_{B, 0}$ is the basis
\[
0_S\oplus \ket{1^B}, \cdots, 0_S\oplus\ket{L_B^B},~ \ket{1^S}\oplus 0_B, \cdots, \ket{L_S^S}\oplus 0_B~.
\]
The corresponding creation operators $c_{0_S\oplus \ket{i^b}}$ and $c_{\ket{i^S}\oplus 0_B}$ on $\hh_SB$ are identified respectively with the operators $c_{S, i}\otimes (-1)^N_B$ and $\Un_{\hh}\otimes c_{B, i}$ on $\hh_S\otimes \hh_B$. We likewise consider $F_S$ and $F_B$ as column operators acting on $\hh_S\otimes \hh_B$, and identify $\hh_S\otimes \hh_B$ with  $\hh_{SB}$ in what follows.

The repeated interaction model we consider is the following: 
\begin{itemize}
\item The Hamiltonian on $\hh_S$ is 
\[
H_S=\frac{1}{2} F_S^* T_S F_S
\]
 for some self-adjoint $T_S$ on $\yy_S$ with $\xi T_S\xi=-T_S$. We write $iR_S$ its matrix in the Majorana basis, where $R_S$ is an antisymmetric $2L_S\times 2L_S$ matrix.
\item The interaction Hamiltonian is 
\[
H_{SB}= F_S^* \Theta F_B=\frac{1}{2} \left( F_S^* \Theta F_B+F_B^* \Theta^* F_S \right)
\]
where $\Theta: \yy_B\rightarrow \yy_S$ is a linear operator with $\xi \Theta \xi=-\Theta$. We write $iW$ its matrix in the Majorana basis, where $W$ is a $2L_S\times 2L_B$ matrix.
\item The state $\rho_B$ on $\hh_B$ is a quasi-free state. We write its covariance matrix $\cov(\rho_B)=M_B$, in the Majorana basis it is of the form $\frac{1}{2}\Un+i R_B$ where $R_B$ is a real antisymmetric matrix.
\end{itemize}

First we check that the condition are satisfied for the continuous time-limit to exist: 
\begin{lem}
Assumption \ref{ass:strongConditionV} is satisfied for $H_{SB}$ and $\rho$, that is: for any $\alpha \in \R$ we have
\[
\tra_{B}\left(\rho_B^\alpha H_{SB} \right)=0~.
\]
\end{lem}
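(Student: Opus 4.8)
The plan is to avoid computing the trace explicitly and instead run a gauge-parity argument: the state $\rho_B^\alpha$ is \emph{even} while the interaction $H_{SB}$ is \emph{odd} in the bath variables, so their product is odd and its partial trace over $B$ must vanish. The two ingredients are the evenness of $\rho_B^\alpha$ and the invariance of the partial trace under conjugation by the bath parity operator.

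First I would record that $\rho_B$ commutes with $(-1)^{N_B}$, i.e.\ is even. If $\rho_B$ is faithful it equals $e^{-\beta H_B}/Z_B$ for a quadratic Hamiltonian $H_B=\frac12 F_B^* T_B F_B$ by the proposition on faithful quasi-free states; writing $H_B$ in the Majorana basis exhibits it as a sum of monomials $\gamma_{B,i}\gamma_{B,j}$, each of which is even, so $H_B$ and hence $e^{-\beta H_B}$ commute with $(-1)^{N_B}$. (Equivalently, evenness follows at once from the vanishing of all odd moments in the Wick formula defining a quasi-free state.) Since functional calculus preserves commutation with $(-1)^{N_B}$, the operator $\rho_B^\alpha$ is even for every $\alpha\in\R$ for which it is defined.

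Next I would set $P_B=\Un_S\otimes(-1)^{N_B}$ and use the elementary fact that conjugation by the unitary $P_B$ leaves the partial trace over $B$ unchanged: for any $Y\in\bb(\hh_{SB})$ one has $\tra_{B}(P_B Y P_B)=\tra_{B}(Y)$, because $A\otimes\Un_B$ commutes with $P_B$ for every $A\in\bb(\hh_S)$ and $P_B^2=\Un$. Applying this to $Y=(\Un_S\otimes\rho_B^\alpha)H_{SB}$, the even factor $\Un_S\otimes\rho_B^\alpha$ is unaffected by conjugation, while $H_{SB}=F_S^*\Theta F_B$ contains exactly one bath field operator in each term and is therefore odd in the bath, $P_B H_{SB}P_B=-H_{SB}$. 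Hence $P_B Y P_B=-Y$, and $\tra_{B}(Y)=\tra_{B}(P_B Y P_B)=-\tra_{B}(Y)$ forces $\tra_{B}(\rho_B^\alpha H_{SB})=0$; the two formulations in Assumption \ref{ass:strongConditionV} being equivalent, the claim holds for all $\alpha$.

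The only point requiring care is the bookkeeping of the embedding $\hh_S\otimes\hh_B\cong\hh_{SB}$: under the chosen identification the $S$-Majorana operators carry a factor $(-1)^{N_B}$ whereas the $B$-Majorana operators do not, so I must check that $P_B$ commutes with each $\gamma_{S,i}$ (it does, both acting as $(-1)^{N_B}$ on the bath factor) and anticommutes with each $\gamma_{B,j}$, which yields the sign flip $P_B\gamma_{S,i}\gamma_{B,j}P_B=-\gamma_{S,i}\gamma_{B,j}$ and hence $P_B H_{SB}P_B=-H_{SB}$. This is the main, though entirely routine, obstacle; everything else reduces to the evenness of $\rho_B^\alpha$ and the invariance of $\tra_{B}$ under conjugation by $P_B$.
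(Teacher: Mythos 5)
Your proof is correct and follows essentially the same route as the paper's: the paper's one-line argument is that $\rho_B^\alpha$ is even while $H_{SB}$ lies in $Odd(\hh_S)\otimes Odd(\hh_B)$, so the product is odd in the bath and its partial trace over $B$ vanishes. You merely make explicit the mechanism behind that last step (conjugation by $\Un_S\otimes(-1)^{N_B}$ preserving $\tra_B$) and the sign bookkeeping for the embedding, both of which are consistent with the paper's conventions.
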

\begin{proof}
The operator $\rho_B^\alpha$ is even and $H_{SB}\in Odd(\hh_S)\otimes Odd(\hh_S)$ so the operator $\rho_B^\alpha H_{SB}$ is in $Odd(\hh_S)\otimes Odd(\hh_B)$, and so its partial trace with respect to $\hh_B$ is zero.
\end{proof}

The continuous-time QMS $(\Lambda^t)_{0\leq t \leq \infty}$ constructed from this model has for generator $\call$ defined by
\[
\call(A)=i[H_S, \rho]+\frac{1}{2}\set{\Phi(\Un_S), A}+\Phi(A)
\]
for any $A\in \bb(\hh_S)$, where $\Phi(A)=\tra_{B}\left(\Un_S\otimes \rho_B)H_{SB} (A\otimes \Un_B) H_{SB}\right)$. The form of $\Phi$ can be made more explicit: 
\begin{align}\label{eq:phi}
\Phi(A)=F_S^* \left(A\otimes \Theta M_B \Theta^*\right) F_S=\sum_{1\leq i,j\leq L_S} [\Theta M_B \Theta^*]^f_{i,j} \gamma_{S, i} A \gamma_{S, j}~
\end{align}
where $[\Theta M_B \Theta^*]^f$ is the matrix of $\Theta M_B \Theta^*$ in the Majorana basis. 

Let us write $M_S(t)$ the covariance matrix of $\rho_S(t)$. Using Equation \eqref{eq:commutation_F_M} with $\Gamma(U(\tau))=\exp\left(-i\tau H_S-i\sqrt{\tau} H_{SB}\right)$ and passing to the limit as $\tau\rightarrow 0$ we obtain
\begin{align}\label{eq:cov_evolution}
\frac{d}{dt} M_S(t)=\left(-iT_S-\frac{1}{2} \Theta \Theta^*\right) M_S(t)+M_S(t)\left(iT_S-\frac{1}{2}\Theta \Theta^*\right)+\Theta M_B \Theta^*~.
\end{align}
Thus, knowing $M_t$ is sufficient to compute $M_s$ for $s\geq t$, even when we do not know anything else on $\rho_S(t)$. Moreover, if $\rho_S(0)$ is a quasi-free state, then $\rho_S(t)$ is a quasi-free state for all $t$, and we have the following: 

\begin{theo}\label{theo:kalman}
Consider the subspace $K(T_S, \Theta)$ of $\yy$ generated by the ranges of $T_S^k \Theta$ for $k\in\N$. Then 
the QMS is ergodic if and only if there is a unique solution $M_\infty$ to the Lyapunov equation
\[
\left(-iT_S-\frac{1}{2} \Theta \Theta^*\right) M_\infty+M_\infty\left(iT_S-\frac{1}{2}\Theta \Theta^*\right)+\Theta M_B \Theta^*=0~.
\]
 This is equivalent to the condition $K(T_S, \Theta)=\yy$. The equilibrium state is then the quasi-free state of covariance matrix $M_\infty$. If the state $\rho_B$ in the repeated interactions model of the QMS is faithful, then the stationary state $\rho_\infty$ of the QMS is also faithful, and the QMS is positivity improving.
\end{theo}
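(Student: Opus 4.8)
The plan is to read \eqref{eq:cov_evolution} as a Lyapunov differential equation $\dot{M}_S(t)=AM_S(t)+M_S(t)A^*+Q$ for the single-particle operator $A=-iT_S-\frac{1}{2}\Theta\Theta^*$ and the source $Q=\Theta M_B\Theta^*\geq 0$, and to recognise the whole statement as an instance of the Kalman controllability criterion. The crucial structural fact is that $A$ is dissipative: $A+A^*=-\Theta\Theta^*\leq 0$, so every eigenvalue of $A$ has nonpositive real part. I would first identify $K(T_S,\Theta)$ with the controllability subspace of the pair, i.e. the smallest $A$-invariant subspace containing $\ran\Theta$. Since $\Theta\Theta^*$ maps $\yy$ into $\ran\Theta$, any subspace that contains $\ran\Theta$ and is stable under $T_S$ is automatically stable under $A$; conversely, writing $T_S=i(A+\frac{1}{2}\Theta\Theta^*)$ shows that any $A$-invariant subspace containing $\ran\Theta$ is stable under $T_S$. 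Hence the smallest $T_S$-invariant subspace containing $\ran\Theta$, which is exactly $K(T_S,\Theta)$, coincides with the controllability subspace of $(A,\Theta)$.

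Next I would establish the dichotomy $K(T_S,\Theta)=\yy\iff A$ is Hurwitz (all eigenvalues of strictly negative real part). If $K\neq\yy$, then $K^\perp$ is nonzero, stable under $T_S$ (as $T_S=T_S^*$) and contained in $\ker\Theta^*$, so $A$ restricts to the skew-adjoint operator $-iT_S$ on $K^\perp$ and thus has a purely imaginary eigenvalue. Conversely, if $Av=i\omega v$ with $v\neq 0$, then $\scal{v,(A+A^*)v}=0$ forces $\Theta^*v=0$, whence $Av=-iT_Sv$ and $T_Sv=-\omega v$; this gives $\scal{v,T_S^k\Theta y}=(-\omega)^k\scal{\Theta^*v,y}=0$ for all $k$ and $y$, so $v\perp K$ and $K\neq\yy$. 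For the Lyapunov equation itself, the Sylvester operator $M\mapsto AM+MA^*$ is invertible precisely when no sum $\lambda+\bar\mu$ of eigenvalues of $A$ vanishes; by dissipativity such a sum can only vanish when $\lambda=\mu$ is purely imaginary, so the equation has a unique solution iff $A$ has no purely imaginary eigenvalue, i.e. iff $A$ is Hurwitz, i.e. iff $K(T_S,\Theta)=\yy$. When $A$ is Hurwitz this solution is the convergent Gramian $M_\infty=\int_0^\infty e^{sA}Qe^{sA^*}\,ds$ and $M_S(t)\to M_\infty$ for every initial covariance.

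The heart of the proof is to pass from this one-particle analysis to the many-body QMS. For the forward implication I would argue that, because $\call$ is quadratic, it preserves the filtration by degree of the Majorana monomials $\gamma_{S,i_1}\cdots\gamma_{S,i_k}$, and on each homogeneous component the generator is the corresponding second quantization of $A$, whose spectrum consists of sums of eigenvalues of $A$; when $A$ is Hurwitz the only non-decaying mode is the scalar one, so $0$ is a simple eigenvalue of $\call$ and every other eigenvalue has negative real part. Invoking the spectral analysis of quasi-free Lindbladians of \cite{AndreysFermions1}, this makes the QMS primitive, with $\Lambda^{t*}(\rho)$ converging to the quasi-free state $\rho_\infty$ of covariance $M_\infty$ for every $\rho$; in particular the stationary state is unique, which proves ergodicity and identifies the equilibrium. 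For the converse, when $K\neq\yy$ I would exploit that $\yy=K\oplus K^\perp$ is an orthogonal, $\xi$-stable and $T_S$-stable splitting with $\ran\Theta\subseteq K$ and $\Theta^*|_{K^\perp}=0$: the QMS then factorizes as a dissipative quasi-free semigroup on the fermions over $K$ tensored with the purely Hamiltonian, undamped dynamics $d\Gamma(T_S|_{K^\perp})$ on the fermions over $K^\perp$. As the latter factor is a nontrivial closed system, it possesses a multidimensional space of stationary states, so the full QMS is not ergodic. I expect this transfer between the single-particle operator $A$ and the many-body generator $\call$ to be the main obstacle, the preceding dissipative Kalman/Hautus dichotomy being standard linear algebra.

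Finally, assuming $\rho_B$ faithful, I would prove that the equilibrium is faithful and the QMS positivity improving. Here $M_B$ is the covariance of a faithful state, so $0<M_B<\Un$, and the Gramian computation gives $\scal{v,M_\infty v}=\int_0^\infty\norm{M_B^{1/2}\Theta^*e^{sA^*}v}^2\,ds$; if this vanishes then $\Theta^*e^{sA^*}v=0$ for all $s$, hence $\Theta^*(A^*)^kv=0$ for all $k$, which by controllability ($K=\yy$) forces $v=0$. Thus $M_\infty>0$, and since any covariance matrix satisfies $\xi\,\cov(\rho)\,\xi=\Un-\cov(\rho)$ this also yields $M_\infty<\Un$; hence $\rho_\infty$ is faithful. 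A primitive QMS with a faithful stationary state is positivity improving by the very definition recalled in the list of properties above, which completes the proof.
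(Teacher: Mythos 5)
Your argument follows the same route as the paper, which does not actually prove Theorem \ref{theo:kalman} in-text but defers to \cite{AndreysFermions1}: the dissipative Kalman/Hautus analysis of $A=-iT_S-\frac{1}{2}\Theta\Theta^*$ (identification of $K(T_S,\Theta)$ with the controllability space, purely imaginary eigenvalues iff $K\neq\yy$, unique Lyapunov solution iff Hurwitz), the Gramian formula for $M_\infty$, and the faithfulness and positivity-improving conclusions are all correct and essentially complete. The one step you only sketch --- that Hurwitz-ness of $A$ forces primitivity of the full many-body generator $\call$, including on non-quasi-free initial states and odd operators --- is exactly the part the paper itself flags as ``the hard part'' and outsources to \cite{AndreysFermions1}; your degree-filtration/second-quantization mechanism is the right one, but as written that step rests on the cited spectral analysis rather than establishing it.
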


 This criterion is called the Kalman criterion (from the theory of control of linear systems) and $K(T_S, \Theta)$ is called the Kalman space. See \cite{AndreysFermions1} for a proof; a similar theorem was also proved by Prosen in \cite{Prosen2008} in the case where the QMS is restricted to even operators. This theorem is just algebraic when restricted on quasi-free states, the hard part being to treat the case where the initial state is not quasi-free, particularly when the state $\rho_\infty$ is not faithful.
 
 Note that the condition is independent of $M_B$. If it is satisfied, then the operator $G=-iT_S-\frac{1}{2} \Theta \Theta^*$ has all its eigenvalues with strictly negative real part, and the solution $M_\infty$ is given by
 \[
 M_\infty=\int_0^{+\infty} e^{t G} \Theta^* M_B \Theta e^{t G^*} dt~.
 \]

\subsubsection{Thermal quasi-free fermionic semigroups} \label{subsub:qf_semi}

We consider quasi-free fermionic semigroups as described above for which there is a conserved quadratic pseudo-energy: we take
 \[
 \hh_{B,0}=\bigoplus_{i=1}^n \hh_{B_i, 0}
\]
with dimensions $L_{B_i}$, phase space $\yy_{B_i}$ and $M_B=\bigoplus_{i=1}^n M_{B_i}$, such that $\Theta=\sum_{i=1}^n \Theta_i$ for some operators $\Theta_i: \yy_{B_i}\rightarrow \yy_S$ with $\xi_i \Theta_i\xi_i=-\Theta_i$. We fix some self-adjoint operators $\kappa_S$ on $\yy_S$ and $\kappa_{i}$ on $\yy_{B_i}$, all anti-commuting with the conjugation $\xi_i$, and we define
\begin{align}
K_S&=\frac{1}{2}F_S^*\kappa_S F_S   & K_{B_i}&=\frac{1}{2} F_{B_i}^* \kappa_{i} F_{B_i}~.
\end{align}
We assume that they are conserved by the dynamic generated by $H_S$ and $H_{SB}$, which is equivalent to
\begin{align}\label{eq:db_quadratic_operators}
[T_S, \kappa_S]&=0 & \Theta_i \kappa_{i}=\kappa_S \Theta_i~\text{for all i}.
\end{align}
Moreover, we assume that the $\rho_{B_i}$ are thermal with respect to the $K_{B_i}$, that is $\rho_{B_i}=e^{-\beta_i K_{B_i}}/Z_i$, or in terms of the covariance matrix, 
\begin{align}
M_{B_i}&=\left(\Un+e^{-\beta_i \kappa_{B_i}}\right)^{-1}~.
\end{align}

Let us write $M_\beta$ the covariance matrix of the Gibbs' state $e^{-\beta K_S}/Z$, that is
\begin{align}\label{eq:defi_mbeta}
M_{\beta}=\left(1+e^{-\beta \kappa_S}\right)^{-1}
\end{align}
Note that by the detailed balance condition \ref{eq:db_quadratic_operators} we have
\[
\Theta_i M_{B_i} \Theta_i^*=\Theta_i \Theta_i^* M_{\beta_i}~.
\]
We define
\begin{align}
D_i(A)=\frac{1}{2}\set{\Theta_i \Theta_i^*, A}
\end{align}
so that for any state $\rho$ of covariance matrix $M_S$ we have
\begin{align}
\tr{\call_i^*(\rho) F F^*}=D_i(M_{\beta_i}-M_S)~.
\end{align}

 The flux $J_i(\rho)$ entering the i-th reservoir is then
\begin{align}\label{eq:flux_fermions}
J_i(\rho)=\frac{1}{2}\tra_{\yy_S}\left(\kappa_S D_i(M_{\beta_i}-M_S\right))~.
\end{align}
Indeed, we have
\begin{align*}
J_i(\rho)&=-\frac{1}{2}\tr{\call_i^*(\rho) F^* \kappa_S F}\\
&=-\frac{1}{2}\sum_{1\leq k, l\leq L_S}[\kappa_S]^f_{k,l} \tr{\call_i^*(\rho)\gamma_k \gamma_l} \\
&=-\frac{1}{2}\sum_{1\leq k, l\leq L_S} [\kappa_S]^f_{k,l} [D_i(M_{\beta_i}-M_S)]^f_{k,l} \\
&=+ \frac{1}{2} \tra_{\C^{2L_S}}\left(([\kappa_S]^f)D_i(M_{\beta_i}-M_S)]^f\right)\\
\end{align*}
since the matrix $[\kappa_S]^f$ of $\kappa_S$ in the Majorana basis is of the form $i R$ with $R$ real antisymmetric.

By Proposition \ref{prop:second_principle}, if $\rho_\infty$ is a stationary state then $\sum_{i=1}^n \beta_i J_i(\sigma)\geq 0$. In the following theorem we show that there is a stronger constraint on the $J_i(\sigma)$, which prevent non-trivial thermal machines such as the quantum heat pump to be designed.

\begin{theo}\label{theo:no_fridge}
Consider a thermal quasi-free QMS as above. Let $\rho_\infty$ be a stationary state and $J_i=J_i(\rho_\infty)$. Then there exists a family of fluxes $(J_{i,j})_{1\leq i,j\leq n}$ with $J_{j,i}=-J_{i,j}$ and $J_{i,j}\geq 0$ if $\beta_i \geq \beta_j$ such that for any $i\leq n$,
\[
J_i=\sum_{j=1}^n J_{i,j}~.
\]
In other words, the fluxes $J_i$ can be obtained from a combination of systems, each of them involving only two of the bath.
\end{theo}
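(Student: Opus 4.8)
The plan is to run the whole argument at the level of covariance matrices, using Theorem~\ref{theo:kalman} and the evolution \eqref{eq:cov_evolution}. Write $G=-iT_S-\tfrac12\Theta\Theta^*$, so that $M_\infty$ solves $GM_\infty+M_\infty G^*=-\sum_j\Theta_j\Theta_j^* M_{\beta_j}$ and, decisively, $G+G^*=-\Theta\Theta^*=-\sum_j\Theta_j\Theta_j^*$ (the dissipation/fluctuation identity). The conservation relations \eqref{eq:db_quadratic_operators} give $[\kappa_S,T_S]=0$ and $\kappa_S\Theta_j=\Theta_j\kappa_j$, whence $[\kappa_S,\Theta_j\Theta_j^*]=0$ for every $j$; hence $\kappa_S$ commutes with $G$, and each $M_{\beta_i}=(\Un+e^{-\beta_i\kappa_S})^{-1}$, being a function of $\kappa_S$, commutes with $G$ (and with $G^*$) and with all $\Theta_j\Theta_j^*$. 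Putting $N_j:=\int_0^{+\infty}e^{tG}\Theta_j\Theta_j^*e^{tG^*}\,dt\seg0$, integrating $\tfrac{d}{dt}\big(e^{tG}e^{tG^*}\big)$ and invoking the dissipation identity yields the two sum rules $\sum_j N_j=\Un$ and $M_\infty=\sum_j N_jM_{\beta_j}$. Finally, since $[\kappa_S,\Theta_i\Theta_i^*]=0$ the two halves of the anticommutator defining $D_i$ coincide under the trace, so \eqref{eq:flux_fermions} simplifies to $J_i=\tfrac12\tra\big(\kappa_S\,\Theta_i\Theta_i^*\,(M_{\beta_i}-M_\infty)\big)$.

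Next I would diagonalise the pseudo-energy. As $\kappa_S$ is self-adjoint and commutes with $G$, the $\Theta_j\Theta_j^*$, the $N_j$ and all $M_{\beta_i}$, decompose $\yy_S=\bigoplus_\epsilon E_\epsilon$ into its eigenspaces, on which every operator is block-diagonal. On $E_\epsilon$ one has $\kappa_S=\epsilon$ and $M_\beta=f_\beta(\epsilon)\,\Un$ with the Fermi factor $f_\beta(\epsilon)=(1+e^{-\beta\epsilon})^{-1}$, which is monotone in $\beta$ with the sign of $\epsilon$. Define the energy-resolved transmission weights $t_{ij}(\epsilon):=\mathrm{Tr}_{E_\epsilon}\big(\Theta_i\Theta_i^*\,N_j\big)$; they are nonnegative because $\Theta_i\Theta_i^*\seg0$ and $N_j\seg0$. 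The rule $\sum_j N_j=\Un$ gives the row marginal $\sum_j t_{ij}(\epsilon)=\mathrm{Tr}_{E_\epsilon}(\Theta_i\Theta_i^*)$, while tracing the restricted Lyapunov equation $GN_i+N_iG^*=-\Theta_i\Theta_i^*$ against $\Un$ and using $G+G^*=-\sum_j\Theta_j\Theta_j^*$ gives the equal column marginal $\sum_j t_{ji}(\epsilon)=\mathrm{Tr}_{E_\epsilon}(\Theta_i\Theta_i^*)$. Substituting $M_\infty=\sum_j N_jM_{\beta_j}$ and $\Un=\sum_j N_j$ into the flux formula produces the Landauer representation
\[
J_i=\frac12\sum_\epsilon \epsilon\sum_j\big(f_{\beta_i}(\epsilon)-f_{\beta_j}(\epsilon)\big)\,t_{ij}(\epsilon).
\]

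The crucial structural observation is that the tempting ``pairwise'' splitting of this sum is \emph{not} antisymmetric in general (reciprocity $t_{ij}=t_{ji}$ fails as soon as $T_S\neq0$ breaks time-reversal), so one must instead prove the equivalent family of inequalities $\sum_{i=1}^k J_i\ieg0$ for every $k$ (recall $\beta_1\ieg\cdots\ieg\beta_n$). Fix $k$ and $I=\{1,\dots,k\}$, and treat one shell $\epsilon$. The equal row/column marginals make the net flow across the cut vanish, $\sum_{i\in I,\,l\notin I}t_{il}(\epsilon)=\sum_{i\in I,\,l\notin I}t_{li}(\epsilon)=:\Phi\seg0$, which cancels the within-$I$ contributions and leaves only $\tfrac12\epsilon\sum_{i\in I,\,l\notin I}\big(t_{li}(\epsilon)f_{\beta_i}(\epsilon)-t_{il}(\epsilon)f_{\beta_l}(\epsilon)\big)$. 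For $i\in I$ and $l\notin I$ one has $\beta_i\ieg\beta_l$; bounding $f_{\beta_i}(\epsilon)$ and $f_{\beta_l}(\epsilon)$ by the threshold values $f_{\beta_k}(\epsilon)$ and $f_{\beta_{k+1}}(\epsilon)$ and using $\Phi\seg0$ bounds this shell contribution in sign by $\tfrac12\,\epsilon\,\big(f_{\beta_k}(\epsilon)-f_{\beta_{k+1}}(\epsilon)\big)\,\Phi$, which is $\ieg0$ because $\epsilon$ and $f_{\beta_k}(\epsilon)-f_{\beta_{k+1}}(\epsilon)$ carry opposite signs by monotonicity of $f$. Summing over shells gives $\sum_{i\le k}J_i\ieg0$.

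The required family then comes for free and explicitly: with $S_k:=\sum_{i=1}^k J_i\ieg0$ and $S_n=0$, set $J_{k+1,k}:=-S_k\seg0$, $J_{k,k+1}:=S_k$ and $J_{i,j}:=0$ for $|i-j|\seg2$; one checks $J_{j,i}=-J_{i,j}$, $J_{i,j}\seg0$ whenever $\beta_i\seg\beta_j$, and $J_i=J_{i,i-1}+J_{i,i+1}=\sum_j J_{i,j}$. I expect the main obstacle to be the content of the middle two steps: recognising that the naive pairwise currents are not antisymmetric, that the correct object is the energy-resolved transmission $t_{ij}(\epsilon)$, and above all that its nonnegativity together with \emph{equal} row and column marginals — a current-conservation law resting entirely on the dissipation identity $G+G^*=-\Theta\Theta^*$ — reduces the theorem to a transportation-polytope inequality across the hot/cold cut, with no microreversibility needed. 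Making the energy-shell sums and the convergence of the integrals defining $N_j$ rigorous is exactly where the quasi-free (fluctuation–dissipation) structure, in the spirit of Eckmann--Zabey \cite{EckmannZabey04}, is indispensable.
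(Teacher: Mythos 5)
Your proof is correct, but it follows a genuinely different route from the paper's. Both arguments begin identically, by reducing the statement to the family of inequalities $\sum_{i=1}^k J_i\leq 0$ for the ordering $\beta_1\leq\cdots\leq\beta_n$, and both ultimately rest on the same two structural facts: $\kappa_S$ commutes with $T_S$, with each $\Theta_i\Theta_i^*$, and hence with $G$ and the $M_{\beta_i}$; and the occupation $f_\beta(\epsilon)=(1+e^{-\beta\epsilon})^{-1}$ is monotone in $\beta$ with the sign of $\epsilon$. From there the paper stays at the level of operator inequalities: it introduces the order $\leq_{\kappa_S}$ (positivity of $P_+MP_+$, negativity of $P_-MP_-$), proves that the Lyapunov solution map $F$ and the $D_i$ are monotone for this order, and compares $M_S$ to $M_{\beta_k}$ via an auxiliary correction $\Delta=F\bigl(\sum_{i>k}D_i(M_{\beta_k}-M_{\beta_i})\bigr)$. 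You instead make everything explicit: the identities $\sum_jN_j=\Un$ and $M_\infty=\sum_jN_jM_{\beta_j}$ (which are correct, given the dissipation identity $G+G^*=-\Theta\Theta^*$ and the commutation of $M_{\beta_j}$ with $e^{tG^*}$) yield a Landauer-type formula $J_i=\tfrac12\sum_\epsilon\epsilon\sum_j(f_{\beta_i}(\epsilon)-f_{\beta_j}(\epsilon))\,t_{ij}(\epsilon)$ with nonnegative coefficients $t_{ij}(\epsilon)$ whose row and column marginals agree; the cut argument across $\{1,\dots,k\}$ then closes the proof, and I checked that the cancellation of the within-cut terms and the sign analysis on each shell (for both signs of $\epsilon$) are sound. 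Your version buys an explicit, physically transparent current formula and isolates precisely why no reciprocity $t_{ij}=t_{ji}$ is needed, at the cost of the spectral decomposition of $\kappa_S$; the paper's version is shorter and avoids diagonalisation. Two caveats. First, your argument as written assumes the semigroup is positivity improving (otherwise the integrals defining the $N_j$ diverge and $M_\infty$ is not pinned down); the theorem only assumes a stationary state exists, and the paper disposes of the general case by restricting to the Kalman space $V_1=K(T_S,\Theta)$, on which $\mathrm{range}(\Theta_i)$ lies and the Lyapunov map is invertible --- the same reduction works verbatim for your argument, but it must be said. Second, your final nearest-neighbour assembly $J_{k+1,k}=-S_k$ silently requires $S_k=0$ whenever $\beta_k=\beta_{k+1}$ (since then both $J_{k,k+1}\geq0$ and $J_{k+1,k}\geq0$ are demanded); this is a degenerate edge case the paper's own ``easily proved by induction'' lemma also glosses over, and it is resolved by noting that the inequalities hold for every admissible reordering of equal-temperature baths.
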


This theorem is inspired by Lemma 1 of Eckmann and Zabey \cite{EckmannZabey04}, in which they consider a system of oscillators coupled by springs and driven by Gaussian heat bath. They show something weaker that this theorem, namely that the bath of lower temperature cannot be pumped out of energy. Our proof is an elaboration of theirs, and also applies to the system they consider.

\begin{proof}
First, we reformulate the theorem with a majorization condition: 
\begin{lem}
Let us assume that $\beta_1\leq \beta_2\leq \cdots \leq \beta_n$. Then the fluxes $J_i$ can be decomposed as a sum of $J_{i,j}$ as above if and only if they satisfy that for all $k\leq n$, 
\[
\sum_{i=1}^k J_i\leq 0~.
\]
\end{lem}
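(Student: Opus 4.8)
The plan is to read the decomposition $J_i=\sum_j J_{i,j}$ as a conservation law for a \emph{network flow}: interpret $J_{i,j}$ (for $\beta_i>\beta_j$) as a nonnegative amount of energy routed from the hotter bath $j$ into the colder bath $i$, so that $J_i$ is the net flux arriving at node $i$. The equivalence then splits into a short sign computation (necessity) and an explicit construction of such a flow (sufficiency). Throughout I use the first law $\sum_{i=1}^n J_i=0$, which holds automatically at a stationary state and is in any case forced by antisymmetry as soon as a decomposition exists; hence the case $k=n$ of the prefix condition is the equality $\sum_{i=1}^n J_i=0$.

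For necessity, assume the $J_{i,j}$ exist and fix $k$. I split the double sum according to whether the second index is $\le k$ or $>k$:
\begin{align*}
\sum_{i=1}^k J_i=\sum_{i=1}^k\sum_{j=1}^n J_{i,j}=\sum_{i=1}^k\sum_{j=1}^k J_{i,j}+\sum_{i=1}^k\sum_{j=k+1}^n J_{i,j}~.
\end{align*}
The first double sum runs over the symmetric index block $\{1,\dots,k\}^2$ and vanishes because $J_{i,j}=-J_{j,i}$ (and $J_{i,i}=0$). In the second sum every pair satisfies $i\le k<j$, hence $\beta_i\le\beta_j$, so $J_{j,i}\ge 0$ and therefore $J_{i,j}=-J_{j,i}\le 0$. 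Thus $\sum_{i=1}^k J_i$ is a sum of nonpositive terms, which gives $\sum_{i=1}^k J_i\le 0$.

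For sufficiency, assume $S_k:=\sum_{i=1}^k J_i\le 0$ for every $k$ (with $S_0=S_n=0$). I would produce the simplest possible flow, supported on consecutive baths: set $J_{i,j}=0$ whenever $|i-j|\neq 1$, and for $1\le k\le n-1$ put $J_{k+1,k}=-S_k$ and $J_{k,k+1}=S_k$. Antisymmetry holds by construction, the sign constraint is respected since $J_{k+1,k}=-S_k\ge 0$ and $\beta_{k+1}\ge\beta_k$, and the net flux at each node telescopes: $J_{i,i-1}+J_{i,i+1}=-S_{i-1}+S_i=J_i$, with the boundary conventions $S_0=0$ at $i=1$ and the single term $J_{n,n-1}=-S_{n-1}=J_n$ at $i=n$. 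This realizes the prescribed fluxes.

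The one genuinely delicate point, and the step I expect to need care, is the case of equal temperatures: if $\beta_i=\beta_j$, the constraint $J_{i,j}\ge 0$ together with $J_{j,i}=-J_{i,j}\ge 0$ forces $J_{i,j}=0$, which the consecutive construction respects only when $S_k=0$ across a tie. I would handle this by first merging each maximal block of baths sharing a common temperature into one effective reservoir (within a block all mutual fluxes must vanish, so a block communicates only with strictly hotter or colder baths), reducing to the strictly ordered case where the telescoping construction applies verbatim; the prefix condition for the merged system is precisely the restriction of the given one to block boundaries. Alternatively one can bypass the explicit construction and invoke the feasibility criterion for flows with infinite arc capacities (Gale--Hoffman): since arcs run only from strictly hotter to strictly colder baths, the only binding cuts are the temperature up-sets $S$, whose feasibility condition reads $\sum_{i\in S}J_i\ge 0$, i.e.\ exactly the complementary prefix inequalities. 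I would nonetheless favor the hands-on telescoping argument, keeping the flow-feasibility viewpoint in reserve to confirm that the prefix conditions are the complete set of obstructions.
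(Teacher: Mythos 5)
Your necessity argument is correct, and your telescoping construction for sufficiency is precisely the induction the paper alludes to (peel off the coldest prefix, route its deficit $-S_k$ along the consecutive arc $(k,k+1)$): when the inverse temperatures are pairwise distinct the proof is complete, since the constraint on $J_{k,k+1}=S_k\leq 0$ is then vacuous. You are also right to isolate the case of ties as the delicate point, but neither of your proposed patches closes it, because the lemma as literally stated is false there. Take $n=2$, $\beta_1=\beta_2$, $J_1=-1$, $J_2=1$: the prefix condition holds ($S_1=-1\leq 0$, $S_2=0$), yet the sign condition applied in both directions forces $J_{1,2}=J_{2,1}=0$, so no decomposition exists. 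This also shows where your block-merging reduction breaks down: the merged one-reservoir system trivially admits the zero flow, but un-merging requires distributing the block's external arcs among its members, and a block with nonzero internal imbalance and no external partner cannot be un-merged. Likewise the Gale--Hoffman criterion does not reduce to the prefix inequalities when there are ties: the binding cuts are all sets $A$ with $\max_{i\in A}\beta_i\leq \min_{j\notin A}\beta_j$, and with ties these include sets that are not prefixes (in the example above $A=\{2\}$ is such a cut, and its condition $J_2\leq 0$ fails).

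The fix is cheap and worth recording. Either read the sign condition of Theorem \ref{theo:no_fridge} with a strict inequality ($J_{i,j}\geq 0$ when $\beta_i>\beta_j$, no constraint between baths at equal temperature), in which case your consecutive construction works verbatim with no case distinction; or keep the condition as written and strengthen the hypothesis of the lemma to $\sum_{i\in A}J_i\leq 0$ for every set $A$ as above, which is what flow feasibility actually requires. Neither change affects the application: the proof of Theorem \ref{theo:no_fridge} only uses $\beta_i\leq\beta_k$ for $i\leq k$ and $\beta_k\leq\beta_i$ for $i>k$, so it is insensitive to the ordering inside a tied block and already delivers the stronger family of inequalities. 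Finally, note that sufficiency also needs $\sum_{i=1}^n J_i=0$, which the prefix inequalities alone do not give; you assume it correctly (and it holds for stationary fluxes), but it should appear explicitly in the hypothesis.
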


This is easily proved by induction. We assume $\beta_1\leq \cdots \leq \beta_n$ in the rest of the proof.

\vspace{0.5cm}

We shall start by assuming that $\Lambda^t$ is positivity improving.  Then the eigenvalues of the operator $G=-iT_S-\frac{1}{2} \Theta \Theta^*=-iT_S-\frac{1}{2} \sum_{i=1}^n \Theta_i \Theta_i^*$ have strictly negative real part. We define the function $F$ on $\bb(\yy)$ which is the inverse of $M\rightarrow GM+MG^*$, that is
\[
F(M)=\int_0^\infty e^{tG}M e^{tG^*} dt~.
\]
Thus, $M_S=F\left(\sum_{i=1}^n D_i(M_{\beta_i})\right)$ is the solution of 
\[
G M_S+M_SG^*+\sum_{i=1}^n D_i(M_{\beta_i})=0
\]
so it is the covariance matrix of $\rho_\infty$. Note that $F$ is linear as a map on $\bb(\yy_S)$, and for any $\beta\in \R$ it satisfies 
\begin{align}\label{eq:zprope_F}
F(\sum_{i=1}^n D_i(M_\beta))=M_\beta~.
\end{align}

Let us fix a $k\leq n$ and show that
\[
\sum_{i=1}^k \tr{\kappa_S D_i(M_{\beta_i}-M_S)}\leq 0~.
\]
We define the following order relation between self-adjoint operators on $\yy$. 
\begin{defi}
Write $P_+$ the projection on the positive eigenspace of $\kappa_S$, and $P_-$ the projection on the negative eigenspace of $\kappa_S$. We say a self-adjoint matrix $M$ on $\yy$ is $\kappa_S$-positive (and we write $0 \leq_{\kappa_S}~M$ if $P_+ M P_+$ is a positive operator and $P_- M P_-$ is a negative operator.
\end{defi}

We have the following properties
\begin{enumerate}
\item If $0\leq_{\kappa_S}~M$ then $\tr{M \kappa_S}\geq 0$. 
\item The maps $D_i$ and $F$ are nondecreasing with respect to $\geq_{\kappa_S}$. 
\item If $\beta_1\leq \beta_2$ then $M_{\beta_1}\leq_{\kappa_S}~M_{\beta_2}$. 
\end{enumerate}
The first property is trivial, the second property is a consequence of the fact that $T_S$ and $\Theta_i\Theta_i^*$ commutes with $\kappa_S$ and that $\Theta_i\Theta_i^*$ is a positive operator. The third property is a consequence of the definition of $M_{\beta}$.

Let $\Delta$ be the operator defined by
\[
\Delta=F\left(\sum_{i=k+1}^n D_i( M_{\beta_k}-M_{\beta_{i}}) \right)~.
\]
 Then by the linearity of $F$ we have
\[
M_S+\Delta=F\left(\sum_{i\leq k} D_i(M_{\beta_i})+\sum_{i>k}D_i(M_{\beta_k})\right)~.
\]
 By properties 2 and 3 above we have
\[
\Delta \leq_{\kappa_S} 0~
\]
and by Equation \eqref{eq:zprope_F} we have
\begin{align}\label{eq:zmajoration_M_delta}
M_S+\Delta\leq_{\kappa_S} F\left(\sum_{i=1}^n D_i(M_{\beta_k})\right)=M_{\beta_k}~.
\end{align}
By definition of $F$, we have 
\[
\sum_{i=1}^n D_i(M_S+\Delta)-i[T_S, M_S+\Delta]=\sum_{i\leq k} D_i(M_{\beta_i})+\sum_{i>k} D_i(M_{\beta_k})~
\]
where $-i[T_S, M_S+\Delta]$ satisfies $\tr{-i[T_S, M_S+\Delta]\kappa_S}=0$ since $T_S$ commutes with $\kappa_S$. Thus
\[
\sum_{i=1}^k J_i=\tr{\kappa_S \sum_{i>k} D_i(M_S+\Delta-M_{\beta_k})+\sum_{i\geq k} D_i(\Delta)}~.
\]
In the trace, the first sum is $\leq_{\kappa_S} 0$ because of Equation \eqref{eq:zmajoration_M_delta} and the second sum is $\leq_{\kappa_S} 0$ because $\Delta \leq_{\kappa_S}0$. This concludes the proof in the case where $(\Lambda^t)$ is positivity improving. 
\vspace{0.5cm}

In the general case, we can take the limit for some perturbation of the $\Theta_i$ which makes the map positivity improving. Alternatively, let us decompose $\yy=V_1\oplus V_2$ where $V_1=K(T_S, \Theta)$ and $V_2=V_1^\perp$. Note that $range (\Theta_i)\subset V_1$ for any $i$, and $V_1$ is stable by $T_S$; let us decompose $M_S$ according to the decomposition of $\yy$, of the form
\[
M_S=\begin{pmatrix} M_{11} & M_{12} \\ M_{12}^* & M_{22} \end{pmatrix}
\]
where $M_{11}$ acts only on $V_1$ and so on, and decompose $K_S$ into blocks written $K_{i,j}$ the same way. Then the fact that $K_S$ commutes with $\Theta_i \Theta_i^*$ implies that $\Theta_i\Theta_i^* K_{12}=0$ and so 
\[
J_i=\tr{D_i([M_{\beta_i}]_{11}-M_{11})}.
\]
Thus it is sufficient to consider the restriction on the space of matrices on $V_1$, which is preserved by the map $M\rightarrow GM+MG^*$ and on which this map is invertible; the proof in the positivity improving case applies.
\end{proof}

\subsubsection{Gauge-invariant quasi-free fermionic semigroups}\label{subsub:gauge_invariant}

In the context of fermionic system, Gauge invariance means commuting with the number operator $N=\sum_{i=1}^L c_i^*c_i$. This property really depends on the subspace $\hh_0$ of $\yy$, and not just on the couple $(\yy, \xi)$. The properties of Gauge-invariant operators are best described in the creation/annihilation basis. In what follows we use the \enquote{small} row operator $C^*: \hh_0 \otimes \hh \rightarrow \hh$ defined by 
\[
C^*(x\otimes \ket{u})=c^*_{x}\ket{u}~.
\]
Any quadratic gauge-invariant operator can be written $\lambda \Un+ C^* T^0 C$ for some operator $T: \hh_0\rightarrow \hh_0$ and some constant $\lambda \in \C$. This kind of operator can be interpreted as acting independently on each fermionic particle, with no interactions between them. Similarly, any gauge-invariant state has a covariance matrix which is block-diagonal in the creation/annihilation basis; since $\cov(\rho)+\xi \cov(\rho)\xi=\Un$ the covariance matrix is of the form
\begin{align*}
\cov(\rho)=\begin{pmatrix} \cov_0 (\rho) & 0 \\ 0 & \Un-\cov_0(\rho) \end{pmatrix}
\end{align*}
where $\cov_0(\rho)=\tra_{\hh}\left(\rho C C^*\right)$ will be called the \enquote{small covariance matrix} in this article. Any gauge-invariant quasi-free state is fully described by its small covariance matrix. Gauge-invariant quasi-free fermionic semigroups can be fully described only in terms of operators on $\yy$: 

\begin{prop}
Let $(\Lambda^t)_{t\geq 0}$ be a quasi-free fermionic semigroup as above, and assume that the operators $H_S$ and $H_{SB}$ and the state $\rho_B$ are gauge-invariant. Let $T_S^0, \Theta^0$ be such that
\begin{align*}
H_S&=C_S^* T_S^0 C_S +\lambda_S \Un & H_{SB}&=C_S^* \Theta^0 C_B+C_B^* \left(\Theta^0\right)^* C_S +\lambda_{SB} \Un
\end{align*}
and let $M_B^0=\cov_0(\rho_B)$. Then the small covariance matrix $M_S^0(t)=\cov_0(\Lambda^t(\rho))$ satisfies the equation
\begin{align}
\frac{d}{dt} M_S^0(t)=\left(-iT^0_S-\frac{1}{2} \Theta^0 \left(\Theta^0\right)^*\right) M_S^0(t)+M_S^0(t)\left(iT^0_S-\frac{1}{2}\Theta^0 \left(\Theta^0\right)^*\right)+\Theta^0 M_B^0 \left(\Theta^0\right)^*~.
\end{align}
The map $\Phi$ writes
\[
\Phi(A)= \sum_{i, j}  \left[\Theta^0 M_B^0 \left(\Theta^0\right)^*\right]_{i,j} c_i^* A c_j +\left[\Theta^0 (\Un-M_B^0) \left(\Theta^0\right)^*\right]_{i,j} c_i A c_j^*~.
\]
If the semigroup is thermal with gauge-invariant conserved quantity $K_S=C_S \kappa_S^0 C_S$ then the flux of 
energy entering the i-th baths is 
\[
J_i(\rho)=\tr{\Theta^0 \left(\Theta^0\right)^*\left(M_{\beta_i}^0-M_S^0\right)}~
\]
where $M_{\beta_i}^0=\left(1+e^{-\beta_i \kappa_S^0}\right)^{-1}$.
\end{prop}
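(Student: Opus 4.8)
The three claims are nothing but the gauge-invariant specializations of the general quasi-free formulas \eqref{eq:cov_evolution}, \eqref{eq:phi} and \eqref{eq:flux_fermions}, which are already available. So the plan is to set up a dictionary between the phase-space operators on $\yy_S$ and their ``small'' counterparts on $\hh_{S,0}$, and then to extract the $\hh_{S,0}$-block of each identity. The first step is to record the block forms forced by gauge invariance in the creation/annihilation basis $\yy_S=\hh_{S,0}\oplus\overline{\hh_{S,0}}$. A gauge-invariant quadratic observable has no anomalous ($c^*c^*$, $cc$) part, which in the form $T^c=\begin{pmatrix}A&B\\-\overline B&-\overline A\end{pmatrix}$ recalled earlier means $B=0$; hence $T_S=\mathrm{diag}(T_S^0,-\overline{T_S^0})$, $\kappa_S=\mathrm{diag}(\kappa_S^0,-\overline{\kappa_S^0})$, and $\Theta=\mathrm{diag}(\Theta^0,\overline{\Theta^0})$ (off-diagonal blocks of $\Theta$ would produce the gauge-breaking $c_S^*c_B^*$ and $c_Sc_B$ terms). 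For the states, the block form $\cov(\rho)=\mathrm{diag}(\cov_0(\rho),\Un-\cov_0(\rho))$ recalled above gives $M_B=\mathrm{diag}(M_B^0,\Un-M_B^0)$, $M_S(t)=\mathrm{diag}(M_S^0(t),\Un-M_S^0(t))$, and, reading \eqref{eq:defi_mbeta} block-wise, $M_{\beta_i}=\mathrm{diag}(M_{\beta_i}^0,\Un-M_{\beta_i}^0)$ with $M_{\beta_i}^0=(1+e^{-\beta_i\kappa_S^0})^{-1}$. Consequently $G=-iT_S-\tfrac12\Theta\Theta^*$ is block-diagonal with upper block $G^0:=-iT_S^0-\tfrac12\Theta^0(\Theta^0)^*$, and $\Theta M_B\Theta^*$ has upper block $\Theta^0M_B^0(\Theta^0)^*$.

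With this dictionary, the covariance evolution is immediate: \eqref{eq:cov_evolution} reads $\dot M_S=GM_S+M_SG^*+\Theta M_B\Theta^*$, and since every operator is block-diagonal the equation decouples; its $\hh_{S,0}$-block is exactly the claimed $\dot M_S^0=G^0M_S^0+M_S^0(G^0)^*+\Theta^0M_B^0(\Theta^0)^*$, while the $\overline{\hh_{S,0}}$-block is its conjugate (consistent with $M_S^0+(\Un-M_S^0)=\Un$) and carries no new information. For $\Phi$ I would expand \eqref{eq:phi} in the creation/annihilation basis: the diagonal blocks of $\Theta M_B\Theta^*$ contract the creation-sector and annihilation-sector components of $F_S$ respectively, producing $\sum_{ij}[\Theta^0M_B^0(\Theta^0)^*]_{ij}c_i^*Ac_j$ from the upper block and $\sum_{ij}[\Theta^0(\Un-M_B^0)(\Theta^0)^*]_{ij}c_iAc_j^*$ from the lower block (after undoing the conjugation by relabeling the summation), while the vanishing off-diagonal blocks remove all anomalous $c^*Ac^*$ and $cAc$ terms. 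This is the stated expression.

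For the flux I would restrict \eqref{eq:flux_fermions}, namely $J_i=\tfrac12\tra_{\yy_S}(\kappa_S D_i(M_{\beta_i}-M_S))$ with $D_i(A)=\tfrac12\set{\Theta_i\Theta_i^*,A}$. The gauge-invariant conservation condition, i.e.\ the block form of \eqref{eq:db_quadratic_operators}, gives $[\kappa_S^0,\Theta_i^0(\Theta_i^0)^*]=0$, so that $\tra(\kappa_S\set{\Theta_i\Theta_i^*,X})=2\,\tra(\kappa_S\Theta_i\Theta_i^*X)$. Writing the trace over $\yy_S$ as the sum of its two diagonal blocks, the $\overline{\hh_{S,0}}$-contribution equals the complex conjugate of the $\hh_{S,0}$-contribution and, being real, coincides with it; this cancels the factor $1/2$ and leaves the announced flux, the $\kappa_S^0$-weighted trace $\tr{\kappa_S^0\,\Theta_i^0(\Theta_i^0)^*(M_{\beta_i}^0-M_S^0)}$ over $\hh_{S,0}$.

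I expect the only genuine difficulty to be the bookkeeping of the first step: pinning down the complex-conjugation and transposition conventions in the $\overline{\hh_{S,0}}$-blocks, so that the lower block of the evolution equation is genuinely redundant and the two blocks of the flux trace recombine to cancel the $1/2$, and correctly tracking which components of $F_S$ are $c_i$ versus $c_i^*$ when expanding $\Phi$. Once these conventions are fixed, the remaining steps are purely mechanical restrictions of the already-proven general identities to the $\hh_{S,0}$-block.
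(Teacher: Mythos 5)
Your proposal is correct, and it is essentially the proof the paper intends: the paper states this proposition without proof precisely because it is the block-restriction of the general identities \eqref{eq:cov_evolution}, \eqref{eq:phi} and \eqref{eq:flux_fermions} to the creation/annihilation basis, which is exactly what you carry out, including the two points that actually require care (the commutation $[\kappa_S^0,\Theta_i^0(\Theta_i^0)^*]=0$ coming from \eqref{eq:db_quadratic_operators}, and the recombination of the two conjugate diagonal blocks of the trace that absorbs the factor $1/2$). One remark: your derivation yields $J_i(\rho)=\tr{\kappa_S^0\,\Theta_i^0(\Theta_i^0)^*(M_{\beta_i}^0-M_S^0)}$, which is the correct general formula; the expression printed in the statement omits the weight $\kappa_S^0$ (and the bath index on $\Theta^0$) and agrees with yours only in the case $K_S=N_S$, i.e.\ $\kappa_S^0=\Un$, which is the situation of the fermionic chain example that follows — so you should keep the $\kappa_S^0$ and regard the printed formula as a typo rather than force your computation to match it.
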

Note that the number of fermions $N$ is always a globally conserved quantity in a gauge-invariant quasi-free fermionic system. However, the states $\rho_{Bi}$ of the sub-baths are not always thermal with respect to the number operator $N_{Bi}$, so we cannot always take $K_S=N_S$. 
\vspace{0.5cm}

{\bf The example of the fermionic chain :} Let us treat an example where $K_S=N_S$: the fermionic chain. We take two baths, indexed by $0$ and $L+1$, and put the $L$ sites of the system \enquote{between them}. 
We choose $\hh_{B0}$ and $\hh_{B(L+1)}$  with one site each, with energies $K_{0}=N_{B0}=c_{B0}^* c_{B 0}$ and $K_{L}=N_{BL}=c_{B (L+1)}^* c_{B(L+1)}$   at temperature $\beta_1, \beta_L$, and consider
\begin{align*}
\Theta^0&=
\begin{pmatrix}
\theta_0 & 0 \\
0 & 0 \\
\vdots & \vdots \\
0 & \theta_{L+1}
\end{pmatrix}\\
T_{S}^0&= D+D^T
\end{align*}
where $D$ is the upper-diagonal matrix
\[
D=\begin{pmatrix}
0 & 1 &  & \\
0 & 0 & 1 & & 0\\
& 0 & 0 & 1 & \\
 & &\ddots & \ddots & \ddots 
\end{pmatrix}~.
\]
Thus every site of the bath is in contact only with the nearest sites, with intensity $1$ inside of the system and intensity $\theta_0, \theta_{L+1}$ at the interface between the system and the baths. 

This system is positivity improving; and the stationary state can be described explicitly (see \cite{AndreysFermions1} for a more detailed treatment). Let us write $n_0=(1+e^{-\beta_0})^{-1}$ and $n_{L+1}=(1+e^{-\beta_1})^{-1}$ the unique elements of the (small) covariance matrices of the baths. Then the small covariance matrix $M_\infty^0$ of the stationary state $\rho_\infty$ is of the form
\begin{align*}
M_\infty^0&=\begin{pmatrix}
p_1 & ij & 0 & ... \\
-ij & p_m & ij ... \\
0& -ij & p_m & ... \\
&&& \ddots& \\
0 & ...& &-ij &p_m & ij\\
0 & ... &&& -ij & p_L
\end{pmatrix}
\end{align*}
where $p_1, p_m, p_L $ and $j$ are real numbers that are independent of $L$. They are defined as follows.
Let $s=4(\theta_0^2+\theta_{L+1}^2)+\theta_1^2\theta_{L+1}^2(\theta_1^2+\theta_{L+1}^2)$. Then 
\begin{align*}
p_0&=\frac{1}{s}\Big(\theta_0^2\left(\theta_{L+1}^4+\theta_0^2\theta_{L+1}^2+4\right)n_0 +4\theta_0^2n_{L+1}\Big) \\
p_m&=\frac{1}{s}\Big(\theta_0^2\left( \theta_{L+1}^4+4\right)n_0+\theta_{L+1}^2\left(\theta_0^4 +4\right)n_{L+1}\Big) \\
p_{L+1}&= \frac{1}{s}\Big( 4\theta_{L+1}^2n_0+\theta_0^2\left(\theta_{L+1}^4+\theta_{L+1}^2\theta_0^2+4\right)n_{L+1}\Big)\\
j&= \frac{2}{s} \theta_0^2\theta_{L+1}^2 (n_0-n_{L+1}) \,.
\end{align*}

The energy fluxes are also independent of $L$, and
\begin{align}\label{eq:flux_chain}
J_0=-J_1=\theta_0^2 \left(n_0-p_1\right)=2 j ~.
\end{align}

\section{Large deviations of energy exchanges for fermionic semigroups}

In the repeated interaction model, it is possible to measure the energy of each sub-bath before and after the interaction, thus measuring the energy exchanged during this interaction. In the continuous-time limit, this allows to treat the energy fluxes between the bath and the system as random variables, whose average over time converges to the stationary fluxes $J_i$. The purpose of this section is to study the large deviations of these random variables in the case of quasi-free fermionic systems; we first present the large deviation results in the case of thermal models, based on \cite{pellegrini_jumps_10} and \cite{JPW14}, and then describe more precisely the quasi-free fermionic case.  The explicit computation of the large deviation rate involves the resolution of an algebraic Riccati equation, which is reminiscent of \cite{JPS16}.

\subsection{Repeated measurement process for thermal models}

A Markovian evolution under some indirect continual measurement can be described by the notion of \emph{unraveling}:

\begin{defi}
Let $I=[0, +\infty)$ or $I=\N$ be a set of times, and let $(\Omega_t, \nu_t)_{t\in I}$ be a family of standard measured spaces (where $\nu_t$ are Radon measures). We assume that there is a measure-preserving bijection 
\[
\begin{array}{lll}
(\Omega_t\times \Omega_s,~ \nu_t\otimes \nu_s)&\rightarrow &(\Omega_{t+s},~ \nu_{t+s}) \\
(\omega_t, ~ \omega_s) & \mapsto &\omega_t.\omega_s
\end{array}
\]
 and for $\omega\in \Omega_{t+s}$ we write $\omega_{t]}$ the projection of $\omega$ on $\Omega_t$, i.e; the event such that $\omega=\omega_{t]}\cdot\omega_{(t,t+s]}$ for some event $\omega_{(t,s]}\in \Omega_s$. Let $\hh_S$ be a separable Hilbert space, and $\call$ the generator of a QMS $(\Lambda^t)_{t\in I}$. A \emph{Markovian unraveling} on $\hh_S, \Omega_t$ is a measurable map 
\[
(t\in I, \omega\in \Omega_t) \mapsto \Psi_t[\omega]
\]
where $\Psi_t[\omega]$ is a completely positive map for a.e. $\omega\in \Omega_t$, with the following properties: 
\begin{enumerate}
\item For all $t\in I$ we have
\[
\int_{\Omega_t} d\nu_t(\omega) \Psi_t[\omega]=\Lambda^t~.
\]
\item For all $s, t\in I$ and $\omega_t\in \Omega_t, \omega_s\in \Omega_s$, we have
\[
\Psi_{t+s}[\omega_t.\omega_s]=\Psi_t[\omega_t]\circ\Psi_s[\omega_s]~.
\]
\end{enumerate}

To any Markovian unraveling, any initial state $\rho$ and any $T$ corresponds a probability measure $\P_T$ on $\Omega_T$ and a process $(\tilde{\rho}_t)_{0\leq t\leq T}$ defined by
\begin{align}
d\P_T(\omega)&=\tr{\rho\Psi_T[\omega](\Un)}d\nu_T(\omega) \\
\tilde{\rho}_t&=\frac{\Psi_t[\omega_{t]}]^*(\rho)}{\tr{\Psi_t[\omega_{t]}]^*(\rho)}}~.
\end{align}

\end{defi}

The first property ensures that $\E_{\P_T}(\tilde{\rho}_t)=\left(\Lambda^t\right)^*(\rho)$, while the second property ensures that $\P_t$ is the pushforward measure of $\P_T$ under the projection $\omega\mapsto \omega_{t]}$. In discrete time, such an unraveling can be obtained by performing measures on the bath, as follows.
\vspace{0.5cm}

Let us consider a thermal model as above (subsection \ref{subseq:energy_conservation}). Let us write $P_{i, E_i}$ the projector on the eigenspace of $K_{B_i}$ for eigenvalue $E_i$. For any list of eigenvalues $E=(E_1, \cdots, E_n)$ consider the projector
\[
P_E=\bigotimes_{i=1}^n P_{i, E_i}~.
\]

If we perform one interaction and we measure $K_{B_i}$ before and after the interaction, with initial state $\rho$, the state after the interaction is 
\[
\tilde{\rho}_S(\tau)=\frac{{\Psi}_{\tau}^*[E, F](\rho)}{\tr{{\Psi}_{\tau}^*[E, F](\rho)}}
\]
where $E=(E_1, \cdots, E_n)$ is the result of the first measurement of $(K_{B_1}, \cdots, K_{B_n})$, where $F$ is the result of the second measurement of the $K_{B_i}$, and
\[
{\Psi}_{\tau}^*[E, F](\rho)=\tra_{B}\left(P_F U_\tau ~\rho\otimes \left(P_E\rho_B P_E \right)~ U_\tau^* P_F \right)~.
\]
The outcome $(E, F)$ of the measurement appears with probability $\tr{{\Psi}_{\tau}^*[E, F](\rho)}$. Since we are interested only in the fluxes, we can forget about the precise outcome $(E, F)$ and just retain the difference $\delta^i=F_i-E_i$. The state is then 
\[
\frac{\Psi_\tau^*[\delta](\rho)}{\tr{\Psi_\tau^*[\delta](\rho)}}
\]
where 
\[
\Psi_\tau^*[\delta](\rho)=\sum_{E, F \text{ with } F-E=\delta} \Psi_\tau^*[E, F](\rho)~.
\]

Applying this measurement and interaction repeatedly, we obtain a random process $(\delta_\tau(k))_{k\in \N}$ coupled with a random process of states $(\tilde{\rho}_{\tau}(k))_{k\in \N}$. It is a Markovian unraveling of the QMS $(\Lambda_\tau^k)_{k\in \N}$ of repeated interaction. Indeed, write 
\[
N_\tau^i(k)=\sum_{i=1}^k \delta^i_\tau(i)
\]
the total energy exchange up to time $k\tau$, and $N_\tau(k)=(N_\tau^1(k), \cdots, N_\tau^n(k))$. For $t\in \N$ the trajectory $(N_\tau(k))_{k\leq t}$ is in the space $\Omega_{\tau, t}=\left(\R^n\right)^t$ which we endow with the counting measure $\nu_t$~. Then the process $\tilde{\rho}_{\tau}(k)$ is a Markovian unraveling of $\Lambda_\tau^k$ with maps
\[
 \Psi_{\tau,t}[(N_\tau(k))_{k\leq t}]=\Psi_\tau[N_\tau(1)-N_\tau(0)]\circ \cdots \circ \Psi_\tau[N_\tau(t)-N_\tau(t-1)]~.
\]

The following theorem describes the limit in distribution of this process as $\tau\rightarrow 0$. It is a generalization of a theorem of Nechita and Pellegrini \cite{Pellegrini_Nechita09}.

\begin{theo}[Nechita and Pellegrini]
Suppose that Assumption \ref{ass:strongConditionV} is satisfied. Then for any $T>0$ the process 
\[
(\tilde{\rho}_\tau(\ent{t/\tau}), N_{\tau}(\ent{t/\tau}))_{t\in [0, T]}
\]
converges in distribution in the space of c\`adl\`ag functions to a process $(\tilde{\rho}_t, N_t)_{t\in [0, T]}$ where $t\mapsto N_t^i$ is piecewise constant, with a finite number of jumps, all in the set $ sp(N_{B_i})-sp(N_{B_i})$. The distribution of this process is described as a Markovian unraveling, the following way: we describe any trajectory by the list of jumps, so our universe $\Omega_T$ is

\[
\Omega_T=\left\{ \Big((t_1, i_1, \delta_1), \cdots, (t_k, i_k, \delta_k)\Big)~|~ 0<t_1<\cdots <t_k<T ~,~\delta_l \in sp(N_{B_{i_l}})-sp(N_{B_{i_l}})~,~ k\in \N\right\}
\]
where $t_l$ is the time of the $l$-th jump, $i_l$ is the number of the reservoirs on which the jump appears, and $\delta_l$ is the energy exchanged during the jump. All the parameters except the $t_l$ are on a discrete set; we endow $\Omega_T$ of the measure
\[
d\nu_T\Big((t_1, i_1, \delta_1), \cdots, (t_k, i_k, \delta_k)\Big)=\Un_{0\leq t_1\leq \cdots \leq t_k}dt_1\cdots dt_k
\]
(where $d t_l$ is the Lebesgue measure).

The map $\Psi_t[\omega]$ is defined the following way: 

 For any $\delta^i\in sp(N_{B_i})-sp(N_{B_i})$ consider the completely positive map $\Phi_{i, \delta_i}$ defined by
\[
\Phi_{i,\delta^i}^*(\rho)=\sum_{E_i, F_i \text{ with } F_i-E_i=\delta^i} \tra_{B_i} \left(P_{i, F_i} H_{SB_i} P_{i, E_i} \left(\rho\otimes \rho_{B_i}\right) P_{i, E_i} H_{SB_i} P_{i, F_i} \right)
\]
(thus $\sum_{i}\sum_{\delta_i} \Phi_{i, \delta^i}=\Phi$). Write $G=-iH_S-\frac{1}{2} \Phi(\Un)$ and define the map
\[
\call_G^*(A)=GA+AG^*~.
\]
For any trajectory $\omega=((t_1, i_1, \delta_1), \cdots, (t_k, i_k, \delta_k))$, we define
\begin{align}\label{eq:defi_unraveling}
\Psi_T[\omega]^*=e^{(T-t_k) \call_G^*} \Phi_{i_k, \delta_k}^* e^{(t_k-t_{k-1})\call_G^*} \Phi_{i_{k-1}, \delta_{k-1}}^* \cdots \Phi_{i_1, \delta_1}^* e^{t_1 \call_G^*}~.
\end{align}
Then $(N^i_t)_{t\in [0, T]}$ is the random variable on $\Omega_T, d\P_T(\omega)=\tr{\Psi_t[\omega]^*(\rho)}d\nu_t(\omega)$ defined by the map
\[
N^i_t[\big(t_1, i_1, \delta_1), \cdots, (t_k, i_k, \delta_k)\Big)]=  \sum_{l=1}^k \Un_{i_l=i}\Un_{t_l\leq t} \delta_l~
\]
and $\tilde{\rho}_t$ is the random variable 
\[
\Psi_t[\omega_{t]}]^*(\rho)~.
\]
\end{theo}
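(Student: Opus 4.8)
The plan is to identify the limit as a piecewise-deterministic Markov process — a quantum trajectory with a finite number of jumps — and to prove convergence in the Skorokhod space $D([0,T])$ by the classical two-step scheme: convergence of the finite-dimensional distributions together with tightness. Following Nechita and Pellegrini, the starting point is the discrete-time Markov chain $(\tilde{\rho}_\tau(k), N_\tau(k))_{k\in\N}$, whose one-step transition is governed by the completely positive maps $\Psi_\tau^*[\delta]$. I would interpolate it to continuous time through $k=\ent{t/\tau}$ and show that the interpolated chain converges to the process generated by the no-jump flow $\call_G^*$ and the jump kernels $\Phi_{i,\delta^i}^*$.

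First I would carry out the small-$\tau$ expansion of $\Psi_\tau^*[\delta]$, separating the no-jump component $\delta=0$ from the genuine jumps $\delta\neq 0$. Expanding $U_\tau=\Un-i\sqrt\tau H_{SB}-i\tau H_S-\frac\tau2 H_{SB}^2+O(\tau^{3/2})$ and inserting the bath projectors, a jump of type $(i,\delta^i)$ with $\delta^i\neq 0$ arises only from the second-order term $\tau\,P_F H_{SB}(\rho\otimes P_E\rho_B P_E)H_{SB}P_F$, in which the projectors select the transition amplitude $P_{i,F_i}H_{SB_i}P_{i,E_i}$; summing over $F-E=\delta$ yields $\Psi_\tau^*[\text{jump }(i,\delta^i)]=\tau\,\Phi_{i,\delta^i}^*+o(\tau)$. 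Conversely the $\delta=0$ component collects the identity, the unitary term $-i\tau[H_S,\bullet]$ and the diagonal part of $H_{SB}^2$, giving $\Psi_\tau^*[0]=\Un+\tau\,\call_G^*+o(\tau)$ with $G=-iH_S-\frac12\Phi(\Un)$. The crucial role of Assumption \ref{ass:strongConditionV} is exactly as in \eqref{eq:estimate_Dtaui}: it annihilates the order-$\sqrt\tau$ diagonal contribution $\tra_B(H_{SB}\,\Un_S\otimes\rho_B)$, ensuring that there is no divergent drift and that the jump rates stay finite.

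With these expansions in hand I would compute the finite-dimensional distributions directly. For a trajectory with jumps of types $(i_1,\delta_1),\dots,(i_k,\delta_k)$ occurring at steps $\ent{t_1/\tau}<\dots<\ent{t_k/\tau}$, the discrete-time probability density factorizes as an alternating product of no-jump propagators and jump maps; since the $m$-step no-jump map $(\Psi_\tau^*[0])^{m}$ with $m\tau\to s$ converges to $e^{s\call_G^*}$, and the combinatorial count of step-configurations with jump steps in $\prod_l dt_l$ contributes a factor $\tau^{-k}\prod_l dt_l$ that exactly cancels the $\tau^{k}$ carried by the $k$ jump maps, the density converges to $\tr{\Psi_T[\omega]^*(\rho)}$ with $\Psi_T[\omega]^*$ given by \eqref{eq:defi_unraveling}. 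This simultaneously identifies the limit measure $\P_T$ and the limiting random variables $N_t^i$ and $\tilde{\rho}_t$.

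The main obstacle is the passage from finite-dimensional convergence to convergence in the Skorokhod space, that is, tightness, together with making the error estimates in the expansion uniform over the (compact) set of states. Tightness should follow from the observation that, because $\Phi$ is bounded and $\gs(\hh_S)$ is compact, the total number of jumps in $[0,T]$ is stochastically dominated by a Poisson variable with bounded parameter, uniformly in $\tau$; this controls the modulus of continuity in the $J_1$ topology away from the finitely many jump times and yields tightness. The remaining care is bookkeeping: the projectors $P_E,P_F$ must be handled so that the $o(\tau)$ remainders are uniform in the current state, which is where the finite-dimensionality of $\hh_S$ and the boundedness of $H_{SB}$ enter.
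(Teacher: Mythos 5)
Your proposal is sound and its first half coincides with what the paper actually records: the small-$\tau$ expansion of $\Psi_\tau^*[\delta]$ into the three regimes (no jump, one jump, two or more jumps), with Assumption \ref{ass:strongConditionV} killing the order-$\sqrt{\tau}$ term, is exactly the key estimate \eqref{eq:estimate_random} that the paper isolates before deferring the rest to Pellegrini--Nechita. Where you diverge is in the identification of the limit law. The paper (following \cite{Pellegrini_Nechita09} and \cite{pellegrini_jumps_10}) proves tightness and then characterizes any subsequential limit as the unique solution of a martingale problem; you instead compute the limit of the unnormalized trajectory densities directly, matching the alternating product of no-jump propagators and jump maps against the Dyson-type expression \eqref{eq:defi_unraveling}, with the combinatorial factor $\tau^{-k}\prod_l dt_l$ cancelling the $\tau^k$ carried by the jump maps. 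Your route is more explicit --- it produces the formula $d\P_T(\omega)=\tr{\Psi_T[\omega]^*(\rho)}\,d\nu_T(\omega)$ as the outcome of the computation rather than verifying it a posteriori --- at the price of having to argue separately that convergence of these trajectory densities, together with the stochastic domination of the jump count (your binomial-to-Poisson bound, which mirrors the lemma $\P(Z_T=k)\le T^kC^k/k!$ the paper proves later), upgrades to convergence in the Skorokhod topology; the martingale-problem route dispatches that identification step abstractly via a uniqueness theorem. One point to keep in view if you flesh this out: the normalization $\tilde\rho_\tau(k)=\Psi^*[\cdot](\rho)/\tr{\Psi^*[\cdot](\rho)}$ can have a small denominator on rare trajectories, so the uniformity of your $o(\tau)$ remainders must be stated for the unnormalized maps (as you do), not for the normalized state evolution.
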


The Dyson expansion formula applied to $\call^*=\call_G^*+\sum_{i, \delta^i} \Phi_{i, \delta_i}^*$ gives
 
\begin{align*}
\int_{\Omega_T}\Psi_T[\omega]^* d\nu_t(\omega)
&=\sum_{n=0}^\infty \sum_{i_1, \delta_1, \cdots, i_k, \delta_k} \int_{0\leq t_1\leq \cdots \leq t_k\leq T} e^{(T-t_k) \call_G^*} \Phi_{i_k, \delta_k}^* e^{(t_k-t_{k-1})\call_G^*} \Phi_{i_{k-1}, \delta_{k-1}}^* \cdots \Phi_{i_1, \delta_1}^* e^{t_1 \call_G^*} dt_1\cdots dt_n \\
&=e^{T\call^*}~
\end{align*}
so $\omega\mapsto \Psi_T[\omega]$ fits the definition of a Markovian unraveling.
\newline

 Another way to describe $\Psi_T[\omega]$ is to say that 
$h_t=\Psi_t^*[\omega](\rho)$ satisfies the following stochastic differential equation with jumps
\[
dh_t=\call_G^*(h_t) dt +\sum_{i=1}^n \Phi^*_{i, dN^i_t}(h_{t_-})~.
\]

\begin{proof}[Elements of proof:]
The proof of this theorem is the same as the one of Theorem 5 of \cite{Pellegrini_Nechita09}, which was restricted to bath of dimension 2. We will not reproduce it here, let us just present the heuristics. We obtained the following estimate using Assumption \ref{ass:strongConditionV}:
\begin{align}\label{eq:estimate_random}
\Psi_\tau[\delta]^*(\rho)=\left\{\begin{array}{ll}
\rho+\tau\call_G^*(\rho)+o(\tau) & \text{if $\delta^i=0$ for all $i$} \\
\tau  \Phi_{k, \delta^k}^*(\rho)+o(\tau) & \text{if $\delta^i= 0$ for all $i\neq k$ and $\delta^k\neq 0$}\\
o(\tau) & \text{if two of the $\delta^i$'s are nonzero} 
\end{array}\right.
\end{align}
where the $o(\tau)$ is uniformly small on $[0, T]$. Thus, the probability to have a jump during a given interaction is of order $\tau$. This implies that most of the time, $\delta=0$ and $\Psi_\tau[\delta]^*(\rho)=\call_G^*(\rho)$. Exactly one of the bath exchange energy with a probability of order $\tau$, and two of the baths exchange energy only with a probability of order $o(\tau)$, so it becomes negligible in the limit $\tau\rightarrow 0$. The complete proof involves showing the tightness of the process as $\tau\rightarrow 0$ and proving that the limit process is the only solution of a martingale problem, see \cite{pellegrini_jumps_10} or \cite{Pellegrini_Nechita09}.
\end{proof}

The random process is related to the deterministic quantum dynamic as follows: 

\begin{theo}
Let us write $\rho(t)=(\Lambda^t)^*(\rho_0)$. Then we have
\begin{align*}
\E_{\P_T}(\tilde{\rho}_T)&=\rho(T)\\
\E_{\P_T}(N_T^i)&=\int_0^T J_i(\rho(t))dt\\
\end{align*}
where $J_i(\rho(t))$ is the flux of energy entering the $i$-th bath when no measurement is performed, defined in Section \ref{subseq:energy_fluxes}.
\end{theo}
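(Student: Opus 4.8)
The statement contains two identities; the first is essentially a restatement of the defining property of the unraveling, and the flux identity is the substantive part. I would dispose of the first one directly from the definitions $d\P_T(\omega)=\tr{\Psi_T[\omega]^*(\rho)}\,d\nu_T(\omega)$ and $\tilde{\rho}_T=\Psi_T[\omega_{T]}]^*(\rho)/\tr{\Psi_T[\omega_{T]}]^*(\rho)}$: the normalizing denominator cancels against the density of $\P_T$, so
\[
\E_{\P_T}(\tilde{\rho}_T)=\int_{\Omega_T}\Psi_T[\omega]^*(\rho)\,d\nu_T(\omega)=\left(\Lambda^T\right)^*(\rho)=\rho(T),
\]
the middle equality being Property 1 of a Markovian unraveling (the Dyson resummation $\int_{\Omega_T}\Psi_T[\omega]^*\,d\nu_T=(\Lambda^T)^*$ proved just above).

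For the flux identity I would argue at the discrete level and then pass to the limit, so as to reuse the Lemma of Section \ref{subseq:energy_fluxes}. The crucial point is that the two-time projective measurement of the $K_{B_j}$ does not bias the mean energy transferred to bath $i$: averaging over the unread first outcomes leaves $\mu\otimes\rho_B$ invariant, because each $\rho_{B_j}$ commutes with $K_{B_j}$ and so the associated dephasing $\sum_E P_E(\bullet)P_E$ is trivial, and a projective measurement preserves the expectation of the very observable it measures. Making this precise, using $\sum_{F}F_i P_F=\Un_S\otimes K_{B_i}$ and cyclicity of the trace, I would establish the exact one-step identity $\sum_\delta \delta^i\,\tr{\Psi_\tau^*[\delta](\mu)}=D_{\tau,i}(\mu)$ for every state $\mu$, where $D_{\tau,i}$ is the unmeasured one-step energy increase. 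Since $D_{\tau,i}$ is linear in $\mu$ and $\E[\tilde{\rho}_\tau(l)]=(\Lambda_\tau^*)^l(\rho)=\rho(l\tau)$ (the discrete-time form of the first identity), the tower property over the jump steps gives $\E_{\P}\!\left(N_\tau^i(k)\right)=\sum_{l=0}^{k-1}D_{\tau,i}(\rho(l\tau))=D_{\tau,i,0\rightarrow k}(\rho)$.

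It then remains to let $\tau\to0$ with $k=\ent{T/\tau}$. The Lemma of Section \ref{subseq:energy_fluxes} gives $D_{\tau,i,0\rightarrow\ent{T/\tau}}(\rho)\to\int_0^T J_i(\rho(s))\,ds$, while the Nechita--Pellegrini theorem gives convergence in distribution of $N_\tau^i(\ent{T/\tau})$ to $N_T^i$. To exchange limit and expectation I would invoke uniform integrability: the jump energies lie in the finite set $sp(N_{B_i})-sp(N_{B_i})$ and the per-step jump probability is $O(\tau)$ by estimate \eqref{eq:estimate_random}, so the total number of jumps in $[0,T]$ is stochastically dominated, uniformly in $\tau$, by a Poisson variable, bounding $\E(|N_\tau^i(\ent{T/\tau})|^2)$ uniformly in $\tau$. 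Combining these three facts yields $\E_{\P_T}(N_T^i)=\int_0^T J_i(\rho(s))\,ds$.

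The main obstacle is organizational rather than analytic. In the discrete route it is the exact one-step identity $\sum_\delta\delta^i\tr{\Psi_\tau^*[\delta](\mu)}=D_{\tau,i}(\mu)$: one must verify that inserting the first measurement really leaves the mean untouched, which is exactly where $[\rho_{B_j},K_{B_j}]=0$ and Assumption \ref{ass:strongConditionV} enter, and then control the interchange of limit and expectation above. A self-contained continuous-time alternative is to expand $\Psi_T[\omega]^*$ by Dyson, single out one distinguished jump of type $(i,\delta)$ at time $s$, and resum the jumps in $[0,s)$ to $e^{s\call^*}(\rho)=\rho(s)$ and those in $(s,T]$ to a trace-preserving map, obtaining $\E_{\P_T}(N_T^i)=\int_0^T\sum_\delta\delta\,\tr{\Phi_{i,\delta}^*(\rho(s))}\,ds$; there the delicate bookkeeping point is that the sum over the position of the distinguished jump coincides with the sum over the number of preceding jumps, so that $\rho(s)$ is rebuilt exactly once and no overcounting occurs. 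One then checks the pointwise identity $\sum_\delta\delta\,\tr{\Phi_{i,\delta}^*(\rho)}=J_i(\rho)$ by reducing the left side to $\tr{(\Un_S\otimes K_{B_i})[H_{SB_i},\rho\otimes\rho_B]H_{SB_i}}$ and using the conservation law $[H_{SB_i},K_S\otimes\Un+\Un\otimes K_{B_i}]=0$ together with the detailed-balance identity $[\Phi_i(\Un_S),K_S]=0$.
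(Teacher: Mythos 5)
Your proposal is correct and follows essentially the same route as the paper: the first identity from the defining normalization property of the unraveling, and the second by establishing the exact discrete-time equality $\E(N_\tau^i(k))=D_{\tau,i,0\rightarrow k}(\rho)$ and then passing to the limit $\tau\rightarrow 0$, exchanging limit and expectation via a uniform Poissonian bound on the number of jumps. The only cosmetic difference is that you derive the jump-count tail bound directly from the $O(\tau)$ per-step jump probability, whereas the paper bounds $\P(Z_T=k)$ for the limit process by the Dyson expansion and transfers it to $Z_{\tau,T}$ by convergence in distribution; both yield the same uniform integrability.
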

\begin{proof}
The first equation is a consequence of the property $\int_{\omega_T} d\nu_T(\omega)\Phi_T[\omega]=e^{t\call}$. 

For the second fact, we use the discrete-time version: by definition of the quantum measurement, we have
\begin{align*}
\E(N_{\tau}^i(\ent{T/\tau})=D_{\tau, i, 0\rightarrow \ent{T/\tau}}(\rho)
\end{align*}
and as in subsection \ref{subseq:energy_fluxes} the right-hand side converges to $\int_0^T J_i(\rho(t)dt$. To prove that the left-hand side converges to $\E_{\P_T}(N_T^i)$ we use the convergence in distribution; the random variable $N_{\tau}^i(\ent{T/\tau})$ is not bounded, so we need to bound the probability that it is large. For this, consider the random variable $Z_{\tau,T}$ equal to the number of jumps before time $T$: 
\begin{align}
Z_{\tau,T}=\#\set{k \leq \ent{T/\tau} | ~N_\tau((k+1))-N_\tau(k)\neq 0}~.
\end{align}
and its continuous-time version $Z_T(((t_1, i_1, \delta_1), \cdots, (t_k, i_k, \delta_k)))=k$. Then, for any $K\in \N$ the random variable $N_\tau^i(\ent{T/\tau})\Un_{Z_{\tau, T}\leq K}$ is bounded, hence its expectancy converges to $\E(N_T^i \Un_{Z_T\leq K})$. To show that we can neglect the rest, we use the following: 
\begin{lem}
There exists a constant $C$ such that for any $T$, and for any $k\in \N$, 
\[
\P(Z_T= k)\leq \frac{T^k C^k}{k!}
\]
\end{lem}

\begin{proof}
Note that $\tr{e^{t\call_G^*}(\rho)}\leq \tr{\rho}$ for any $t$; let $C_1$ be greater than any of the operator norms of the $\Phi_{i, \delta^i}^*$ and let $C_2$ be the number of couples $(i, \delta^i)$ possible. Then 
\begin{align*}
\P_T(Z_T=k)&=\sum_{(i_1, \delta_1), \cdots, (i_k, \delta_k)} \int_{0\leq t_1\leq \cdots \leq t_k}  \tr{ e^{(T-t_n) \call_G^*} \Phi_{i_n, \delta_n}^* \cdots \Phi_{i_1, \delta_1}^* e^{t_1 \call_G^*}(\rho_0)} dt_1\cdots dt_n \\
&\leq  \frac{C_2^k C_1^k T^k}{k!}~.
\end{align*}
\end{proof}

By the convergence in distribution on $[0, T]$, we have 
\[
\P(Z_{\tau, T}=k)\leq\frac{ T^{k} C_3^k}{k!}
\]
 for all $\tau$ small enough, for some constant $C_3$ slightly larger than $C$. Note that
\[
\abs{N^i_T} \leq Z_{\tau, T} \max \delta^i~.
\]
Thus the rest 
\[
\abs{\E\Big(N_\tau^i(\ent{T/\tau})\Big)-\E\Big(N_\tau^i(\ent{T/\tau})\Un_{Z_{\tau, T}\leq K}\Big)}\leq \max(\delta^i) \sum_{k>K} \frac{C_3^k T^k}{k!}
\]
converges to zero as $K\rightarrow +\infty$, uniformly in $\tau$, and the same for the continuous-time version.
\end{proof}

\subsection{Large deviation of the energy fluxes}

In this section, we consider the large deviations of the random variables $\frac{N_T^i}{T}$ as $T\rightarrow \infty$. We assume that $\call$ is positivity improving, thus it has a unique stationary state $\rho_\infty$, and
\[
\lim \E_{\P_T} \left(\frac{N_T^i}{T}\right)=J_i(\rho_\infty)~.
\]
This section is entirely based on Jak\v{s}i\'c, Pillet and Westrich \cite{JPW14}. In this article, the authors construct the random variables $N_T^i$ directly from the semigroup, and study its large deviations. Our notations differ from their by one notable point: the authors express the large deviation of the entropy exchange $\beta_i N_T^i$, while we consider the large deviations of the energy exchange $N_T^i$. We just recall their main results, giving only elements of proofs.
\vspace{0.5cm}

In the case of finite-dimensional systems, the large deviations of $N_T/T$ fall into the simplest case of the G\"artner-Ellis theorem: it is sufficient to study the limit 
\begin{align}\label{eq:def_ealpha}
e(\alpha)=\lim_{T\rightarrow \infty} \frac{1}{T} \log\E\left(e^{\scal{ \alpha, N_T}}\right)~
\end{align}
for $\alpha=(\alpha_1, \cdots, \alpha_n)\in \R^n$ (and where $\scal{ \alpha, N_T}=\sum_{i=1}^n \alpha_i N_T^i$.  The authors of \cite{JPW14} prove the following large deviation principle: 

\begin{theo}[Jak\v{s}i\'c, Pillet, Westrich] \label{theo:large_deviations}
There exists a convex and continuous rate function $I$ on $\R^n$ such that for any open set $E\subset \R^n$ we have
\[
\lim_{T\rightarrow \infty} \frac{1}{T}\log\P\left(\frac{1}{T} N_T \in E\right)=-\inf_{\zeta \in \E} I(\zeta)
\]
The function $I$ vanishes only at $\zeta=(J_1, \cdots, J_n)$, it satisfies
\begin{align}
&I(\zeta)=+\infty~~ \text{   if $\sum_{i=1}^n \zeta_i \neq 0$} \label{eq:I_sum}\\
&I(\zeta)-I(-\zeta)=\sum_{i=1}^n \beta_i \zeta_i  \label{eq:I_sym}
\end{align}

 and it is the Legendre transform of $\alpha\mapsto e(\alpha)$: 
\begin{align}\label{eq:legendre}
I(\zeta)=\sup_{\alpha\in \R^n} \big(\scal{\alpha, \zeta}-e(\alpha)\big)~.
\end{align}
\end{theo}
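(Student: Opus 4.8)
The plan is to obtain the large deviation principle from the G\"artner--Ellis theorem, so that the whole statement reduces to showing that the limit $e(\alpha)$ of \eqref{eq:def_ealpha} exists, is finite for every $\alpha\in\R^n$, and is a smooth convex function of $\alpha$. The starting point is a \emph{tilted generator}. Inserting the identity $\scal{\alpha, N_T}=\sum_l \alpha_{i_l}\delta_l$ into the expectation and expanding $\E\left(e^{\scal{\alpha, N_T}}\right)$ along the trajectory decomposition \eqref{eq:defi_unraveling}, the same Dyson-series computation that identified $\int_{\Omega_T}\Psi_T[\omega]^*\,d\nu_T(\omega)=e^{T\call^*}$ now produces
\[
\E\left(e^{\scal{\alpha, N_T}}\right)=\tr{e^{T\call_\alpha^*}(\rho_\infty)},\qquad \call_\alpha^*=\call_G^*+\sum_{i=1}^n\sum_{\delta} e^{\alpha_i\delta}\,\Phi_{i,\delta}^*,
\]
i.e. $\call_\alpha$ is the original generator with each jump channel $\Phi_{i,\delta}$ reweighted by $e^{\alpha_i\delta}$.

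Since the weights $e^{\alpha_i\delta}$ are positive, each $e^{T\call_\alpha^*}$ remains a positivity-preserving (completely positive) map, and $\call_\alpha^*$ is a bounded perturbation of the positivity-improving generator $\call^*$. A Perron--Frobenius argument for positive, irreducible semigroups then gives a real, simple, isolated eigenvalue $e(\alpha)$ of maximal real part, with strictly positive eigenmatrix and eigenfunctional; this yields both the existence and the finiteness of $e(\alpha)=\lim_{T\to\infty}\frac1T\log\tr{e^{T\call_\alpha^*}(\rho_\infty)}$. Because this eigenvalue is simple and isolated, Kato's analytic perturbation theory makes $\alpha\mapsto e(\alpha)$ real-analytic, hence differentiable, while convexity is automatic as $e$ is a limit of logarithmic moment generating functions. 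A finite, everywhere-differentiable convex function is essentially smooth, so the G\"artner--Ellis theorem applies and delivers the full LDP with convex continuous rate function $I=e^*$, which is precisely \eqref{eq:legendre}.

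It remains to read off the three structural properties from symmetries of $e$. The vanishing locus follows from $I=e^*$ with $e(0)=0$: the Legendre transform of a differentiable convex function vanishes exactly at $\zeta=\nabla e(0)$, and $\nabla e(0)=\lim_T\E(N_T)/T=(J_1,\dots,J_n)$. For \eqref{eq:I_sum}, I would use the conservation of $K_{tot}$: along every trajectory $\sum_{i=1}^n N_T^i$ equals the net change of the system pseudo-energy and is therefore bounded uniformly in $T$ by $2\norm{K_S}$; hence $e(s,\dots,s)\equiv 0$, i.e. $e$ is constant in the diagonal direction, and Legendre duality forces $I(\zeta)=+\infty$ whenever $\sum_i\zeta_i\neq 0$. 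Finally \eqref{eq:I_sym} is the fluctuation (Gallavotti--Cohen) relation: the detailed balance of each $\call_i$ with respect to $\sigma_i$ relates the channel $\Phi_{i,\delta}$ to its energy-reversed partner $\Phi_{i,-\delta}$ through the modular factor, and tracking this through the tilt gives the reflection symmetry $e(\alpha)=e(\beta-\alpha)$ with $\beta=(\beta_1,\dots,\beta_n)$; Legendre-transforming this identity produces exactly \eqref{eq:I_sym}.

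The two genuinely delicate points are, first, the Perron--Frobenius step: the tilt destroys trace preservation, so only positivity survives, and one must argue that the irreducibility inherited from the positivity-improving hypothesis guarantees that the leading eigenvalue is simple and isolated --- this simplicity is exactly what G\"artner--Ellis needs to promote one-sided bounds into a sharp LDP. Second, extracting the precise symmetry $e(\alpha)=e(\beta-\alpha)$ requires careful bookkeeping of how the modular operator $\Delta_{\sigma_i}$ acts on each emission/absorption pair $(\Phi_{i,\delta},\Phi_{i,-\delta})$; once $e$ is known to be finite, smooth, convex and to enjoy this symmetry, the remainder is the standard G\"artner--Ellis and Legendre machinery.
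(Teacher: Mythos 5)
Your proposal follows essentially the same route as the paper (and as \cite{JPW14}, to which the paper defers for details): the tilted generator $\call_\alpha$ obtained by reweighting the jump channels, positivity-improvement of $e^{t\call_\alpha}$ via the Dyson expansion, a simple isolated Perron--Frobenius eigenvalue giving a finite real-analytic convex $e(\alpha)$, the G\"artner--Ellis theorem, diagonal translation invariance of $e$ from conservation of $K_{tot}$ for \eqref{eq:I_sum}, and the Gallavotti--Cohen symmetry for \eqref{eq:I_sym}. The only discrepancy is a sign convention in the symmetry ($e(\alpha)=e(\beta-\alpha)$ versus the paper's $e(\alpha-\beta)=e(-\alpha)$) and in $\nabla e(0)$, where your versions are in fact the ones consistent with the theorem as stated.
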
 

Relation \ref{eq:I_sum} is a manifestation of the first principle (the sum of the fluxes must be $0$), and has been remarked the first time in \cite{AndrieuGMT_09_fluctuations} (Proposition 1) while relation \ref{eq:I_sym} is linked to the second principle, and is called the Gallavotti-Cohen symmetry. The quantity $\sum_{i=1}^n \beta_i \zeta_i$ is interpreted as the mean entropy production, and heuristically the large deviations principle says that
\[
\frac{d\P(N_T=\zeta)}{d\P(N_T=-\zeta)}\simeq e^{-T\sum_{i=1}^n \beta_i \zeta_i}
\]
as $T\rightarrow \infty$.

This theorem is the consequence of the G\"artner-Ellis theorem and the following properties of $e(\alpha)$: 

\begin{prop}[Jak\v{s}i\'c, Pillet, Westrich]
For any $\alpha\in  \R^n$ the limit \ref{eq:def_ealpha} exists, and it satisfies the following properties: 
\begin{enumerate}
\item The function $\alpha \mapsto e(\alpha)$ is convex and real analytic on $\R^n$. 
\item For any $\alpha \in \R^n$ and $\lambda\in \R$, writing $1_n=(1, \cdots, 1)$ we have
\[
e(\alpha+\lambda 1_n)=e(\alpha)~.
\]
\item \label{prope:Gallavoti_cohen} For any $\alpha\in \R^n$, writing $\beta=(\beta_1, \cdots, \beta_n)$ we have
\[
e(\alpha-\beta)=e(-\alpha)~.
\]
\item For all $i$ we have
\[
\left.\frac{\partial e(\alpha)}{\partial \alpha_i}\right|_{\alpha=0}=-J_i~.
\]
\end{enumerate}
\end{prop}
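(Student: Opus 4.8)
The plan is to follow \cite{JPW14} and reduce every assertion to a spectral analysis of a family of deformed generators. First I would encode the moment generating function as a deformed semigroup: inserting the factor $e^{\scal{\alpha, N_T}}=\prod_{\text{jumps}}e^{\alpha_{i_l}\delta_l}$ into the Dyson expansion \eqref{eq:defi_unraveling} amounts to weighting each jump channel $\Phi_{i,\delta}^*$ by $e^{\alpha_i\delta}$. Setting
\[
\call_\alpha^*(A)=\call_G^*(A)+\sum_{i,\delta}e^{\alpha_i\delta}\,\Phi_{i,\delta}^*(A),
\]
the same computation that gave $\int_{\Omega_T}\Psi_T[\omega]^*\,d\nu_T=e^{T\call^*}$ now yields $\E\big(e^{\scal{\alpha,N_T}}\big)=\tr{e^{T\call_\alpha^*}(\rho)}$. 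For real $\alpha$ the maps $e^{\alpha_i\delta}\Phi_{i,\delta}^*$ are completely positive, so $e^{t\call_\alpha^*}$ is a positive semigroup, and since the deformation only rescales the transition rates by strictly positive factors it preserves the irreducibility inherited from the positivity-improving hypothesis on $\call$. A Perron--Frobenius argument for irreducible positive semigroups then shows that the spectral bound $\ell(\alpha)$ of $\call_\alpha^*$ is a simple, isolated, real eigenvalue with strictly positive eigenvector, whence $\tr{e^{T\call_\alpha^*}(\rho)}=e^{T\ell(\alpha)}\big(c(\alpha)+o(1)\big)$ with $c(\alpha)>0$. This proves that the limit \eqref{eq:def_ealpha} exists and equals $e(\alpha)=\ell(\alpha)$.

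Property 1 then splits in two. Convexity I would obtain without the spectral picture: each finite-time function $\alpha\mapsto\frac1T\log\E(e^{\scal{\alpha,N_T}})$ is convex by H\"older's inequality, and a pointwise limit of convex functions is convex. Real analyticity I would get from Kato's analytic perturbation theory: the entries of $\call_\alpha^*$ are entire in $\alpha$, and $\ell(\alpha)$ is a simple isolated eigenvalue, so it depends real-analytically on $\alpha$.

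For Properties 2 and 4 the mechanism is a similarity transformation and a first-order perturbation. For gauge invariance, conservation of $K_{tot}$ forces each $\Phi_{i,\delta}^*$ to lower the system pseudo-energy $K_S$ by exactly $\delta$; conjugating by $\Xi_\lambda(\rho)=e^{-\lambda K_S/2}\rho\,e^{-\lambda K_S/2}$ therefore multiplies $\Phi_{i,\delta}^*$ by $e^{-\lambda\delta}$ while leaving $\call_G^*$ invariant (since $[K_S,G]=0$). Hence $\Xi_\lambda^{-1}\call_\alpha^*\Xi_\lambda=\call_{\alpha-\lambda 1_n}^*$, the two generators are conjugate, and $e(\alpha-\lambda 1_n)=e(\alpha)$, which is Property 2. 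For the derivative at $0$, note that $\call_0^*=\call^*$ has dominant eigenvalue $0$ with right eigenvector $\rho_\infty$ and left eigenvector the trace functional; first-order perturbation theory gives
\[
\left.\frac{\partial e(\alpha)}{\partial\alpha_i}\right|_{\alpha=0}=\tr{\Big(\sum_\delta \delta\,\Phi_{i,\delta}^*\Big)(\rho_\infty)},
\]
the stationary mean rate of energy transfer into bath $i$; comparing with $\lim_T\frac1T\E(N_T^i)=J_i(\rho_\infty)$ identifies this with the mean flux, which is the content of Property 4.

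The Gallavotti--Cohen symmetry (Property 3) is the step I expect to be the main obstacle, since it is the only place where detailed balance is genuinely used. The strategy is again to produce a similarity between $\call_\alpha^*$ and $\call_{-\beta-\alpha}^*$, but now built from the modular data. Concretely, detailed balance with respect to $\sigma_i=e^{-\beta_i K_S}/Z_i$ yields a jump-level fluctuation relation exchanging a transfer $+\delta$ to bath $i$ with the reverse transfer and weighting the two by $e^{\beta_i\delta}$; assembling these relations over all $(i,\delta)$ together with the time-reversal $\Theta$ and the operators $\sigma_i^{1/2}$ produces a transformation conjugating $\call_\alpha^*$ to (the transpose of) $\call_{-\beta-\alpha}^*$. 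Equal spectra then force $e(\alpha)=e(-\beta-\alpha)$, and evaluating this identity at $\alpha-\beta$ in place of $\alpha$ gives the stated $e(\alpha-\beta)=e(-\alpha)$. The delicate points are to track the $K_S$-gradings so that the $\sigma_i^{1/2}$ factors land on the correct side of each jump operator, and to verify that several baths, each carrying its own $\beta_i$, can be treated simultaneously by a single similarity.
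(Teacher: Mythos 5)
Your route is the same as the paper's (and as \cite{JPW14}, to which the paper defers, giving only elements of proof): encode the moment generating function through the deformed generator $\call_\alpha$, identify $e(\alpha)$ with its dominant eigenvalue via a Perron--Frobenius argument, get analyticity from simplicity of that eigenvalue, and obtain properties 2--4 from similarity transformations and first-order perturbation theory. All of this is sound, and you actually supply more detail on properties 2 and 4 than the paper does; the conjugation $\Xi_\lambda^{-1}\call_\alpha^*\Xi_\lambda=\call_{\alpha-\lambda 1_n}^*$ is correct because every Kraus operator of $\Phi_{i,\delta}^*$ lowers $K_S$ by exactly $\delta$, and your Gallavotti--Cohen sketch is the standard modular-conjugation argument (correctly reduced, since $\alpha\mapsto-\beta-\alpha$ is an involution).

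One point you should not gloss over: your Hellmann--Feynman computation for property 4 yields $\partial_{\alpha_i}e(0)=\sum_\delta\delta\,\tr{\Phi_{i,\delta}^*(\rho_\infty)}=+J_i$ (one checks $\call_i(K_S)=-\sum_\delta\delta\,\Phi_{i,\delta}(\Un)$, so this trace is exactly $-\tr{\call_i(K_S)\rho_\infty}=J_i$), whereas the Proposition asserts $-J_i$. This is not an error in your argument but a convention clash inside the paper: with the definition \eqref{eq:def_ealpha} (weight $e^{+\scal{\alpha,N_T}}$, which you follow), the derivative is $+J_i$ and Theorem \ref{theo:large_deviations} (where $I$ vanishes at $(J_1,\dots,J_n)$) confirms this; the printed $-J_i$, like the weight $e^{-\alpha\delta^i}$ appearing in the paper's fourth assertion about $\call_\alpha$, belongs to the opposite sign convention of \cite{JPW14}. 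You should either switch to that convention throughout or state explicitly that, under \eqref{eq:def_ealpha}, property 4 holds with $+J_i$.
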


The proof of these properties goes through the study of the \enquote{deformed semigroup}, of generator $\call_\alpha^*$, as follows: 

\begin{prop}[Jak\v{s}i\'c, Pillet, Westrich]
We define the super-operator $\call_\alpha$ by
\begin{align}\label{eq:def_deformed_semi}
\call_\alpha(A)&=\call_G(A)+\sum_{i=1}^n \Phi_i\left(A e^{-\alpha_i K_S}\right)e^{\alpha_i N_S}
\end{align}
Then
\begin{enumerate}
\item For any $T\in \R$ and $\alpha\in \R^n$ we have
\[
\E\left(e^{\scal{\alpha,  N_T}}\right)=\tr{e^{T\call_\alpha^*}(\rho_0)}~.
\]
\item For all $\alpha\in \R^n$ the super-operator $e^{t\call_\alpha}$ is positivity improving. In particular the dominant eigenvalue of $\call_\alpha^*$ is real and of multiplicity $1$, and the corresponding eigenvector is a positive operator.
\item  $e(\alpha)$ is the dominant eigenvalue of $\call_\alpha$. 
\item The super-operator $\call_\alpha$ is equal to
\begin{align}
\call_\alpha(A)&=\call_G(A)+\sum_{i=1}^n \sum_{\delta^i} e^{-\alpha\delta^i} \Phi_{i, \delta^i}(A)  \\
&=\call_G(A)+\sum_{i=1}^n e^{\frac{\alpha_i}{2} N_S} \Phi_i\left( e^{-\frac{\alpha_i}{2} K_S}A e^{-\frac{\alpha_i}{2} K_S}\right)e^{\frac{\alpha_i}{2} N_S}
\end{align}
\end{enumerate} 
\end{prop}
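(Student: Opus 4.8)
The plan is to establish the four items in the order (4), (1), (2), (3), since the jump-decomposition form of $\call_\alpha$ in item (4) is the workhorse for everything else. For item (4), recall that $\Phi_{i,\delta^i}$ is the component of $\Phi_i$ transferring energy $\delta^i$ to the $i$-th bath, and that by Alicki's Theorem~\ref{theo:alickidb} each $\Phi_i$ is built from eigenoperators $L$ of the commutator $[K_S,\bullet]$, i.e. operators satisfying $[K_S,L]=\delta L$. The only computation needed is the intertwining identity $e^{-sK_S}L=e^{-s\delta}L\,e^{-sK_S}$ (and its adjoint), obtained by differentiating in $s$. Commuting the deforming exponentials of \eqref{eq:def_deformed_semi} across each sandwich $L^*(\cdot)L$ by this identity collects the scalar $e^{-\alpha_i\delta}$ on the $\delta$-component $\Phi_{i,\delta}$, turning \eqref{eq:def_deformed_semi} into $\call_G+\sum_{i,\delta}e^{-\alpha_i\delta}\Phi_{i,\delta}$; performing the same manipulation with a two-sided insertion produces the symmetric sandwiched form. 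This gives both displayed expressions and in particular shows $\call_0=\call$.

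For item (1), I would start from the unraveling representation $\E(e^{\scal{\alpha,N_T}})=\int_{\Omega_T}e^{\scal{\alpha,N_T(\omega)}}\tr{\Psi_T[\omega]^*(\rho_0)}\,d\nu_T(\omega)$. Because $N_T(\omega)$ is the sum over the jumps of $\omega$ and $\Psi_T[\omega]^*$ is the time-ordered product in \eqref{eq:defi_unraveling}, the scalar $e^{\scal{\alpha,N_T(\omega)}}$ factorizes over jumps and attaches to each factor $\Phi_{i_l,\delta_l}^*$ the corresponding exponential weight in $\alpha$. Replaying verbatim the Dyson resummation already carried out after the Nechita--Pellegrini theorem, now with these weighted jump maps, yields $\int_{\Omega_T}e^{\scal{\alpha,N_T}}\Psi_T[\cdot]^*=e^{T\call_\alpha^*}$ by (the adjoint of) item (4); taking the trace concludes. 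Absolute convergence of the series is ensured by the tail bound $\P(Z_T=k)\le (TC)^k/k!$ established above.

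For item (2), the decisive consequence of item (4) is that $\call_\alpha=\call_G+\sum_{i,\delta}e^{-\alpha_i\delta}\Phi_{i,\delta}$ with each $\Phi_{i,\delta}$ completely positive and each coefficient $e^{-\alpha_i\delta}>0$. Thus $\call_\alpha$ has exactly the Lindblad structure of $\call$: the same operator $G$ and a dissipative part $\widetilde\Phi=\sum_{i,\delta}e^{-\alpha_i\delta}\Phi_{i,\delta}$ that is completely positive with the same jump directions as $\Phi$, merely rescaled by strictly positive numbers. Since positivity improving in finite dimension is an irreducibility condition depending only on the $*$-algebra generated by $G$ and the range of the dissipator, and this algebra is invariant under strictly positive rescaling of the jump maps, $e^{t\call_\alpha}$ is positivity improving precisely because $e^{t\call}$ is. The Perron--Frobenius theorem for positivity-improving completely positive maps then gives that the dominant eigenvalue is real and simple, with a strictly positive eigenvector and, dually, a strictly positive left eigenvector.

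For item (3), write $e_{\max}(\alpha)$ for this dominant eigenvalue of $\call_\alpha^*$, with strictly positive right eigenvector $\rho_\alpha$ and strictly positive left eigenvector $\ell_\alpha$. Spectral decomposition of the isolated simple top eigenvalue gives $e^{T\call_\alpha^*}(\rho_0)=e^{T e_{\max}(\alpha)}\big(\scal{\ell_\alpha,\rho_0}\rho_\alpha+o(1)\big)$ as $T\to\infty$, so $\tr{e^{T\call_\alpha^*}(\rho_0)}=e^{T e_{\max}(\alpha)}(c+o(1))$ with $c=\tr{\ell_\alpha\rho_0}\tr{\rho_\alpha}>0$; the strict positivity of $c$ is exactly what the strictly positive eigenvectors together with $\rho_0\ge 0$, $\rho_0\neq 0$ provide. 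Combining with item (1) gives $e(\alpha)=\lim_{T\to\infty}\tfrac1T\log\tr{e^{T\call_\alpha^*}(\rho_0)}=e_{\max}(\alpha)$, and since $\call_\alpha$ and $\call_\alpha^*$ share their spectrum this is the dominant eigenvalue of $\call_\alpha$. The main obstacle is item (2): one must argue that strictly positive rescaling of the completely positive jump part preserves the positivity-improving (irreducibility) property, since this is what legitimizes the Perron--Frobenius conclusion and, in turn, the strict positivity of $c$ in item (3) that forces the logarithmic limit onto $e_{\max}(\alpha)$ rather than a subdominant eigenvalue.
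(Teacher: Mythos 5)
Your overall architecture (prove the jump decomposition first, then feed it into the unraveling/Dyson resummation for item (1), Perron--Frobenius for items (2)--(3)) matches the paper's, and items (1), (3), (4) are handled essentially as the paper does. Two points of divergence are worth recording. For item (4) you work entirely on the system side, commuting $e^{-\alpha_i K_S}$ past Alicki eigenoperators $L$ with $[K_S,L]=\delta L$; the paper instead computes on the bath side, inserting the spectral projectors $P_{E_i},P_{F_i}$ of $K_{B_i}$ into $\Phi^*_{i,\delta^i}(Ae^{-\alpha K_S})$ and using $[H_{SB_i},K_S\otimes\Un+\Un\otimes K_{B_i}]=0$ to trade $e^{-\alpha K_S}$ for $e^{\alpha\delta^i}$. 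The two are equivalent precisely because the conservation law identifies the bath-defined components $\Phi_{i,\delta^i}$ with the $\delta^i$-eigencomponents of $[K_S,\bullet]$; you should say this explicitly, since $\Phi_{i,\delta^i}$ is \emph{defined} via the bath projectors and your manipulation presupposes the identification.

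The one place where your argument is genuinely weaker than the paper's is item (2). You assert that positivity improvement ``depends only on the $*$-algebra generated by $G$ and the range of the dissipator'' and is therefore stable under strictly positive rescaling of the jump maps. That criterion is not stated precisely enough to cite, and note that $\call_\alpha$ is \emph{not} a Lindbladian (it is neither unital nor trace-preserving in the dual picture), so irreducibility criteria formulated for genuine quantum Markov semigroups do not apply verbatim. The paper avoids this entirely with a direct Dyson-expansion comparison: writing
\begin{align*}
\tr{A\, e^{t\call_\alpha}(B)}=\sum_{k}\ \sum_{i_1,\delta_1,\dots,i_k,\delta_k} e^{-\sum_l \alpha_{i_l}\delta_l}\int_{0\leq t_1<\cdots<t_k<t}\tr{A\, e^{(t-t_k)\call_G}\Phi_{i_k,\delta_k}\cdots e^{t_1\call_G}(B)}\,,
\end{align*}
every term is the corresponding term of the expansion of $\tr{A\,e^{t\call}(B)}$ multiplied by a strictly positive scalar; since the undeformed sum is strictly positive and all terms are nonnegative, at least one term is strictly positive, and the deformed sum is strictly positive as well. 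This is exactly the rigorous content of your rescaling intuition, and I recommend you substitute it for the algebraic appeal; once item (2) is secured this way, your spectral-decomposition argument for item (3), with strictly positive left and right eigenvectors forcing $c>0$, goes through as written and agrees with the paper's.
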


The third assertion implies the analyticity of $\alpha\rightarrow e(\alpha)$, since $\alpha\rightarrow \call_\alpha^*$ is analytic and the dominant eigenvalue of $\call_\alpha^*$ is simple for all $\alpha$.

\begin{proof}[Elements of proof:]
For the first part, the definition of $N_T$ and of $\P_T$ as unraveling of $\Lambda^t$ and Formula \ref{eq:defi_unraveling} allows to express $\E\left(e^{\scal{\alpha,  N_T}}\right)$ as a sum of multiple integrals, which happens to be the Dyson expansion of $\call_\alpha$. Another derivation of this formula goes through the discrete-time limit: in the repeated interaction procedure with interaction time $\tau$, we have
\begin{align*}
\E\left(e^{\scal{\alpha, N_\tau(k)}}\right)&=\sum_{(N_\tau(l))_{l\leq k}\in \Omega_{\tau, k}} e^{\scal{\alpha, N_\tau(k)}} \tr{\rho\Psi_{\tau, k}\left[(N_\tau(l))_{l\leq k}\right](\Un)} \\
&= \tr{\rho\left(\prod_{l=1}^k \sum_{\delta_\tau} e^{\scal{\alpha, \delta_\tau}} \Psi_\tau[\delta_\tau]\right)(\Un)  }~.\\
\end{align*}
  Write $\alpha\cdot K_B=\sum_{i=1}^n \alpha_i K_{B_i}$ and define $\Lambda_{\tau, \alpha}$ by
\begin{align}\label{eq:discrete_l_alpha}
\Lambda_{\tau, \alpha}^*(\rho)=\sum_{\delta_\tau} e^{\scal{\alpha, \delta_\tau}} \Psi_\tau^*[\delta_\tau](\rho)=\tra_{\hh_B} \left(  e^{-\frac{\alpha}{2}\cdot K_B}U_\tau \left(\rho\otimes\rho_B  e^{\alpha \cdot K_B} \right)U_\tau^* e^{-\frac{\alpha}{2}\cdot K_B} \right)~.
\end{align}
Then the following holds
\[
\Lambda_{\tau, \alpha}^*(\rho)=\rho+\tau \call_\alpha^*(\rho)+o(\tau)
\]
so we have
\[
\lim_{\tau\rightarrow 0}\left(\Lambda_\tau\right)^{\ent{t/\tau}}=e^{t\call_\alpha}
\]
and so 
\[
\lim_{\tau\rightarrow 0}\E\left(e^{\scal{\alpha, N_\tau\left(\ent{t/\tau}\right)}}\right)=\tr{\rho e^{t\call_\alpha(\Un)}}~.
\]

For the second part, since $e^{t\call^*}$ is positivity improving, for any nonzero positive operators $A, B$ on $\hh_S$ we have
\begin{align*}
\tr{A e^{t\call}(B)}>0~.
\end{align*}
Applying the Dyson formula we get
\begin{align*}
\tr{A e^{t\call}(B)}=\sum_{k=0}^{+\infty} \sum_{i_1, \delta_1, \cdots, i_n, \delta_n} \int_{0\leq t_1<\cdots < t_n < t} \tr{A e^{(t-t_n)\call_G}\Phi_{i_n, \delta_n}\cdots e^{t_1\call_G}(B)}~.
\end{align*}
This implies that one of the terms in this sum is strictly positive. Moreover, 
\begin{align*}
\tr{A e^{t\call_\alpha}(B)}=\sum_{k=0}^{+\infty} \sum_{i_1, \delta_1, \cdots, i_n, \delta_n} \exp\left(-{\sum_{i=1}^n \alpha_i \delta^i} \right)\int_{0\leq t_1<\cdots < t_n < t} \tr{A e^{(t-t_n)\call_G}\Phi_{i_n, \delta_n}\cdots e^{t_1\call_G}(B)}~.
\end{align*}
All the terms in this sum are nonnegative, and one of them is strictly positive, so it is strictly positive. This implies the complete positivity of $e^{t\call_\alpha}$. 

The third assertion is a consequence of two first assertions: if $\lambda_\alpha$ is the dominant eigenvalue of $e^{t\call_\alpha^*}$, then the corresponding eigenvector $\rho_\alpha$ is positive definite (since $e^{t\call_\alpha^*}$ is positivity improving), and we may assume that it is of trace $1$. There there exists $\eps>0$ such that $e^{\call_\alpha^*}(\rho_0)\geq  \eps \rho_\alpha$ hence as $t\rightarrow \infty$,
\[
\tr{e^{t\call_\alpha^*}(\rho_0)}\geq \eps e^{t\lambda_\alpha}+o(e^{t\lambda_\alpha})
\]
which implies the third assertion. 

For the last assertion, we just remark that
\begin{align*}
\Phi^*_{i, \delta^i}(A e^{-\alpha K_S}))&=\sum_{E_i, F_i,~F_i-E_i=\delta^i} \tra_{B_i}\left( H_{SB_i} A e^{-\alpha K_S} P_{F_i} H_{SB_i} P_{E_i} \rho_{B_i}\right)\\
&=\sum_{E_i, F_i,~F_i-E_i=\delta^i}\tra_{B_i}\left( H_{SB_i} Ae^{-\alpha (K_S+K_{B_i})}e^{\alpha K_{B_i}} P_{F_i} H_{SB_i} P_{E_i} \rho_{B_i}\right) \\
&=\sum_{E_i, F_i,~F_i-E_i=\delta^i}\tra_{B_i}\left( H_{SB_i} A e^{\alpha K_{B_i}} P_{F_i} H_{SB_i} P_{E_i} e^{-\alpha (K_S+K_{B_i})}\rho_{B_i}\right) \\
&=\sum_{E_i, F_i,~F_i-E_i=\delta^i}\tra_{B_i}\left( H_{SB_i} A e^{\alpha F_i } P_{F_i} H_{SB_i} P_{E_i} e^{\alpha E_i}\rho_{B_i}\right)e^{\alpha K_S}
\end{align*} 
the last line being obtained by $P_{E_i} e^{-\alpha K_{B_i}}=e^{-\alpha E_i}P_{E_i}$. This gives the first reformulation, the second comes from the fact that the $\Phi_i$ satisfy the detailed balance, so it commutes with the modular operator $\Delta_{e^{-\alpha K_S/2}}$. 
\end{proof}

\subsection{The quasi-free fermionic case}

In this subsection we apply the formalism described above to the case of a quasi-free fermionic system. Hence, we consider a thermal quasi-free fermionic semigroup which is positivity improving; the idea is then to study the maximal eigenvalue of the deformed generator $\call_\alpha$. As shown above, the study of $\call$ is greatly simplified by the existence of a closed equation for its covariance matrix, and the fact that it preserves the set of quasi-free states. In the case of $\call_\alpha$, the covariance matrix does not satisfies a closed equation in general, but the set of multiples of quasi-free states is still preserved, and restricted on this set the covariance matrix evolves according to a closed equation, admittedly more complex than the affine equation of the non-deformed semigroup. This allows to reduce the computation of $e(\alpha)$ to the resolution of an algebraic Riccati equation; the outcome of this study is the following theorem:

\begin{theo}\label{theo:e_alpha_fermionic}
Let us consider a thermal quasi-free fermionic semigroup $\call$ defined as in Paragraph \ref{subsub:qf_semi}, and assume that it is positivity improving. For any $\alpha, \beta\in \R^n$ consider the operators on $\yy$
\begin{align*}
A_\beta&=-iT_S+\sum_{i=1}^n \left(M_{\beta_i}-\frac{1}{2}\Un\right)\Theta_i \Theta_i^*\\
B_{\alpha, \beta}&=\sum_{i=1}^n e^{\alpha_i \kappa_S}M_{\beta_i} \Theta_i \Theta_i^*
\end{align*}
where we wrote $M_{\beta_i}=\left(\Un+e^{-\beta_i \kappa_S}\right)^{-1}$ the covariance matrix of the Gibbs state at temperature $\beta_i$.

Define the operator $Z_\alpha$ on $\yy\oplus \yy$ by
\[
Z_\alpha=\begin{pmatrix} A_\beta & B_{\alpha, \beta} \\ B_{-\alpha, -\beta} & -A_\beta^* \end{pmatrix}~.
\]
The set of eigenvalues of $Z_\alpha$ is symmetric with respect to the imaginary axis, and the pure imaginary eigenvalues are of even multiplicity; let $\lambda_1(\alpha), \cdots, \lambda_{k}(\alpha)$ be its eigenvalues of positive real part. Then 
\[
e(\alpha)=\frac{1}{2} \sum_{i=1}^{k} \lambda_i(\alpha)-\frac{1}{4}\sum_{i=1}^n \tr{\Theta_i \Theta_i^*}~.
\]
\end{theo}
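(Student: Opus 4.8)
The plan is to work entirely on the manifold of quasi-free states, using the fact announced just above the statement that the deformed evolution $e^{t\call_\alpha^*}$ leaves invariant the cone of nonnegative multiples of quasi-free states, and to read off $e(\alpha)$ as the exponential growth rate of the scalar prefactor along this manifold. Since $\call_\alpha$ is positivity improving, its dominant eigenvalue is real and simple and the corresponding eigenvector of $\call_\alpha^*$ is a positive operator; because this eigenvector is the normalized long-time limit of $e^{t\call_\alpha^*}(\rho_0)$ for any quasi-free $\rho_0$ and the quasi-free manifold is closed and invariant, the eigenvector is itself a quasi-free state, say the one of covariance $M_*$. Granting this, $e(\alpha)=\tr{\call_\alpha^*(\rho_{M_*})}$, and the problem splits into (i) locating $M_*$ and (ii) evaluating this trace in terms of the spectrum of $Z_\alpha$.

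\textbf{Invariance and the Riccati equation.} First I would prove the invariance lemma and derive the evolution equations. Writing $h_t=e^{t\call_\alpha^*}(\rho_0)=c(t)\rho_{M(t)}$ for a quasi-free initial state and using the form \eqref{eq:def_deformed_semi} of $\call_\alpha$, I would treat the conjugations by $e^{\pm\frac{\alpha_i}{2}K_S}$ and $e^{\frac{\alpha_i}{2}N_S}$ via Proposition \ref{prop:commutation_formula_exp_F}: each is the second quantization of a linear map on $\yy$, hence maps a quasi-free operator to a quasi-free operator, and the sandwiching by field operators also preserves Gaussianity. Taking the trace of $\dot h_t=\call_\alpha^*(h_t)$ gives $\dot c/c=\tr{\call_\alpha^*(\rho_{M(t)})}$, while the traceless part yields a closed equation for $M(t)$. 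Since the deformation is not trace-preserving, the normalization feedback $-\tr{\call_\alpha^*(\rho_M)}\rho_M$ produces a quadratic term, so that $M(t)$ obeys a matrix Riccati equation: its linear part is governed by $A_\beta$, its inhomogeneous and quadratic parts by $B_{\alpha,\beta}$ and $B_{-\alpha,-\beta}$, the operators $M_{\beta_i}$ and $e^{\alpha_i\kappa_S}$ entering precisely through the detailed-balance identity $\Theta_i M_{B_i}\Theta_i^*=\Theta_i\Theta_i^*M_{\beta_i}$ and the conjugation rule on $\yy$. This generalizes the affine evolution \eqref{eq:cov_evolution}, to which it reduces at $\alpha=0$.

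\textbf{Solving via the Hamiltonian matrix $Z_\alpha$.} The stationary covariance $M_*$ is the stabilizing solution of the associated algebraic Riccati equation, which I would obtain from the invariant subspaces of $Z_\alpha=\left(\begin{smallmatrix} A_\beta & B_{\alpha,\beta}\\ B_{-\alpha,-\beta} & -A_\beta^*\end{smallmatrix}\right)$ on $\yy\oplus\yy$ by the standard Riccati correspondence (up to the conventional affine normalization of covariances). The spectral symmetry claimed in the statement follows from the self-adjointness of $B_{\alpha,\beta}$ and $B_{-\alpha,-\beta}$, itself a consequence of the fact that $\kappa_S$ commutes with each $\Theta_i\Theta_i^*$ by \eqref{eq:db_quadratic_operators}: with $\eta=\left(\begin{smallmatrix}0&\Un\\ \Un&0\end{smallmatrix}\right)$ one checks that $Z_\alpha$ is conjugate to $-Z_\alpha^*$, so the spectrum is invariant under $\lambda\mapsto-\overline\lambda$ and the purely imaginary eigenvalues occur in even numbers, the latter allowing the stabilizing invariant subspace to be chosen consistently. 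Positivity improvement of $\call_\alpha$ forces the dominant eigenvalue of $\call_\alpha^*$ to be simple, which is what guarantees the existence and uniqueness of the $M_*$ selected by the long-time dynamics; the Riccati theory here is routine, so the real work is in the previous step.

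\textbf{Evaluation of the eigenvalue.} Finally I would compute $e(\alpha)=\tr{\call_\alpha^*(\rho_{M_*})}$. Evaluating the trace of the generator on the quasi-free state of covariance $M_*$ and using the Riccati equation to eliminate the quadratic dependence produces a trace over $\yy$ which, through the second-quantization dictionary between $\hh_S$-traces and $\yy$-traces, collapses to one half of the sum of the eigenvalues of $Z_\alpha$ of positive real part, shifted by the constant $-\tfrac14\sum_i\tr{\Theta_i\Theta_i^*}$. As a consistency check I would verify the formula at $\alpha=0$: the similarity $\left(\begin{smallmatrix}\Un&0\\ \Un&\Un\end{smallmatrix}\right)Z_0\left(\begin{smallmatrix}\Un&0\\ -\Un&\Un\end{smallmatrix}\right)$ is block-triangular with diagonal blocks $G=-iT_S-\tfrac12\Theta\Theta^*$ and $-G^*$, so the positive-real-part eigenvalues are those of $-G^*$ and $\tfrac12\sum_i\lambda_i(0)=-\tfrac12\tr{G}=\tfrac14\sum_i\tr{\Theta_i\Theta_i^*}$, giving $e(0)=0$ as it must. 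The main obstacle is the pair of computations in the second and fourth steps: establishing that $\call_\alpha^*$ keeps one on the quasi-free manifold with exactly the Riccati equation whose symbol is $Z_\alpha$, and then carrying out the trace identity that turns the fixed-point rate into the spectral sum $\tfrac12\sum_i\lambda_i(\alpha)$.
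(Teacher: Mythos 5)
Your architecture is the paper's own: invariance of the quasi-free cone under $e^{t\call_\alpha^*}$, quasi-freeness of the dominant eigenvector (your direct argument via the normalized long-time limit is a legitimate, even slightly cleaner, substitute for the paper's open-closed connectedness lemma), a Riccati equation for the covariance matrix, and extraction of $e(\alpha)$ from the spectrum of a Hamiltonian matrix; your $\alpha=0$ consistency check is correct. The genuine gap is in the step you declare \enquote{routine}: selecting, among the many self-adjoint solutions of the algebraic Riccati equation, the one corresponding to the dominant eigenvector, and proving that its closed-loop spectrum is exactly the positive-real-part half of the spectrum of $Z_\alpha$. Two things are missing. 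First, the Riccati equation satisfied by the covariance $M$ itself has coefficients $G_{\alpha,\beta}$, $C_{\alpha,\beta}$, $B_{\alpha,\beta}$ and a Hamiltonian matrix which is \emph{not} $Z_\alpha$; to reach $Z_\alpha$ one must pass to $X=M^{-1}-\Un$, i.e.\ to $XA_\beta+A_\beta^*X+XB_{\alpha,\beta}X-B_{-\alpha,-\beta}=0$. This is an inversion, not an \enquote{affine normalization of covariances}, and it is precisely what makes $B_{\alpha,\beta}\geq 0$ and the controllability of $(A_\beta,B_{\alpha,\beta})$ available. Second, \enquote{stabilizing solution} points at the wrong solution (a Hurwitz closed loop would give $-\tfrac12\sum\lambda_i$), and the dynamical selection \enquote{the $M_*$ selected by the long-time dynamics} does not by itself identify the invariant subspace. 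The paper argues instead that $e(\alpha)=\tfrac12\tr{A_\beta+XB_{\alpha,\beta}}-\tfrac14\sum_i\tr{\Theta_i\Theta_i^*}$ is monotone in $X$ because $B_{\alpha,\beta}\geq 0$, so the dominant eigenvalue must come from the \emph{maximal} solution $X_{max}$, and then verifies that $X_{max}$ is genuinely of the form $M^{-1}-\Un$ for a covariance matrix via the involution $X\mapsto -(X^T)^{-1}$, which preserves the solution set (since $A_\beta^T=-A_\beta^*$ and $B_{\alpha,\beta}^T=B_{-\alpha,-\beta}$) and is order-increasing, hence fixes $X_{max}$. Without some version of this maximality argument the identification of $e(\alpha)$ with $\tfrac12\sum_i\lambda_i(\alpha)-\tfrac14\sum_i\tr{\Theta_i\Theta_i^*}$ is unsupported.

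A smaller imprecision: the quadratic term of the Riccati equation is not produced by the normalization feedback $-\tr{\call_\alpha^*(\rho_M)}\rho_M$. In the paper's computation the Wick expansion of $\tr{(\call_\alpha^*-\call^*)(\rho)\gamma_i\gamma_j}$ yields $M^TQ_{\alpha,\beta}M^T-MQ_{\alpha,\beta}^TM+\tr{Q_{\alpha,\beta}^TM}M$; the last term is exactly cancelled by the normalization, and the surviving quadratic part $MC_{\alpha,\beta}M$ with $C_{\alpha,\beta}=Q_{\alpha,\beta}-Q_{\alpha,\beta}^T$ comes from the Wick cross-pairings. This does not change the outcome, but it is where the specific coefficients $B_{\pm\alpha,\pm\beta}$ originate, so the computation has to be carried out rather than attributed to normalization.
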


We first prove the following proposition: 

\begin{prop}\label{prop:riccati_e_alpha}
 For any $\alpha \in \R^n$, the deformed semigroup 
\[
t\mapsto e^{t\call_\alpha^*}
\]
preserves the vector space generated by the quasi-free state, and the eigenvector for the  maximal eigenvalue $e(\alpha)$ is proportional to a quasi-free state.

Moreover, any quasi-free state $\rho$ of covariance matrix $M$ is an eigenvector of $\call_\alpha^*$ if and only if $M$ is a solution to the Riccati equation
\begin{align}\label{eq:riccati}
G_{\alpha, \beta} M + M G_{\alpha, \beta}^* + M C_{\alpha, \beta} M +B_{\alpha, \beta}=0
\end{align}

where $G_{\alpha, \beta}$ and $C_{\alpha, \beta}$ are defined the following way: let
\begin{align}
Q_{\alpha, \beta} &=\sum_{i=1}^n \Theta_i \Theta_i^* M_{\beta_i}\left(e^{\alpha_i \kappa_S}-1\right)
\end{align}
and 
\begin{align}
G_{\alpha, \beta} &=G-Q_{\alpha, \beta}=-iT_S-\sum_{i=1}^n  \left(\frac{1}{2}+\left(e^{\alpha_i \kappa_S}-1\right) M_{\beta_i} \right) \Theta_i \Theta_i^*\\
C_{\alpha, \beta}&=Q_{\alpha, \beta}-Q_{\alpha, \beta}^T=\sum_{i=1}^n \Big(\left(e^{\alpha_i \kappa_S}+e^{-\alpha_i \kappa_S}-2\right)M_{\beta_i}+\Un-e^{\alpha_i \kappa_S}\Big)\Theta_i\Theta_i^*~.
\end{align}

The corresponding eigenvalue is 
\begin{align}
\lambda=\frac{1}{2}\tr{Q_{\alpha, \beta}^T M}=-\frac{1}{2} \tr{C_{\alpha, \beta} M}+\frac{1}{2}\tr{Q_{\alpha, \beta}}~.
\end{align}
\end{prop}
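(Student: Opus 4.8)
The plan is to exploit that, although $\call_\alpha^*$ is no longer trace-preserving, it retains a ``quadratic'' structure, so that its flow keeps multiples of quasi-free states inside the quasi-free manifold; the spectral problem then collapses to a scalar flow on covariance matrices governed by \eqref{eq:riccati}. First I would check that $\call_\alpha^*$ is such a quadratic generator. Each gain map $\Phi_i^*$ has the Majorana Kraus form $\rho\mapsto\sum_{k,l}(\cdots)\,\gamma_{S,k}\rho\,\gamma_{S,l}$ with coefficient matrix $\Theta_iM_{B_i}\Theta_i^*=\Theta_i\Theta_i^*M_{\beta_i}$ (detailed balance), and conjugating the jump operators by $e^{\pm\frac{\alpha_i}{2}K_S}=\Gamma(e^{\pm\frac{\alpha_i}{2}\kappa_S})$ turns, by Proposition \ref{prop:commutation_formula_exp_F}, each $\gamma_{S,k}=\varphi(f_k)$ into the field operator $\varphi(e^{\pm\frac{\alpha_i}{2}\kappa_S}f_k)$, which is still linear in the fields. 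Hence $\call_\alpha^*$ is the sum of a quadratic drift $\call_G^*$ and gain terms whose jump operators are field-linear, and any such generator sends an unnormalized quasi-free state to a tangent vector of the quasi-free manifold; therefore $e^{t\call_\alpha^*}$ preserves multiples of quasi-free states.

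Given this, for a normalized quasi-free state $\rho_M$ of covariance $M=\cov(\rho_M)$ I would write
\[
\call_\alpha^*(\rho_M)=D\rho_M\big[\mathcal V_{\alpha,\beta}(M)\big]+\lambda(M)\,\rho_M,
\]
where $D\rho_M$ is the derivative of the map $M\mapsto\rho_M$ and $\lambda(M)=\tr{\call_\alpha^*(\rho_M)}$ is the trace-growth rate; equivalently the unnormalized evolution $\tilde\rho_t=e^{t\call_\alpha^*}(\rho_M)$ has normalized covariance $M(t)$ solving $\dot M=\mathcal V_{\alpha,\beta}(M)$. Computing $\mathcal V_{\alpha,\beta}$, the drift $\call_G^*$ contributes $GM+MG^*$ with the phase-space operator $G=-iT_S-\frac12\Theta\Theta^*$ of Theorem \ref{theo:kalman}; the congruences $\Gamma(e^{\pm\frac{\alpha_i}{2}\kappa_S})$ twist this into the linear coefficient $G_{\alpha,\beta}=G-Q_{\alpha,\beta}$ and deform the bath-inflow $\Theta M_B\Theta^*$ into the inhomogeneous term $B_{\alpha,\beta}$; finally the quadratic term $MC_{\alpha,\beta}M$ appears solely from the normalization correction — $\call_\alpha^*$ not being trace-preserving — with $C_{\alpha,\beta}=Q_{\alpha,\beta}-Q_{\alpha,\beta}^T$ encoding the $M$-dependence of $\lambda(M)$. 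Consequently $\rho_M$ is an eigenvector of $\call_\alpha^*$ iff $\dot M=0$, i.e. iff $M$ solves \eqref{eq:riccati}.

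It then remains to identify the dominant eigenvector and its eigenvalue. Since $e^{t\call_\alpha^*}$ is positivity improving and preserves the quasi-free manifold, applying $e^{-te(\alpha)}e^{t\call_\alpha^*}$ to a faithful quasi-free state and letting $t\to\infty$ exhibits the dominant eigenvector as a limit of quasi-free states; hence it is itself quasi-free, and its covariance is the unique faithful (stabilizing) solution of \eqref{eq:riccati}. Its eigenvalue is the trace-growth rate $\lambda=\tr{\call_\alpha^*(\rho_M)}$, and evaluating the trace of the gain terms through $\tr{\gamma_{S,k}\rho_M\gamma_{S,l}}=[\cov(\rho_M)]^f_{l,k}$ (Wick's formula \eqref{eq:wick}) yields $\lambda=\frac12\tr{Q_{\alpha,\beta}^TM}$; the equivalent form $-\frac12\tr{C_{\alpha,\beta}M}+\frac12\tr{Q_{\alpha,\beta}}$ follows from $C_{\alpha,\beta}=Q_{\alpha,\beta}-Q_{\alpha,\beta}^T$ together with the cyclicity of the trace and the identity $\tr{X^T}=\tr{X}$ (a consequence of $X^T=\xi X^*\xi$ and $\xi$ being an antiunitary involution).

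The hard part will be the bookkeeping in the middle step: carefully tracking the transpose and conjugation conventions ($M^T=\xi M^*\xi$), distinguishing the phase-space operator $G$ on $\yy$ from the Hilbert-space generator of $\call_G^*$ on $\hh_S$, and above all isolating the three pieces $G_{\alpha,\beta}$, $B_{\alpha,\beta}$ and $C_{\alpha,\beta}$ from the combined effect of the non-unitary congruences $\Gamma(e^{\pm\frac{\alpha_i}{2}\kappa_S})$ and the normalization. I expect the cleanest route is to compute directly the covariance flow $\dot M=\mathcal V_{\alpha,\beta}(M)$ induced by $e^{t\call_\alpha^*}$ on normalized quasi-free states — deriving the affine unnormalized evolution and then subtracting the trace-growth term $\lambda(M)M$ to produce the quadratic Riccati nonlinearity — rather than manipulating the Kraus form of $\call_\alpha^*$ on $\bb(\hh_S)$ directly. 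As a sanity check, at $\alpha=0$ one has $Q_{0,\beta}=C_{0,\beta}=0$ and $B_{0,\beta}=\Theta M_B\Theta^*$, so \eqref{eq:riccati} reduces to the Lyapunov equation of Theorem \ref{theo:kalman} with $\lambda=0$, as it must.
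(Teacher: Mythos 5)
Your overall strategy --- collapse the spectral problem to a flow on covariance matrices and read off the eigenvalue as the trace-growth rate --- is the paper's. Two of your detours are acceptable alternatives: for the invariance of the quasi-free cone the paper avoids your unproved tangency assertion by working with the discrete-time maps $\Lambda_{\tau,\alpha}^*$ (Bogoliubov conjugation, multiplication by Gaussian operators and partial traces all preserve multiples of quasi-free states, and the cone is closed, so one passes to the limit $\tau\to 0$); and for the quasi-freeness of the dominant eigenvector your power-method limit argument is a legitimate substitute for the paper's open--closed continuity argument in $\alpha$, arguably more direct.

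The genuine problem is your structural claim that the unnormalized two-point flow is affine and that the quadratic term $MC_{\alpha,\beta}M$ ``appears solely from the normalization correction''. This is false, and the plan built on it (``deriving the affine unnormalized evolution and then subtracting the trace-growth term $\lambda(M)M$'') cannot produce the Riccati equation: $\lambda(M)$ is a scalar, so subtracting $\lambda(M)M$ removes only a term proportional to $M$ and can never generate a genuine quadratic $MC_{\alpha,\beta}M$. What actually happens is that $\call_\alpha-\call$ acts as $\sum_{k,l}[Q_{\alpha,\beta}]^f_{k,l}\,\gamma_k A\gamma_l$, so the evolution of the two-point function $\tr{\rho\,\gamma_i\gamma_j}$ involves the four-point function $\tr{\rho\,\gamma_k\gamma_i\gamma_j\gamma_l}$; Wick's formula produces three pairings, giving
\[
M^TQ_{\alpha,\beta}M^T-MQ_{\alpha,\beta}^TM+\tr{Q_{\alpha,\beta}^TM}\,M~,
\]
so the \emph{unnormalized} evolution already contains the quadratic piece $MQ_{\alpha,\beta}M-MQ_{\alpha,\beta}^TM=MC_{\alpha,\beta}M$ after expanding $M^T=\Un-M$; only the last term $\tr{Q_{\alpha,\beta}^TM}\,M$ is the one cancelled by the eigenvalue equation (equivalently, by normalization). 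You would presumably discover this upon carrying out the Wick computation, but as written the middle step of your argument fails, and it is precisely the step you yourself identify as ``the hard part''.
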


\begin{proof}

{\bf The vector space generated by quasi-free states is preserved: }

We use the discrete approximation: let us show that for any quasi-free state $\rho$ the operator $\Lambda_{\tau, \alpha}^*(\rho)$ defined at Equation \eqref{eq:discrete_l_alpha} is proportional to a quasi-free state. 

 First, $\rho_B e^{\alpha.K_B}=\frac{1}{Z}\exp\left(\sum_{i=1}^n (-\beta_i+\alpha_i)K_{B_i}\right)$ is proportional to a quasi-free state; thus its tensor product with $\rho$ is also proportional to a quasi-free state. The unitary $U_\tau$ is a Bogoliubov transform, so the following operator is proportional to a quasi-free state.
\[
U_\tau \left(\rho\otimes\rho_B  e^{-\alpha.K_B} \right)U_\tau^*~.
\]
 Moreover, for any quasi-free states $\sigma, \nu$ the operator $\sigma^{\alpha} \nu \sigma^\alpha$ is proportional to a quasi-free state, and the partial trace of a quasi-free state is quasi-free, so
\[
\Lambda_{\tau, \alpha}^*(\rho)
\] 
is proportional to a quasi-free state. Thus
\[
\left(\Lambda_{\tau,\alpha}^*\right)^{\ent{t/\tau}}(\rho)
\]
is quasi-free for any $t$ and $\tau$, and the set of quasi-free states is closed, hence we can pass to the limit as $\tau\rightarrow 0$ so $^{t\call_\alpha^*}(\rho)$ is proportional to a quasi-free state.
\vspace{0.5cm}

{\bf The eigenvector for the maximal eigenvalue is quasi-free: }

This derives from the following lemma: 

\begin{lem}
Let $\alpha\in \R^n \mapsto L_\alpha\in M_{k,k}(\C)$ be a continuous map, write $\lambda_\alpha$ the dominant eigenvalue of $L_\alpha$, and assume that it is simple for all $\alpha$, of eigenvector $x_\alpha$. Assume that there is a closed cone $\qq$ which is stable by $L_\alpha$ for any $\alpha$, and that $x_0\in \qq$. Then for any $\alpha \in \R^n$ we have $x_\alpha\in \qq$.
\end{lem}
Applying this lemma to $\alpha\mapsto \call_\alpha^*$ and $\qq$ the set of operators proportional to quasi-free states gives that $\rho_\alpha$ is quasi-free for all $\alpha\in \R^n$.
\begin{proof}
We assume $\norm{x_\alpha}=1$. 
Up to choosing the right phase for $x_\alpha$ we can also assume that $\alpha\mapsto (\lambda_\alpha, x_\alpha)$ is continuous since $\alpha\mapsto L_\alpha$ is continuous. 
Thus, the set $E=\{\alpha~|~x_\alpha\in \qq\}$ is closed. Let us show that it is open. 
Consider some $\alpha_0\in E$. 
Write $V_\alpha$ the vector space which is stable by $L_\alpha$ and such that $V_\alpha\oplus (\C x_\alpha)=\C^n$. 
Then  $x_{\alpha_0}=\mu_\alpha x_{\alpha}+z_\alpha$ where $z_\alpha\in V_{\alpha}$ and $\alpha\mapsto \mu_\alpha$ is continuous, and nonzero for any $\alpha$ close enough to $\alpha_0$. Since $\lambda_\alpha$ is the maximal eigenvalue, we have
\[
\lim_{n\rightarrow +\infty} \frac{L_\alpha^n x_{\alpha_0}}{\lambda_\alpha^n}=\mu_\alpha x_{\alpha}
\]
and $L_\alpha^n x_{\alpha_0}\in \qq$ for all $n$ so $x_\alpha\in \qq$ when $\mu_\alpha \neq 0$. 
\end{proof}

{\bf Derivation of the equation for $M$ and $\lambda$:}

Let us first describe the action of $\call_\alpha$ more precisely: for all $i\in \{1, \cdots, n\}$, for any observable $A$, we have 
\begin{align*}
 \Phi_i\left(A ~e^{-\alpha_i K_S}\right)e^{\alpha_i K_S}
&=  F^* \left(A \otimes  \Theta_i M_{B_i} \Theta_i^* \right)\left(e^{-\alpha_i K_S}\otimes \Un_{\yy}\right)F e^{\alpha_i K_S}~~\text{by formula \ref{eq:phi}} \\
&= F^* \left(A\otimes \Theta_i M_{B_i} \Theta_i^*\right) \left(\Un_{\hh_S}\otimes e^{\alpha_i \kappa_S}\right) F~~~\text{by Proposition \ref{prop:commutation_formula_exp_F}} \\
&=F^*\left(A\otimes \Theta_i M_{B_i}\Theta_i e^{\alpha_i\kappa_S}\right) F~.
\end{align*}
By the preservation of energy \ref{eq:db_quadratic_operators} we have 
\[
\Theta_i M_{B_i}\Theta_i e^{\alpha_i\kappa_S}=\Theta_i \Theta_i^* M_{\beta_i} e^{\alpha_i \kappa_S}~.
\]

so
\[
(\call_\alpha-\call)(A)=F^*(A\otimes G) F=\sum_{1\leq k,l\leq 2L_S} [Q_{\alpha, \beta}]^f_{k,l} \gamma_k A \gamma_l~.
\]

Let us now consider a quasi-free state $\rho$ of density matrix $M$, and let us assume that it is an eigenvector of $\call_\alpha^*$
\[
\call_\alpha^*(\rho)=\lambda \rho~.
\]
Then we can express $\lambda$ in terms of $M$, indeed
\begin{align*}
\lambda&=\tr{\call_\alpha^*(\rho)}\\
&=\tr{(\call_\alpha^*-\call^*)(\rho)}\\
&=\tr{\rho F^* (\Un_{\hh_S}\otimes Q_{\alpha, \beta}) F} \\
&=\tr{M^T Q_{\alpha, \beta}}\\
&=\tr{M Q_{\alpha, \beta}^T}~.
\end{align*}
Since $M^T=\Un-M$ we have $\lambda=\frac{1}{2}\tr{M Q_{\alpha, \beta}^T+M^T Q_{\alpha, \beta}}=\frac{1}{2}\tr{M(Q_{\alpha, \beta}^T-Q_{\alpha, \beta})}+\frac{1}{2}\tr{Q_{\alpha, \beta}}$, which is the formula of the theorem.

Let us derive an equation for $M$, $\tr{\call_\alpha^*(\rho) F F^*}=\lambda M$.
By Formula \ref{eq:cov_evolution} we know that
\[
\tra_{\yy}\left(\call^*(\rho)FF^*\right) =\left(-iT_S-\frac{1}{2} \Theta \Theta^*\right) M+M\left(iT_S-\frac{1}{2}\Theta \Theta^*\right)+\Theta M_B \Theta^*~.
\]
Thus we only need to compute $\tr{(\call_\alpha^*-\call^*)(\rho)FF^*}$. For any $i,j\in \set{1, \cdots, 2L_S}$, we have
\begin{align*}
\tr{(\call_\alpha^*-\call^*)(\rho)\gamma_i\gamma_j}&=\sum_{i=1}^n \tr{\rho \left(\call-\call_\alpha\right))(\gamma_i\gamma_j)}\\
&= \sum_{1\leq k,l\leq 2L_S} [Q_{\alpha, \beta}]^f_{k,l} \tr{\rho \gamma_k \gamma_i \gamma_j \gamma_l} \\
&= \sum_{1\leq k,l\leq 2L_S} [Q_{\alpha, \beta}]^f_{k,l} \left( [M]^f_{k,i} [M]^f_{j,l}-[M]^f_{k,j} [M]^f_{i,l}+[M_S]^f_{k,l}[M_S]^f_{i,j} \right)~~~\text{by the Wick formula} \\
&= [M^T Q_{\alpha, \beta} M^T-M Q_{\alpha, \beta}^T M+\tr{Q_{\alpha, \beta}^T M}M]^f_{i,j}~.
\end{align*}
Since $M^T=\Un-M$ we obtain
\begin{align*}
\tr{(\call_\alpha^*-\call^*)(\rho)F F^*}&=Q_{\alpha, \beta}-M Q_{\alpha, \beta}-Q_{\alpha, \beta} M+M\left(Q_{\alpha, \beta}-Q_{\alpha, \beta}^T\right) M+\tr{Q_{\alpha, \beta}^T M} M~.
\end{align*}
Finally, 
\begin{align}
\tr{\call_\alpha^*(\rho)F F^*}=G_\alpha M+M G_\alpha^* +M C_{\alpha, \beta} M+B_{\alpha, \beta}+\tr{Q_{\alpha, \beta}^T M} M ~
\end{align}
where $G_\alpha, B_{\alpha, \beta}, C_{\alpha, \beta}$ are defined in the theorem. Since $\lambda =\tr{Q_{\alpha, \beta}^T M}$, we have
\begin{align}
G_\alpha M+M G_\alpha^* +M C_{\alpha, \beta} M+B_{\alpha, \beta}=0~.
\end{align}
\end{proof}

We now turn to the study of the Riccati equation and the proof of Theorem \ref{theo:e_alpha_fermionic}. The problem is to find the solution of \ref{eq:riccati} for which the eigenvalue is maximal. We will need the following properties of Riccati equations:

\begin{prop}\label{prop:riccati_solution}
Let us consider some matrices $A, B, Q$ on $\C^d$ and consider the equation
\begin{align}\label{eq:riccati_gen}
XA+A^* X + XBX +C=0~.
\end{align}
Assume that $B$ and $C$ are self-adjoint, that $B\geq 0$, and that the Kalman space $K(A, B)$ is equal to $\C^d$ (the Kalman space is defined in Theorem \ref{theo:kalman}; we say that the pair $(A,B)$ is controllable).

Write $Z$ the matrix
\[
Z=\begin{pmatrix} A & B \\ -C & -A^*\end{pmatrix}~.
\] 
 Then for any solution $X$ of the Riccati equation the matrix $A+BX$ has for eigenvalues a subset of cardinal $d$ of the set of eigenvalues of $Z$ (counted with algebraic multiplicities).

Moreover, if Equation \eqref{eq:riccati_gen} admits a self-adjoint solution, then there is a self-adjoint solution $X_{max}$ such that $X\leq X_{max}$ for any self-adjoint solution $X$. The maximal solution $X_{max}$ is the unique self-adjoint solution whose eigenvalues are the eigenvalues of $Z$ with positive real part (counted with algebraic multiplicities). The maximal solution is isolated in the set of self-adjoint solutions.
\end{prop}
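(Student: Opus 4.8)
The plan is to analyse the equation through its associated Hamiltonian matrix $Z$, via the dictionary between Hermitian solutions and $Z$-invariant Lagrangian subspaces. First I would record the structural symmetry: with $\mathcal{J}=\begin{pmatrix} 0 & \Un \\ -\Un & 0\end{pmatrix}$ a direct computation using $B^*=B$ and $C^*=C$ gives $\mathcal{J}^{-1}Z\mathcal{J}=-Z^*$, so $Z$ is similar to $-Z^*$; hence $\sigma(Z)$ is stable under $\lambda\mapsto-\overline{\lambda}$, i.e. symmetric about the imaginary axis, and a dimension count forces the purely imaginary eigenvalues to have even total multiplicity. For the first assertion I would use the graph of a solution: given a solution $X$, set $\mathcal{M}_X=\{(v,Xv)\mid v\in\C^d\}$, and observe that for every $v$,
\[
Z\begin{pmatrix} v\\ Xv\end{pmatrix}=\begin{pmatrix} (A+BX)v\\ X(A+BX)v\end{pmatrix},
\]
where the lower block is rewritten with the Riccati identity $XA+XBX=-A^*X-C$. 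Thus $\mathcal{M}_X$ is $Z$-invariant and, in the coordinate $v$, the restriction $Z|_{\mathcal{M}_X}$ is exactly $A+BX$; the eigenvalues of $A+BX$ (with algebraic multiplicities) therefore form the spectrum of $Z|_{\mathcal{M}_X}$, a sub-multiset of $\sigma(Z)$ of cardinality $d$, which is the first claim.

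Next I would set up the Hermitian-solution dictionary. The skew-Hermitian form $[u,v]=u^*(i\mathcal{J})v$ satisfies $[Zu,v]+[u,Zv]=iu^*(Z^*\mathcal{J}+\mathcal{J}Z)v=0$ by the symmetry above, so $Z$ is skew-adjoint for $[\cdot,\cdot]$. A short computation shows $[(v,Xv),(w,Xw)]=iv^*(X-X^*)w$, so a solution $X$ is self-adjoint if and only if its graph $\mathcal{M}_X$ is $[\cdot,\cdot]$-neutral, hence (being $d$-dimensional) Lagrangian. Conversely any $d$-dimensional $Z$-invariant Lagrangian subspace transversal to $\{0\}\oplus\C^d$ is the graph of a Hermitian solution. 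This reduces the remaining assertions to the geometry of $Z$-invariant Lagrangian graph subspaces.

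To construct $X_{\max}$, I would take $\mathcal{M}_+$ to be the span of the root vectors of $Z$ for the eigenvalues with $\re\lambda>0$, completed (when necessary) by a maximal neutral $Z$-invariant subspace of the purely imaginary part so as to reach dimension $d$; by the spectral symmetry this is Lagrangian, and in the generic case (no purely imaginary eigenvalues) it is simply the positive-real-part spectral subspace, of dimension $d$. The controllability hypothesis enters here to guarantee transversality: since $(A,B)$ controllable is equivalent to the absence of a nonzero $A^*$-invariant subspace inside $\ker B$, it rules out the degeneracy that would make $\mathcal{M}_+$ fail to be a graph, and together with the assumed existence of one Hermitian solution it yields a genuine Hermitian solution $X_{\max}$ with $\sigma(A+BX_{\max})$ equal to the eigenvalues of $Z$ with positive real part (generically) and contained in the closed right half-plane in general. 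Maximality would then come from the comparison identity: writing $\Delta=X-X_{\max}$ for any Hermitian solution $X$ and subtracting the two equations yields
\[
(A+BX_{\max})^*\Delta+\Delta(A+BX_{\max})=-\Delta B\Delta\leq 0,
\]
and since $B\geq0$ the right-hand side is $\leq0$; as $A+BX_{\max}$ has spectrum in the right half-plane, the Lyapunov operator inverts with the sign that forces $\Delta\leq0$, i.e. $X\leq X_{\max}$.

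The main obstacle is the behaviour on the imaginary axis. When $Z$ has purely imaginary eigenvalues, $A+BX_{\max}$ is only anti-stable in the closed right half-plane, the Lyapunov inversion degenerates, and both the transversality of $\mathcal{M}_+$ and the uniqueness/isolatedness become delicate. I would handle this by a perturbation argument: replace $C$ by $C-\eps\Un$ (using controllability to keep the perturbed Hamiltonian's invariant subspace a graph), run the clean strictly-anti-stable case, and pass to the limit $\eps\to0^+$, the controllability ensuring the limiting subspace stays transversal. For isolatedness, a nearby Hermitian solution $X'$ gives, via the same subtraction, the homogeneous Sylvester equation $(A+BX')^*(X'-X_{\max})+(X'-X_{\max})(A+BX_{\max})=0$; in the generic case $\sigma(A+BX_{\max})$ lies strictly in the open right half-plane while $\sigma(-(A+BX'))$ is near the open left half-plane, so the two spectra are disjoint and $X'=X_{\max}$, while in the degenerate case isolatedness follows from the fact that the positive-real-part spectral subspace is an isolated point among $Z$-invariant Lagrangian graph subspaces. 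Collecting these steps establishes existence, the spectral characterization, maximality, and isolatedness of $X_{\max}$.
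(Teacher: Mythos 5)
The paper offers no proof of this proposition at all: it is quoted from Lancaster and Rodman's monograph on algebraic Riccati equations (with the isolatedness attributed to their Theorem 7.7.2), so there is no in-paper argument to compare yours against. Your reconstruction follows exactly the route of that reference (Hamiltonian matrix $Z$, invariant Lagrangian graph subspaces, Lyapunov comparison), and your treatment of the first assertion is complete and correct: the graph-invariance computation, the identification of $Z$ restricted to the graph with $A+BX$, and the divisibility of characteristic polynomials need nothing further, and do not even use controllability. The symmetry $\mathcal{J}^{-1}Z\mathcal{J}=-Z^*$ and the neutrality criterion $[(v,Xv),(w,Xw)]=iv^*(X-X^*)w$ are also correctly verified.

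For the second assertion, however, your outline has genuine gaps precisely at the points where the textbook proof is hard. First, the claim that controllability of $(A,B)$ plus the existence of one Hermitian solution forces the candidate invariant subspace to be transversal to $\{0\}\oplus\C^d$, hence a graph, is the core of the theorem; the Hautus characterization of controllability you invoke does not by itself deliver this, and no argument is given. Second, the perturbation $C\mapsto C-\eps\Un$ is unjustified: you must show the perturbed equation still admits a Hermitian solution (the standard theory runs through the inequality version $XA+A^*X+XBX+C\leq 0$ and a comparison theorem) and that the maximal solutions are uniformly bounded and converge as $\eps\to 0^+$. Third, your maximality argument via $(A+BX_{\max})^*\Delta+\Delta(A+BX_{\max})=-\Delta B\Delta\leq 0$ inverts the Lyapunov operator with a definite sign only when $\sigma(A+BX_{\max})$ lies in the \emph{open} right half-plane, i.e.\ exactly when the degenerate case you defer does not occur; with purely imaginary eigenvalues present the operator is singular and the conclusion $\Delta\leq 0$ needs a separate argument, as do uniqueness and isolatedness, which you leave to an unargued appeal to the geometry of Lagrangian subspaces. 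These are each substantive theorems in Lancaster--Rodman, Chapter 7; since the paper itself simply cites that source, the honest options are to do the same for the steps you cannot complete, or to restrict to the case where $Z$ has no purely imaginary eigenvalues, where your sketch does close up.
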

	
This is extracted from results scattered in \cite{Lancaster_Rodman95}. The fact that the maximal eigenvalue is isolated comes from Theorem 7.7.2. 
\vspace{0.5cm}

To convert Equation \eqref{eq:riccati} to an equation satisfying the hypothesis of this proposition, we note that $\rho_\alpha>0$  (since $\call_\alpha^*$ is positivity improving) so $M$ is of the form $(1+\exp(-T))^{-1}$ for some operator $T$.  We define
\[
X=M^{-1}-\Un~.
\]
We have $X>0$. Moreover,
\begin{align}\label{eq:riccatiX}
X A_\beta+A_\beta^* X +X B_{\alpha, \beta} X -B_{-\alpha, -\beta}=0~.
\end{align}
This formula is obtained by making the product of Equation \ref{eq:riccati} with $M^{-1}$ on the left and the right, and using the relations
\begin{align*}
A_\beta&=G_{\alpha, \beta}+B_{\alpha, \beta}\\ 
-B_{-\alpha, -\beta}&=C_{\alpha, \beta}+B_{\alpha, \beta}+G_{\alpha, \beta}+G_{\alpha, \beta}^*~.
\end{align*}

The equation on $X$ satisfies the hypothesis of Proposition \ref{prop:riccati_solution}. Indeed, the operator $B_{\alpha, \beta} =\sum_{i=1}^n e^{\alpha_i K_S}M_{\beta_i} \Theta_i \Theta_i^*$ is positive; since the semigroup is positivity improving the Kalman space $K(T_S, \Theta)$ is equal to $\yy$. But $A_\beta=-iT_S+R$ where $\ran(R)=\ran(\Theta)$, and $\ran(B_{\alpha, \beta})=\ran(\Theta)$, so $K(A_\beta, B_{\alpha, \beta})=\yy$.

Let us express $e(\alpha)$ in terms of $A_\beta +XB_{\alpha, \beta}$. We have
\[
-C_{\alpha, \beta}M=-M^{-1} M C_{\alpha, \beta} M=M^{-1}\left(G_{\alpha, \beta} M + M G_{\alpha, \beta}^* +B_{\alpha, \beta}\right)
\]
thus
\begin{align*}
e(\alpha)&=\frac{1}{2}\left(-\tr{C_{\alpha, \beta} M}+\tr{Q_{\alpha, \beta}}  \right) \\
&=\frac{1}{2}\tr{M ^{-1}G_{\alpha, \beta}M+G_{\alpha, \beta}^*+Q_{\alpha, \beta} +M^{-1} B_{\alpha, \beta}}\\
&=\frac{1}{2}\tr{G_{\alpha, \beta}+G^*_{\alpha, \beta}+Q_{\alpha, \beta}+B_{\alpha, \beta}+X B_{\alpha, \beta}} \\
&=\frac{1}{2} \tr{A_\beta+X B_{\alpha, \beta}}-\frac{1}{4}\sum_{i=1}^n \tr{\Theta_i \Theta_i^*}~.
\end{align*}
The last equality is due to the fact that $G_{\alpha, \beta}+G^*_{\alpha, \beta}+Q_{\alpha, \beta}+B_{\alpha, \beta}=A_\beta+i T_S-\frac{1}{2}\sum_{i=1}^n \Theta_i \Theta_i^*$ and $\tr{T_S}=0$.

Let us show that $X$ is the maximal solution $X_{max}$ of the Riccati equation \eqref{eq:riccatiX} (then the eigenvalues of $A_\beta+X_{max} B_{\alpha, \beta}$ are the eigenvalues of $Z$ with positive real parts, and the theorem is proved). First, we have $\tr{X B_{\alpha, \beta}}\leq \tr{X_{max} B_{\alpha, \beta}}$ since $X\leq X_{max}$. Thus, it is sufficient to show that $M_{max}=(\Un+X_{max})^{-1}$ is the covariance matrix of a state. This is equivalent to
\begin{align*}
X_{max}>0~\\
 X_{max}^T=X_{max}^{-1}~.
\end{align*}

We have $X_{max}\geq X >0$ so the inequality is satisfied. Moreover, we have $B_{\alpha, \beta}^T=B_{-\alpha, -\beta}$ and $A_\beta^T=-A_\beta^*$ so the map $X\mapsto -(X^T)^{-1}$ preserves the set of solutions of Equation \eqref{eq:riccatiX}. Since this map is increasing for the matrix order it sends a maximal solution on a maximal solution and $-\left(X_{max}^T\right)^{-1}=X_{max}$.  Thus $(\Un+X_{max})^{-1}$ is a covariance matrix, and it corresponds to the dominant eigenvector of $\call_\alpha^*$. This proves Theorem \ref{theo:e_alpha_fermionic}.

\subsection{The example of the fermionic chain}

In this subsection we describe the rate function of the large deviations on the fermionic chain of Paragraph \ref{subsub:gauge_invariant}, which we compute numerically for different values of the length $L$. The rate function $I$ is a function of two variables, but as a consequence of the following lemma we can consider only one parameter. 

\begin{lem}
If there are two baths, for any $\alpha_1, \alpha_2$ we have
\[
e(\alpha_1, \alpha_2)=e(\alpha_1-\alpha_2, 0)~.
\]
Writing $\tilde{e}(\alpha)=e(\alpha, 0)$ for any $\alpha\in \R$, we have
\[
I(\zeta_1, \zeta_2)=\left\{\begin{array}{cc} 
-\underset{\alpha \in \R}{\inf} \left(\scal{\alpha, \zeta_1}-\tilde{e}(\alpha)\right)& \text{ if $\zeta_1+\zeta_2=0$} \\
+\infty & \text{ if $\zeta_1+\zeta_2\neq 0$}
\end{array} \right.
\]
\end{lem}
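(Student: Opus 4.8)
The plan is to obtain both assertions as direct consequences of the structural properties of $e(\alpha)$ and of the Legendre representation \eqref{eq:legendre} already recorded in the Jak\v{s}i\'c--Pillet--Westrich results above; no new probabilistic input is needed, only a linear change of variables.

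First I would establish the reduction $e(\alpha_1,\alpha_2)=e(\alpha_1-\alpha_2,0)$. This follows at once from the translation invariance $e(\alpha+\lambda 1_n)=e(\alpha)$ (the second item of the Jak\v{s}i\'c--Pillet--Westrich proposition on $e$): specializing to $n=2$, where $1_2=(1,1)$, and choosing $\lambda=-\alpha_2$ gives $e(\alpha_1,\alpha_2)=e(\alpha_1-\alpha_2,0)=\tilde e(\alpha_1-\alpha_2)$, which is exactly the claimed identity and, in particular, exhibits $e$ as a function of the single difference $\alpha_1-\alpha_2$.

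Next I would compute the rate function from \eqref{eq:legendre}, namely $I(\zeta_1,\zeta_2)=\sup_{(\alpha_1,\alpha_2)\in\R^2}\big(\alpha_1\zeta_1+\alpha_2\zeta_2-e(\alpha_1,\alpha_2)\big)$, inserting the identity just proved. Performing the invertible linear change of variables $u=\alpha_1-\alpha_2$, $v=\alpha_2$ (so that the supremum over $\R^2$ is unchanged), the objective becomes $u\zeta_1+v(\zeta_1+\zeta_2)-\tilde e(u)$, in which $u$ and $v$ decouple. Taking the supremum over $v$ first yields $+\infty$ whenever $\zeta_1+\zeta_2\neq 0$, in agreement with \eqref{eq:I_sum}; when $\zeta_1+\zeta_2=0$ the $v$-term disappears and one is left with the one-dimensional Legendre transform $\sup_{u\in\R}\big(u\zeta_1-\tilde e(u)\big)=-\inf_{\alpha\in\R}\big(\tilde e(\alpha)-\scal{\alpha,\zeta_1}\big)$, which is the asserted value of $I(\zeta_1,\zeta_2)$.

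There is no genuinely hard step here: the whole argument is the observation that $e$ factors through $\alpha_1-\alpha_2$. The only points deserving care are that $(\alpha_1,\alpha_2)\mapsto(u,v)$ is a bijection of $\R^2$, so that passing to the supremum in the new variables is legitimate, and that the divergent branch produced here (the unbounded term $v(\zeta_1+\zeta_2)$) is exactly the one already guaranteed by the general fact \eqref{eq:I_sum} that $I$ is infinite off the hyperplane $\zeta_1+\zeta_2=0$, so the two descriptions are mutually consistent.
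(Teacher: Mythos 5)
Your argument is correct and is exactly the intended one: the paper dispatches this lemma in one line (``a straightforward consequence of \eqref{eq:I_sum}''), and what you have written is the fully spelled-out version — translation invariance $e(\alpha+\lambda 1_2)=e(\alpha)$ with $\lambda=-\alpha_2$ for the first identity, then the change of variables $u=\alpha_1-\alpha_2$, $v=\alpha_2$ in the Legendre representation \eqref{eq:legendre}, with the $v$-supremum producing the $+\infty$ branch consistent with \eqref{eq:I_sum}. One remark: your final expression $-\inf_{\alpha\in\R}\bigl(\tilde e(\alpha)-\scal{\alpha,\zeta_1}\bigr)=\sup_{\alpha\in\R}\bigl(\scal{\alpha,\zeta_1}-\tilde e(\alpha)\bigr)$ is the correct one, whereas the lemma as printed has the two terms in the opposite order inside the infimum, i.e.\ $-\inf_{\alpha}\bigl(\scal{\alpha,\zeta_1}-\tilde e(\alpha)\bigr)$, which is a different quantity; this appears to be a sign typo in the statement rather than a flaw in your derivation.
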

This lemma is a straightforward consequence of \eqref{eq:I_sum}.

We computed $I(\zeta, -\zeta)$ for the fermionic chain with $\theta_0=\theta_{L+1}=0$, and temperatures $\beta_0=1, \beta_{L+1}=0$, for chains of lengths $L$ from $2$ to $5$ (see Figure \ref{fig:rate_1_0})

\begin{figure}[!h]
\includegraphics[scale=0.3]{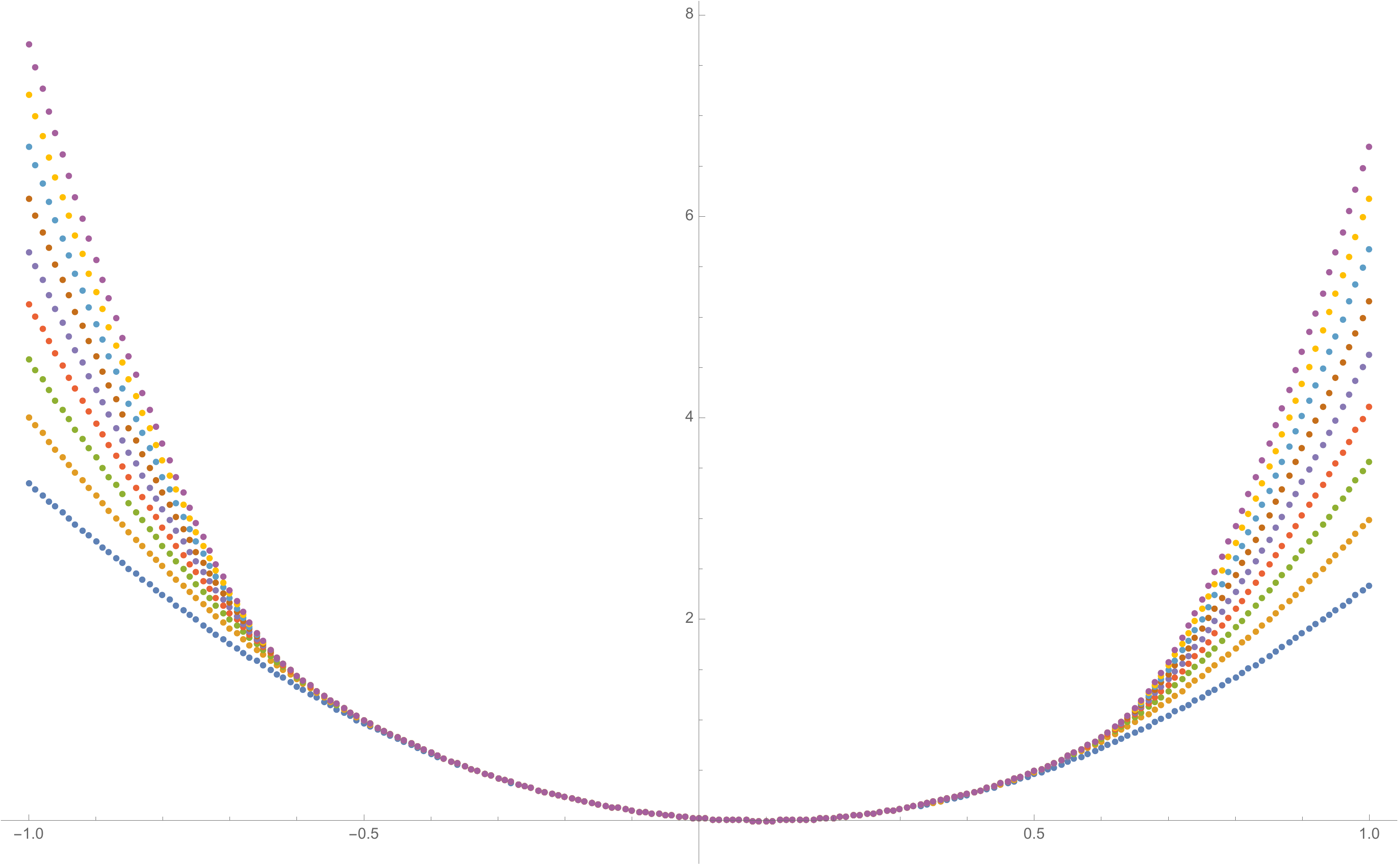}
\caption{Rate functional $I(\zeta, -\zeta)$ for $\beta_0=1, \beta_{L+1}=0$ and for $L=2$ to $L=10$. The largest function corresponds to $L=2$ and the smallest corresponds to $L=10$.}\label{fig:rate_1_0}
\end{figure}

As we can see, the rate functions have the same zero, which corresponds to the flux given in formula \eqref{eq:flux_chain}, that is 
\[
J_1=\frac{4}{10}(n_0-n_1)=\frac{4}{10}\left(\frac{1}{1+e^{-1}}-\frac{1}{1+e^0}\right)\simeq 0.092~.
\]

The rate functions are very similar around this zero, and progressively separate for large values of $\abs{\zeta-J_1}$. The rate function is smaller for large values of $L$, which means that the fluctuations of the energy fluxes around their mean values are larger when the length of the chain is larger. This result is interesting: the mean energy flux is completely independent of the length of the chain, but the large deviations are sensible to this length. 
\vspace{0.5cm}

In figure 2 we show the rate function in the case $\beta_0=10$ and $\beta_{L+1}=0$. Taking a high value of $\beta_0-\beta_{L+1}$ makes the asymmetry of $I$ under the change $\zeta\mapsto -\zeta$ very visible. 

\begin{figure}[!h]
\includegraphics[scale=0.3]{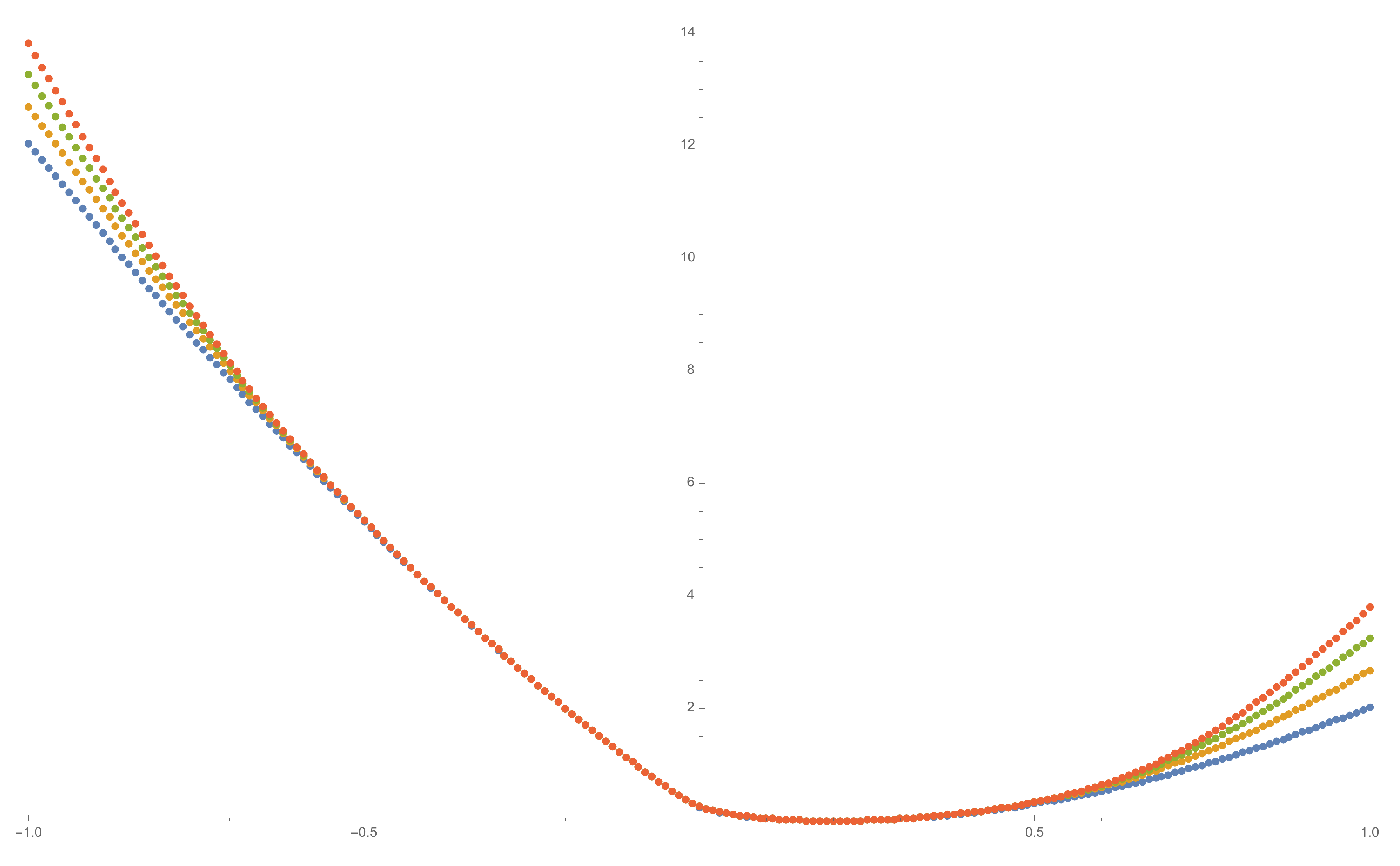}
\caption{Rate functional $I(\zeta, -\zeta)$ for $\beta_0=10, \beta_{L+1}=0$ and for $L=2$ to $L=5$. The largest function corresponds to $L=2$ and the smallest corresponds to $L=5$.}\label{fig:rate_10_0}
\end{figure}

\bibliography{/home/andreys/Documents/in_progress/bibli_quantum.bib}

\end{document}